\renewcommand{\mod}[1]{\widetilde{#1}} % generic way to denote a "modified version" of something
\newcommand{\emphdef}[1]{\emph{#1}} % for typesetting formal terms where they are defined
\newcommand{\refif}{``\textit{if}''} % use to refer to "directions" of an if and only if statement
\newcommand{\refonlyif}{``\textit{only if}''}
\newcommand{\lexabbr}{\mathsf{lex}} % used as general abbreviaten for "lexicographic" is math symbols
\newcommand{\aligncomment}[1]{\shortintertext{\flushright (#1)}} % Use within align environment for right-aligned text comments
\newcommand{\qee}{\hfill$\triangle$}
\newcommand{\alltargets}{\mathfrak{S}}
\newcommand{\algabsorbing}{\mathtt{SolveAbsorbing}}
\newcommand{\algsingleobj}{\mathtt{SolveSingleObj}}
\newcommand{\alggeneral}{\mathtt{SolveLex}}
\newcommand{\gfp}{\mathsf{gfp}\ }
\newcommand{\leqlex}{\leq_{\mathsf{lex}}}
\newcommand{\lesslex}{<_{\mathsf{lex}}}
\newcommand{\grlex}{>_{\mathsf{lex}}}
\newcommand{\geqlex}{\geq_{\mathsf{lex}}}
\newcommand{\prob}{\mathbb{P}}
\newcommand{\dist}{\mathcal{D}}
\newcommand{\R}{\mathbb{R}}
\newcommand{\np}{\mathsf{NP}}
\newcommand{\conp}{\mathsf{coNP}}
\newcommand{\pspace}{\mathsf{PSPACE}}
\newcommand{\nexptime}{\mathsf{NEXPTIME}}
\newcommand{\conexptime}{\mathsf{coNEXPTIME}}
\newcommand{\game}{\mathcal{G}}
\newcommand{\Act}{\mathsf{Act}}
\newcommand{\mc}{\mathcal{M}}
\newcommand{\sinks}{\mathsf{Sinks}}
\renewcommand{\path}{\pi} % finite path
\newcommand{\infpath}{\pi} % infinite path
\newcommand{\actlabels}{L}
\newcommand{\plmaxsymbol}{\square}
\newcommand{\plminsymbol}{\lozenge}
\newcommand{\plmax}{\mathsf{Max}}
\newcommand{\plmin}{\mathsf{Min}}
\newcommand{\Splmax}{S_{\plmaxsymbol}}
\newcommand{\Splmin}{S_{\plminsymbol}}
\newcommand{\maxstrat}{\sigma}
\newcommand{\minstrat}{\tau}
\newcommand{\maxstrats}{\Sigma_{\plmax}}
\newcommand{\minstrats}{\Sigma_{\plmin}}
\newcommand{\maxstratsmd}{\Sigma_{\plmax}^{\mathsf{MD}}}
\newcommand{\minstratsmd}{\Sigma_{\plmin}^{\mathsf{MD}}}
\newcommand{\bothstrats}{{\maxstrat,\minstrat}}
\newcommand{\val}{v}
\newcommand{\lexval}{\mathbf{v}^{\lexabbr}}
\newcommand{\valzeroset}{\mathsf{Zero}}
\newcommand{\reach}[1]{\mathsf{Reach}\left(#1\right)}
\newcommand{\safe}[1]{\mathsf{Safe}\left(#1\right)}
\newcommand{\until}[2]{#1\ \mathsf{U}\ #2}
\newcommand{\type}{\mathsf{type}}
\newcommand{\quanfun}{q} % function in the quantified reach/safety props
\newcommand{\indicesreach}{R} % set of indices of objectives which are reach in a \objvector
\newcommand{\objvector}{\vec{\Omega}}
\newcommand{\objvec}{\objvector}
\newcommand{\qobjvec}{\mathsf{q}\objvec}
\newcommand{\obj}{\Omega}
\newcommand{\safeop}{\mathcal{S}}
\newcommand{\reachop}{\mathcal{R}}
\tikzset{
diagonal fill/.style 2 args={fill=#2, path picture={
\fill[#1, sharp corners] (path picture bounding box.south west) -|
                         (path picture bounding box.north east) -- cycle;}},
reversed diagonal fill/.style 2 args={fill=#2, path picture={
\fill[#1, sharp corners] (path picture bounding box.north west) |- 
                         (path picture bounding box.south east) -- cycle;}}
}
\tikzstyle{max}=[rectangle,draw=black,thick,inner sep=1.25mm,minimum size=5mm]
\tikzstyle{target}=[accepting] % from automata library
\tikzstyle{min}=[diamond,draw=black,thick,inner sep=0.5mm,minimum size=7mm]
\tikzstyle{prob}=[circle,draw=black,thick,inner sep=0.75mm] % don't put text in prob nodes
\tikzstyle{mcstate}=[circle,draw=black,thick,inner sep=1mm]
\tikzstyle{state}=[circle,draw=black,thick,inner sep=1mm,minimum size=8mm] % for automata
\tikzstyle{trans}=[-{Latex[scale=1.0]}]
\tikzstyle{col1}=[fill=white]
\tikzstyle{col2}=[fill=white]
\tikzstyle{col3}=[fill=white]
\tikzstyle{both cols}=[fill=white]
\tikzstyle{smaller}=[scale=0.75]
\tikzset{every loop/.style={min distance=5mm,looseness=2.5}}
\tikzstyle{pcurve}=[very thick,line cap= round]
\newcommand{\ifarxivelse}[2]{\iftoggle{notarxiv}{#1}{#2}}
\begin{document}
\title{Stochastic Games with Lexicographic Reachability-Safety Objectives
\thanks{This research was funded in part by the TUM IGSSE Grant 10.06 (PARSEC), the German Research Foundation (DFG) project KR 4890/2-1 “Statistical Unbounded Verification”, the ERC CoG 863818 (ForM-SMArt), the Vienna Science and Technology Fund (WWTF) Project ICT15- 003, and the RTG 2236 UnRAVeL.}}
%
%\titlerunning{Abbreviated paper title}
% If the paper title is too long for the running head, you can set
% an abbreviated paper title here
%
\author{
Krishnendu Chatterjee\inst{1}\ifarxivelse{\orcidID{0000-0002-4561-241X}}{} \and
Joost-Pieter Katoen\inst{3}\ifarxivelse{\orcidID{0000-0002-6143-1926}}{} \and
Maximilian Weininger\inst{2}\ifarxivelse{\orcidID{0000-0002-0163-2152}}{} \and
Tobias Winkler\textsuperscript{(\Letter),} \inst{3}\ifarxivelse{\orcidID{0000-0003-1084-6408}}{}
}

\authorrunning{K. Chatterjee et al.}
% First names are abbreviated in the running head.
% If there are more than two authors, 'et al.' is used.
%

\institute{
IST Austria, Klosterneuburg, Austria \and
Technical University of Munich, Munich, Germany \and
RWTH Aachen University, Aachen, Germany \\
\email{tobias.winkler@cs.rwth-aachen.de}
}

\maketitle              % typeset the header of the contribution
\begin{abstract}
We study turn-based stochastic zero-sum games with lexicographic preferences over reachability and safety objectives. 
Stochastic games are standard models in control, verification, and synthesis of stochastic reactive systems that exhibit both randomness as well as angelic and demonic non-determinism.  
Lexicographic order allows to consider multiple objectives with a strict preference order over the satisfaction of the objectives.
To the best of our knowledge, stochastic games with lexicographic objectives have not been studied before. 
We establish determinacy of such games and present strategy and computational complexity results.
For strategy complexity, we show that lexicographically optimal strategies exist that are deterministic and 
memory is only required to remember the already satisfied and violated objectives.
For a constant number of objectives, we show that the relevant decision problem is in $\np \cap \conp$, matching the current known bound for single objectives; and 
in general the decision problem is $\pspace$-hard and can be solved in $\nexptime \cap \conexptime$.
We present an algorithm that computes the lexicographically optimal strategies via a reduction to computation of optimal strategies in a sequence of single-objectives games.
We have implemented our algorithm and report experimental results on various case studies.
%\keywords{First keyword  \and Second keyword \and Another keyword.}
\end{abstract}

\section{Introduction}

{\em Simple stochastic games (SGs)}~\cite{Con92} are zero-sum turn-based stochastic games played over 
a finite state space by two adversarial players, the Maximizer and Minimizer, 
along with randomness in the transition function.
These games allow the interaction of angelic and demonic non-determinism as well as stochastic 
uncertainty. 
They generalize classical models such as Markov decision processes (MDPs)~\cite{Puterman} 
which have only one player and stochastic uncertainty. 
An objective specifies a desired set of trajectories of the game, and the goal of the Maximizer is 
to maximize the probability of satisfying the objective against all choices of the Minimizer.
The basic decision problem is to determine whether the Maximizer can ensure satisfaction of the objective
with a given probability threshold.
This problem is among the rare and intriguing combinatorial problems that are $\np \cap \conp$,  
and whether it belongs to P is a major and long-standing open problem.
Besides the theoretical interest, SGs are a standard model in control and verification of stochastic 
reactive systems~\cite{Puterman,BK08,FV97,CH12},
as well as they provide robust versions of MDPs when precise transition probabilities are not known~\cite{DBLP:conf/fossacs/ChatterjeeSH08,WMK19}.

The multi-objective optimization problem is relevant in the analysis of systems with multiple, potentially conflicting 
goals, and a trade-off must be considered for the objectives. 
While the multi-objective optimization has been extensively studied for MDPs with various classes of objectives~\cite{Puterman,Altman,tacas20}, 
the problem is notoriously hard for SGs. 
Even for multiple reachability objectives, such games are 
not determined~\cite{DBLP:conf/mfcs/ChenFKSW13} and their decidability is still open.

This work considers SGs with multiple reachability and safety objectives with lexicographic preference order 
over the objectives. 
That is, we consider SGs with several objectives where each objective is either reachability or safety, and
there is a total preference order over the objectives. 
The motivation to study such lexicographic objectives is twofold.
First, they provide an important special case of general multiple objectives.
Second, lexicographic objectives are useful in many scenarios. 
For example, (i)~an autonomus vehicle might have a primary objective to avoid clashes and a secondary objective
to optimize performance; and (b)~a robot saving lives during fire in a building might have a primary objective
to save as many lives as possible, and a secondary objective to minimize energy consumption.
Thus studying reactive systems with lexicographic objectives is a very relevant problem which has been considered
in many different contexts~\cite{fishburn1974exceptional,blume1991lexicographic}.
In particular non-stochastic games with lexicographic objectives~\cite{BCHJ09,CJLS17} and MDPs with 
lexicographic objectives~\cite{WZM15} have been considered, but to the best of our knowledge
SGs with lexicographic objectives have not been studied.

In this work we present several contributions for SGs with lexicographic reachability and safety objectives.
The main contributions are as follows.
\begin{itemize}
	\item {\em Determinacy.} In contrast to SGs with multiple objectives that are not determined, we establish 
	determinacy of SGs with lexicographic combination of reachability and safety objectives. 
	
	\item {\em Computational complexity.} 
	For the associated decision problem we establish the following:
	(a)~if the number of objectives is constant, then the decision problem lies in $\np \cap \conp$, matching the
	current known bound for SGs with a single objective;
	(b)~in general the decision problem is $\pspace$-hard and can be solved in $\nexptime \cap \conexptime$.
	
	\item {\em Strategy complexity.} 
	We show that lexicographically optimal strategies exist that are deterministic but require finite memory.
	We also show that memory is only needed in order to remember the already satisfied and violated objectives.
	
	\item {\em Algorithm.} 
	We present an algorithm that computes the unique lexicographic value and the witness lexicographically optimal strategies 
	via a reduction to computation of optimal strategies in a sequence of single-objectives games. 
	
	\item {\em Experimental results.} We have implemented the algorithm and present experimental results on several case studies.
	
	%\item {\bf KRISH: LET'S OMIT THIS PART FROM CONFERENCE SUBMISSION: MENTIONS A DRAWBACK WHICH NON-EXPERT WILL JUMP TO.} 
	%We introduce a \emph{coefficient of robustness} (name TBD) to compensate for the main drawback of lexicographic preferences, their lack of robustness.
\end{itemize}

\smallskip\noindent{\em Technical contribution.}
The key idea is that, given the lexicographic order of the objectives, we can consider them sequentially.
After every objective, we remove all actions that are not optimal, thereby forcing all following computation to consider only locally optimal actions.
The main complication is that local optimality of actions does not imply global optimality when interleaving reachability and safety, as the latter objective can use locally optimal actions to stay in the safe region without reaching the more important target.
We introduce quantified reachability objectives as a means to solve this problem.

\paragraph{Related work} We present related works on: 
(a)~MDPs with multiple objectives; (b)~SGs with multiple objectives; (c)~lexicographic objectives 
in related models; and (d)~existing tool support. 

(a) MDPs with multiple objectives have been widely studied over a long time~\cite{Puterman,Altman}.
In the context of verifying MDPs with multiple objectives, both qualitative objectives such as reachability and LTL~\cite{DBLP:journals/lmcs/EtessamiKVY08}, 
as well as quantitative objectives, such as mean payoff~\cite{DBLP:conf/fsttcs/Chatterjee07a,DBLP:journals/corr/abs-1104-3489}, discounted sum~\cite{DBLP:conf/lpar/ChatterjeeFW13}, or total reward~\cite{DBLP:conf/tacas/ForejtKNPQ11} have been considered.
Besides multiple objectives with expectation criterion, other criteria have also been considered,
such as, combination with variance~\cite{DBLP:conf/lics/BrazdilCFK13}, or multiple percentile (threshold) queries~\cite{FKR95,DBLP:journals/corr/abs-1104-3489,DBLP:journals/fmsd/RandourRS17,DBLP:journals/lmcs/ChatterjeeKK17}.
Practical applications of MDPs with multiple objectives are described in~\cite{DBLP:conf/csl/BaierDK14,DBLP:conf/fase/BaierDKDKMW14,roijers2017multi}.

(b) More recently, SGs with multiple objectives have been considered, but the results are more limited \cite{DBLP:journals/ejcon/SvorenovaK16}. 
Multiple mean-payoff objectives were first examined in~\cite{DBLP:conf/tacas/BassetKTW15} and the qualitative problems are coNP-complete~\cite{DBLP:conf/lics/Chatterjee016}.
Some special classes of SGs (namely stopping SGs) have been solved for total-reward objectives~\cite{DBLP:conf/mfcs/ChenFKSW13} 
and applied to autonomous driving \cite{DBLP:conf/qest/ChenKSW13}.
However, even for the most basic question of solving SGs with multiple reachability objectives, decidability remains open. 

(c) The study of lexicographic objectives has been considered in many different contexts~\cite{fishburn1974exceptional,blume1991lexicographic}.
Non-stochastic games with lexicographic mean-payoff objectives and parity conditions have been studied in~\cite{BCHJ09} for the synthesis of 
reactive systems with performance guarantees.
Non-stochastic games with multiple $\omega$-regular objectives equipped with a monotonic preorder, which subsumes lexicographic order, have been
studied in~\cite{BHR18a}.
Moreover, the beyond worst-case analysis problems studied in~\cite{DBLP:journals/iandc/BruyereFRR17} also considers primary and secondary 
objectives, which has a lexicographic flavor. 
MDPs with lexicographic discounted-sum objectives have been studied in~\cite{WZM15}, and have been extended with partial-observability in~\cite{WZ15}.
However, SGs with lexicographic reachability and safety objectives have not been considered so far.

(d) PRISM-Games \cite{DBLP:journals/sttt/KwiatkowskaPW18} provides tool support for several multi-player multi- objective settings. 
%%the single-dimensi\-onal-focused tools GAVS+ \cite{DBLP:conf/tacas/ChengKLB11} and GIST \cite{DBLP:conf/cav/ChatterjeeHJR10} are not maintained any more.
MultiGain \cite{DBLP:conf/tacas/BrazdilCFK15} is limited to generalized mean-payoff MDPs. 
\textsc{Storm} \cite{DBLP:conf/cav/DehnertJK017} can, among numerous single-objective problems, solve Markov automata with multiple timed reachability or expected cost objectives \cite{DBLP:conf/cav/QuatmannJK17}, multi-cost bounded reachability MDPs \cite{DBLP:conf/tacas/HartmannsJKQ18}, and it can provide simple strategies for multiple expected reward objectives in MDPs~\cite{tacas20}.

\paragraph{Structure of this paper.}
After recalling preliminaries and defining the problem in Section \ref{sec:preliminaries}, 
we first consider games where all target sets are absorbing in Section \ref{sec:absorbing}. 
Then, in Section \ref{sec:general} we extend our insights to general games, yielding the full algorithm and the theoretical results. 
%Section \ref{sec:robust} discusses the coefficient of robustness.
Finally, Section \ref{sec:exp} describes the implementation and experimental evaluation. 
Section \ref{sec:conc} concludes.

\section{Preliminaries}
\label{sec:preliminaries}

\subsubsection{Notation.}
A probability distribution on a finite set $A$ is a function $f: A \to [0,1]$ such that $\sum_{x\in A}f(x) =1$. We denote the set of all probability distributions on $A$ by $\dist(A)$. 
Vector-like objects $\vec{x}$ are denoted in a bold font and we use the notation $\vec{x}_i$ for the $i$-th component of $\vec{x}$. 
We use $\vec{x}_{<n}$ as a shorthand for $(\vec{x}_1,\ldots,\vec{x}_{n-1}).$

\subsection{Basic Definitions}                                        

\subsubsection{Probabilistic Models.}
In this paper, we consider \emphdef{(simple) stochastic games}~\cite{Con92}, which are defined as follows. Let $\actlabels = \{a,b,\ldots\}$ be a finite set of actions labels.
\begin{definition}[SG]
	A \emphdef{stochastic game} (SG) is a tuple $\game = (\Splmax,\Splmin,\Act, P)$ with $S := \Splmax \uplus \Splmin \neq \emptyset$ a finite set of states, $\Act: S \rightarrow 2^\actlabels \setminus \{\emptyset\}$ defines finitely many actions available at every state, and $P: S \times \actlabels \rightarrow \dist(S)$ is the transition probability function. $P(s,a)$ is undefined if $a \notin \Act(s)$.
\end{definition}
We abbreviate $P(s,a)(s')$ to $P(s,a,s')$. We refer to the two players of the game as $\plmax$ and $\plmin$ and the sets $\Splmax$ and $\Splmin$ are the $\plmax$- and $\plmin$-states, respectively. As the game is \emph{turn based}, these sets partition the state space $S$ such that in each state it is either $\plmax$'s or $\plmin$'s turn.
The intuitive semantics of an SG is as follows: In every turn, the corresponding player picks one of the finitely many available actions $a \in \Act(s)$ in the current state $s$. The game then transitions to the next state according to the probability distribution $P(s,a)$. The winning conditions are not part of the game itself and need to be further specified.
\subsubsection{Sinks, Markov Decision Processes and Markov Chains.}
A state $s \in S$ is called \emphdef{absorbing} (or sink) if $P(s,a,s) = 1$ for all $a \in \Act(s)$ and $\sinks(\game)$ denotes the set of all absorbing states of SG $\game$.
A \emphdef{Markov Decision Process} (MDP) is an SG where either $\Splmin = \emptyset$ or $\Splmax = \emptyset$, i.e. a one-player game.
A \emphdef{Markov Chain} (MC) is an SG where $|\Act(s)|=1$ for all $s \in S$. 
For technical reasons, we allow countably infinite state spaces $S$ for both MDPs and MCs.

\subsubsection{Strategies.}
We define the formal semantics of games by means of \emphdef{paths} and \emphdef{strategies}.
An \emph{infinite path} $\path$ is an infinite sequence $\path = s_0 a_0 s_1 a_1 \dots \in (S \times \actlabels)^\omega$, such that for every $i \in \mathbb{N}$, $a_i\in \Act(s_i)$ and $s_{i+1} \in \{s' \mid P(s_i,a_i,s')>0\}$.
\emph{Finite path}s are defined analogously as elements of $(S \times \actlabels)^\ast \times S$.
Note that when considering MCs, every state just has a single action, so an infinite path can be identified with an element of $S^\omega$.

A strategy of player $\plmax$ is a function $\maxstrat \colon (S \times \actlabels)^* \times \Splmax \rightarrow \dist(\actlabels)$ where $\maxstrat(\path s)(s')>0$ only if $s \in \Act(s)$. 
It is \emphdef{memoryless} if $\maxstrat(\path s) = \maxstrat(\path' s)$ for all $\path, \path' \in (S \times \actlabels)^*$.
More generally, $\maxstrat$ has memory of class-size at most $m$ if the set $(S \times \actlabels)^*$ can be partitioned in $m$ classes $M_1,\ldots,M_m \subseteq (S \times \actlabels)^*$ such that $\maxstrat(\path s) = \maxstrat(\path' s)$ for all $1 \leq i \leq m$, $\path, \path' \in M_i$ and $s \in \Splmax$. A memory of class-size $m$ can be represented with $\lceil \log(m) \rceil$ bits.

A strategy is \emphdef{deterministic} if $\maxstrat(\path s)$ is Dirac for all $\path s$. Strategies that are both memoryless and deterministic are called \emphdef{MD} and can be identified as functions $\maxstrat \colon \Splmax \rightarrow \actlabels$. Notice that there are at most $|\actlabels|^{\Splmax}$ different MD strategies, that is, exponentially many in $\Splmax$; in general, there can be uncountably many strategies.

Strategies $\minstrat$ of player $\plmin$ are defined analogously, with $\Splmax$ replaced by $\Splmin$. The set of all strategies of player $\plmax$ is denoted with $\maxstrats$, the set of all MD strategies with $\maxstratsmd$, and similarly $\minstrats$ and $\minstratsmd$ for player $\plmin$. 

Fixing a strategy $\maxstrat$ of one player in a game $\game$ yields the \emphdef{induced MDP} $\game^\maxstrat$. Fixing a strategy $\minstrat$ of the second player too, yields the \emphdef{induced MC} $\game^\bothstrats$. Notice that the induced models are finite if and only if the respective strategies use finite memory.

Given an (induced) MC $\game^{\bothstrats}$, we let $\prob_s^\bothstrats$ be its associated probability measure on the Borel-measurable sets of infinite paths obtained from the standard cylinder construction where $s$ is the initial state~\cite{Puterman}.

\subsubsection{Reachability and Safety.}
%In our setting, an objective for a game $\game=(S,\Act,P)$ is a Borel-measurable set $\obj \subseteq S^\omega$ of infinite paths in the game graph.
%Fix an MC with state space $S$.
In our setting, a \emph{property} is a Borel-measurable set $\obj \subseteq S^\omega$ of infinite paths in an SG.
The \emphdef{reachability property} $\reach{T}$ where $T \subseteq S$ is the set $\reach{T} = \{s_0s_1\ldots \in S^\omega \mid \exists i \geq 0 \colon s_i \in T\}$. The set $\safe{T} = S^\omega \setminus \reach{T}$ is called a \emphdef{safety property}. 
Further, for sets $T_1,T_2 \subseteq S$ we define the \emphdef{until property} $\until{T_1}{T_2} = \{s_0s_1\ldots \in S^\omega \mid \exists i \geq 0 \colon s_i \in T_2 \wedge \forall j < i \colon s_j \in T_1\}$. 
These properties are measurable (e.g. \cite{BK08}). A reachability or safety property where the set $T$ satisfies $T \subseteq \sinks(\game)$ is called \emphdef{absorbing}. 
For the safety probabilities in an (induced) MC, it holds that $\prob_s(\safe{T})= 1 - \prob_s(\reach{T})$. 
We highlight that an objective $\safe{T}$ is specified by the set of paths to avoid, i.e. paths satisfying the objective remain forever in $S \setminus T$. 
%This notation differs from the one used by other authors but proved convenient in our case.\todo{M: I don't like this formulation; also Jan said, that both are kinda standard, so we just stress which one we are using.}

%Given a game $\game = (S,\Act,P)$ and strategies $\bothstrats$ of both players, we define the \emphdef{induced Markov Chain} $\game^\bothstrats$ with (countably infinite) state space $S^\bothstrats = (S \times \Act)^* \times S$ and the transition probability function $P^\bothstrats$ which is obtained by following $\maxstrat$ in states $\path s$ with $s \in \Splmax$ and $\minstrat$ otherwise. Notice that if $\bothstrats$ are memoryless, then $\game^\bothstrats$ can be defined such that $S^\bothstrats = S$. The probability measure associated with the induced MC is denoted with $\prob_s^\bothstrats$, where $s \in S^\bothstrats$.

\subsection{Stochastic Lexicographic Reachability-Safety Games}

%Note that to the best of our knowledge, this is the first work on stochastic lexicographic reachability-safety games; however, the model is very natural and a straightforward extension of the ideas of e.g.~\cite{WZ15}.

SGs with lexicographic preferences are a straightforward adaptation of the ideas of e.g.~\cite{WZ15} to the game setting.
The \emphdef{lexicographic} order on $\R^n$ is defined as $\vec{x} \leqlex \vec{y}$ iff $\vec{x}_i \leq \vec{y}_i$ where $i \leq n$ is the greatest position such that for all $j < i$ it holds that $\vec{x}_j = \vec{y}_j$. The position $i$ thus acts like a \emphdef{tiebreaker}. Notice that for arbitrary sets $X \subseteq [0,1]^n$, suprema and infima exist in the lexicographic order.

%\subsubsection{Lexicographic Objectives, Optimal Strategies, Values}
\begin{definition}[Lex-Objective and Lex-Value]
	\label{def:lex_obj}
	A \emphdef{lexicographic reachability- safety objective} (\emphdef{lex-objective}, for short) is a vector $\objvector = (\obj_1,\ldots,\obj_n)$ such that $\obj_i \in \{\reach{S_i}, \safe{S_i}\}$ with $S_i \subseteq S$ for all $1\leq i \leq n$. 
	We call $\objvec$ \emphdef{absorbing} if all the $\obj_i$ are absorbing, i.e., if $S_i \subseteq \sinks(\game)$ for all $1 \leq i \leq n$.
	The \emphdef{lex-(icographic)value} of $\objvector$ at state $s \in S$ is defined as: 
	\begin{equation}
	\label{eq:def_lexval}
	^{\objvec}\lexval(s) = \adjustlimits \sup_{\maxstrat \in \maxstrats} \inf_{\minstrat \in \minstrats} \prob^\bothstrats_s(\objvector)
	\end{equation}
	where $\prob^\bothstrats_s(\objvector)$ denotes the \emph{vector} $(\prob^\bothstrats_s(\obj_1),\ldots,\prob^\bothstrats_s(\obj_n))$ and the suprema and infima are taken with respect to the order $\leqlex$ on $[0,1]^n$.
\end{definition}

Thus the lex-value at state $s$ is the lexicographically supremal vector of probabilities that $\plmax$ can ensure against all possible behaviors of $\plmin$. 
We will prove in Section \ref{sec:theorImpl} that the supremum and infimum in \eqref{eq:def_lexval} can be exchanged; this property is called \emphdef{determinacy}. 
We omit the superscript $\objvec$ in $^{\objvec}\lexval$ if it is clear from the context. We also omit the sets $\maxstrats$ and $\minstrats$ in the suprema in \eqref{eq:def_lexval}, e.g. we will just write $\sup_{\maxstrat}$.

\begin{figure}[t]
	\centering
	\begin{subfigure}{0.45 \textwidth}
	\begin{tikzpicture}[scale=1.15]
	\node[min] (p) at (0,3) {$p$};
	\node[max] (q) at (0,1.5) {$q$};
	\node[max] (r) at (0,0) {$r$};
	\node[max] (s) at (2,3) {$s$};
	\node[min] (t) at (2,2) {$t$};
	\node[min] (u) at (2,1) {$u$};
	\node[min] (v) at (2,0) {$v$};
	\node[max] (w) at (4,0) {$w$};
	\node[prob] (p1) at (0.7,1.5) {};
	\node[prob] (p2) at (0.7,0.5) {};
	\node[prob] (p3) at (3,0) {};
	
	\draw[trans] (p) edge (s);
	\draw[trans] (p) edge (q);
	\draw[trans] (q) edge[bend left=20] (r);
	\draw[trans] (r) edge[bend left=20] (q);
	\draw[trans] (r) edge[-] (p1);
	\draw[trans] (r) edge[-] (p2);
	\draw[trans] (v) edge[-] (p3);
	\draw[trans] (p1) edge (t);
	\draw[trans] (p1) edge (u);
	\draw[trans] (p2) edge (t);
	\draw[trans] (p2) edge (v);
	\draw[trans] (p3) |- (u);
	\draw[trans] (p3) edge (w);
	
%	\draw[trans] (s) edge[loop above] (s);
%	\draw[trans] (t) edge[loop above] (t);
%	\draw[trans] (u) edge[loop above] (u);
%	\draw[trans] (w) edge[loop above] (w);

	\draw[green!50!black,thick] ($(s.north west)+(-0.35,0.4)$)  rectangle ($(t.south east)+(0.4,-0.4)$);
	\draw[red,thick,dotted]     ($(t.north west)+(-0.3,0.3)$) rectangle ($(u.south east)+(0.3,-0.3)$);
	\node (s1) at (2.8,3.3) {\textcolor{green!50!black}{$S_1$}};
	\node (s2) at (2.8,1.7) {\textcolor{red}{$S_2$}};
	\end{tikzpicture}
	\subcaption{}
	\label{fig:prelimRunningA}
	\end{subfigure}
	\begin{subfigure}{0.45 \textwidth}
	\begin{tikzpicture}[scale=1.15]
	\node[min] (p) at (0,3) {$p$};
	\node[max] (q) at (0,1.5) {$q$};
	\node[max] (r) at (0,0) {$r$};
	\node[max] (s) at (2,3) {$s$};
	\node[min] (t) at (2,2) {$t$};
	\node[min] (u) at (2,1) {$u$};
	\node[min] (v) at (2,0) {$v$};
	\node[max] (w) at (4,0) {$w$};
	%\node[prob] (p1) at (0.7,1.5) {};
	\node[prob] (p2) at (0.7,0.5) {};
	\node[prob] (p3) at (3,0) {};
	
	%\draw[trans] (p) edge (s);
	\draw[trans] (p) edge (q);
	\draw[trans] (q) edge[bend left=20] (r);
	\draw[trans] (r) edge[bend left=20] (q);
	%\draw[trans] (r) edge[-] (p1);
	\draw[trans] (r) edge[-] (p2);
	\draw[trans] (v) edge[-] (p3);
	%\draw[trans] (p1) edge (t);
	%\draw[trans] (p1) edge (u);
	\draw[trans] (p2) edge (t);
	\draw[trans] (p2) edge (v);
	\draw[trans] (p3) |- (u);
	\draw[trans] (p3) edge (w);
	
	%	\draw[trans] (s) edge[loop above] (s);
	%	\draw[trans] (t) edge[loop above] (t);
	%	\draw[trans] (u) edge[loop above] (u);
	%	\draw[trans] (w) edge[loop above] (w);
	
	\draw[green!50!black,thick] ($(s.north west)+(-0.4,0.4)$)  rectangle ($(t.south east)+(0.4,-0.4)$);
	\draw[red,thick,dotted]     ($(t.north west)+(-0.3,0.3)$) rectangle ($(u.south east)+(0.3,-0.3)$);
	\node (s1) at (2.8,3.3) {\textcolor{green!50!black}{$S_1$}};
	\node (s2) at (2.8,1.7) {\textcolor{red}{$S_2$}};
	\end{tikzpicture}
	\subcaption{}
	\label{fig:prelimRunningB}
	\end{subfigure}
	\caption{
	(a) An example of a stochastic game. 
	$\plmax$-states are rendered as squares $\plmaxsymbol$ and $\plmin$-states as rhombs $\plminsymbol$.
	Probabilistic choices are indicated with small circles. 
	In this example, all probabilities equal $\nicefrac{1}{2}$.
	The absorbing lex-objective $\objvec = \{\reach{S_1}, \safe{S_2}\}$ is indicated by the thick green line around $S_1 = \{s,t\}$ and the dotted red line around $S_2 = \{t,u\}$. Self-loops in sinks are omitted.
    (b) Restriction of the game to lex-optimal actions only.}
	\label{fig:prelimRunning}
\end{figure}
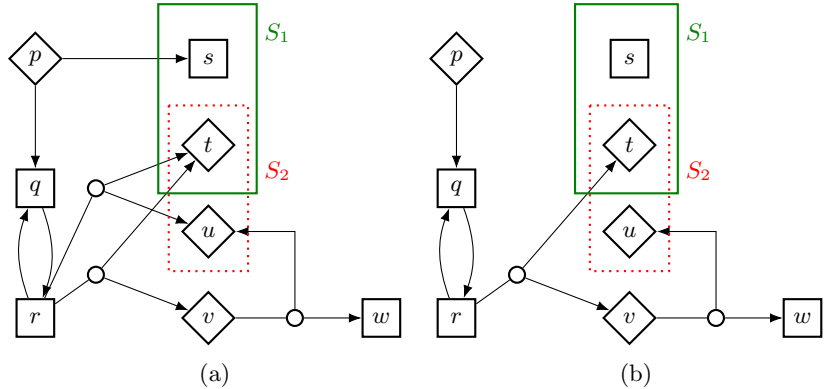

\begin{example}[SGs and lex-values]
	\label{ex:running_ex_intro}
	Consider the SG sketched in Figure \ref{fig:prelimRunningA} with the lex-objective $\objvec = \{\reach{S_1}, \safe{S_2}\}$. 
	Player $\plmax$ must thus maximize the probability to reach $S_1$ and, moreover, among all possible strategies that do so, it must choose one that maximizes the probability to avoid $S_2$ forever.
	\qee
\end{example}

\subsubsection{Lex-value of actions and lex-optimal actions.} We extend the notion of value to actions. Let $s \in S$ be a state. The \emphdef{lex-value of an action} $a \in \Act(s)$ is defined as $\lexval(s,a) = \sum_{s'}P(s,a,s')\lexval(s')$. If $s \in \Splmax$, then action $a$ is called \emphdef{lex-optimal} if $\lexval(s,a) = \max_{b \in \Act(s)}\lexval(s,b)$. Lex-optimal actions are defined analogously for states $s \in \Splmin$ by considering the minimum instead of the maximum. Notice that there is always at least one optimal action because $\Act(s)$ is finite by definition.

\begin{example}[Lex-value of actions]
	\label{ex:running_ex_intro2}
	We now intuitively explain the lex-values of all states in Figure \ref{fig:prelimRunningA}.
	The lex-value of sink states $s$, $t$, $u$ and $w$ is determined by their membership in the sets $S_1$ and $S_2$.
	E.g., $\lexval(s) = (1,1)$, as it is part of the set $S_1$ that should be reached and not part of the set $S_2$ that should be avoided.
	Similarly we get the lex-values of $t$, $u$ and $w$ as $(1,0)$, $(0,0)$ and $(0,1)$ respectively.
	State $v$ has a single action that yields $(0,0)$ or $(0,1)$ each with probability $\nicefrac 1 2$, thus $\lexval(v) = (0,\nicefrac 1 2)$.
	
	State $p$ has one action going to $s$, which would yield $(1,1)$. However, as $p$ is a $\plmin$-state, its best strategy is to avoid giving such a high value. Thus, it uses the action going downwards and $\lexval(p)=\lexval(q)$. 
	State $q$ only has a single action going to $r$, so $\lexval(q)=\lexval(r)$.
	
	State $r$ has three choices: (i)~Going back to $q$, which results in an infinite loop between $q$ and $r$, and thus never reaches $S_1$. So a strategy that commits to this action will not achieve the optimal value.
	(ii)~Going to $t$ or $u$ each with probability $\nicefrac 1 2$. In this case, the safety objective is definitely violated, but the reachability objective achieved with $\nicefrac 1 2$.
	(iii)~Going to $t$ or $v$ each with probability $\nicefrac 1 2$. Similarly to (ii), the probability to reach $S_1$ is $\nicefrac 1 2$, but additionally, there is a $\nicefrac 1 2 \cdot \nicefrac 1 2$ chance to avoid $S_2$. 
	Thus, since $r$ is a $\plmax$-state, its lex-optimal choice is the action leading to $t$ or $v$ and we get $\lexval(r) = (\nicefrac 1 2, \nicefrac 1 4)$.	
	\qee
	
	%Figure~\ref{fig:prelimRunningB} shows the game restricted to the lex-optimal actions. Note that the action from $r$ to $q$ is still present, as it can be played an arbitrary finite number of times without changing the value. 
	%The action is not suboptimal, but the strategy that commits to infinitely playing it.\todo{M: I added this last paragraph, but I think we should drop it, as this is the point of Example \ref{ex:safeEvil}. Then Fig \ref{fig:prelimRunningB} is not referenced in the prelims, but I think that is ok. T: agree}
\end{example}

%An action which is not lex-optimal is called \emphdef{lex-sub-optimal}. M: I removed all ocurrences of the word, by rephrasing; I learnt in a style course it is better to phrase things positively. (Drop lex-sub-optimal -> Keep only lex-optimal)

Notice that with the kind of objectives considered, we can easily swap the roles of $\plmax$ and $\plmin$ by exchanging safety objectives with reachability and vice versa. It is thus no loss of generality to consider subsequently introduced notions such as optimal strategies only from the perspective of $\plmax$.

\begin{definition}[Lex-Optimal Strategies]
	A strategy $\maxstrat \in \maxstrats$ is \emphdef{lex-optimal} for $\objvector$ if for all $s \in S$, 
		$\lexval(s) = \inf_{\minstrat'} \prob_s^{\maxstrat,\minstrat'}(\objvector)$. A strategy $\minstrat$ of $\plmin$ is a \emphdef{lex-optimal counter-strategy} against $\maxstrat$ if $\prob_s^\bothstrats(\objvector) = \inf_{\minstrat'} \prob_s^{\maxstrat,\minstrat'}(\objvector)$.
\end{definition}
We stress that counter-strategies of $\plmin$ depend on the strategy chosen by $\plmax$.

\subsubsection{Locally lex-optimal strategies.} An MD strategy $\maxstrat$ of $\plmax$ ($\plmin$, resp.) is called \emphdef{locally lex-optimal} if for all $s \in \Splmax$ ($s \in \Splmin$, resp.) and $a \in \Act(s)$, we have $\maxstrat(s)(a) > 0$ implies that action $a$ is lex-optimal. Thus, locally lex-optimal strategies only assign positive probability to lex-optimal actions.

\subsubsection{Convention.} For the rest of the paper, unless stated otherwise, we use  $\game = (\Splmax,\Splmin,\Act,P)$ to denote an SG and $\objvec=(\obj_1,\ldots,\obj_n)$ is a suitable (not necessarily absorbing) lex-objective, that is $\obj_i \in \{\reach{S_i}, \safe{S_i}\}$ with $S_i \subseteq S$ for all $1 \leq i \leq n$.

\section{Lexicographic SGs with Absorbing Targets}
\label{sec:absorbing}

In this section, we show how to compute the lexicographic value for SGs where \emph{all target sets are absorbing}.
We first show various theoretical results in Section~\ref{sec:MD} upon which the algorithm for computing the values and optimal strategies presented in Section~\ref{sec:algAbsorbing} is then built.
%We show in Section~\ref{sec:MD} that lex-optimal MD strategies exist in this case.
%This gives rise to an algorithm to compute the lexicographic values in Section \ref{sec:algAbsorbing}.
The main technical difficulty arises from interleaving reachability and safety objectives.
In Section \ref{sec:general}, we will reduce solving general (not necessarily absorbing) SGs to the case with absorbing targets.

\subsection{Characterizing Optimal Strategies}
\label{sec:MD}

This first subsection derives a characterization of lex-optimal strategies in terms of local optimality and an additional reachability condition (Lemma~\ref{lem:THE_lemma} further below). It is one of the key ingredients for the correctness of the algorithm presented later and also gives rise to a (non-constructive) proof of existence of MD lex-optimal strategies in the absorbing case.

We begin with the following lemma that summarizes some straightforward facts we will frequently use. Recall that a strategy is \emph{locally lex-optimal} if it only selects actions with optimal lex-value.

\begin{lemma}
	\label{lem:fact_about_strats}
	The following statements hold for any absorbing lex-objective $\objvec$:
	\begin{enumerate}[label=(\alph*)]
		\item If $\maxstrat \in \maxstratsmd$ is lex-optimal and $\minstrat \in \minstratsmd$ is a lex-optimal counter strategy against $\maxstrat$, 
		then $\maxstrat$ and $\minstrat$ are both \emph{locally} lex-optimal. (We do not yet claim that such strategies $\bothstrats$ always exist.)\label{facts:a}
%		\todo{I dropped fact b, as it is not used in main body. If we need it in Appendix, prove it there; maybe we use it in proof of fact c}
%		\item The lex-value $\lexval(s)$ of every state $s \in S$ satisfies the equations
%		\begin{align*}
%		& &\lexval(s) = \max_{a \in \Act(s)} \sum_{s'}P(s,a,s')\lexval(s') & \hspace*{5mm}\text{ if } s \in \Splmax \\
%		&\text{ and }&\lexval(s) = \min_{a \in \Act(s)} \sum_{s'}P(s,a,s')\lexval(s') & \hspace*{5mm}\text{ if } s \in \Splmin
%		\end{align*}
%		where the maximum/minimum is with respect to $\leqlex$. \label{facts:b}
		\item 
		Let $\mod{\game}$ be obtained from $\game$ by removing all actions (of both players) that are not locally lex-optimal.
		Let $\mod{\mathbf{v}}^\lexabbr$ be the lex-values in $\mod{\game}$.
		Then $\mod{\mathbf{v}}^\lexabbr = \lexval$. \label{facts:c}
		%\item For all $s \in \sinks(\game)$ it holds that $\lexval(s) = \indicvec{\objvec}{s}$, the $0$-$1$-vector whose $i$-th entry is $1$ iff $(\obj_i = \reach{S_i}$ and $s \in S_i)$ or $(\obj_i = \safe{S_i}$ and $s \notin S_i)$.
	\end{enumerate}
\end{lemma}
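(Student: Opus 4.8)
\smallskip\noindent\textbf{Part (a).} The plan is to argue by contradiction at a single state. Suppose $\maxstrat$ is lex-optimal (so $\lexval(s) = \inf_{\minstrat'}\prob_s^{\maxstrat,\minstrat'}(\objvec)$ for all $s$), but at some $s \in \Splmax$ it assigns positive probability to an action $a$ that is not lex-optimal, i.e.\ $\lexval(s,a) <_{\lexabbr} \max_{b}\lexval(s,b)$. Since $\maxstrat$ is MD, the value it secures from $s$ equals $\sum_a \maxstrat(s)(a)\,\lexval(s,a)$ against a best counter-strategy — here I would invoke the standard fixed-point/one-step characterization of values in the induced MDP $\game^{\maxstrat}$, using that the lex-order is preserved under convex combinations with the suprema/infima existing (as noted after the definition of $\leqlex$). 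This convex combination is strictly $\lesslex$ the max over actions, which in turn is $\leqlex \lexval(s)$, giving $\inf_{\minstrat'}\prob_s^{\maxstrat,\minstrat'}(\objvec) <_{\lexabbr} \lexval(s)$, contradicting lex-optimality of $\maxstrat$. The symmetric argument (minimum instead of maximum, roles of the players swapped as permitted by the convention) handles $\minstrat$ at $\plmin$-states. The mild subtlety here is justifying that the value of the induced MDP/MC at $s$ is obtained by the one-step backward expansion through $P(s,a,\cdot)$ and the lex-values at successors; this is where I would either cite a standard result or fold it into Part~(b)'s machinery.

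\smallskip\noindent\textbf{Part (b).} The inequality $\mod{\mathbf{v}}^\lexabbr \leqlex \lexval$ is the easy direction: every strategy pair in $\mod{\game}$ is also a strategy pair in $\game$, so $\plmax$ can secure at least as much in $\game$, hence $\mod{\mathbf{v}}^\lexabbr \leqlex \lexval$ pointwise. The reverse inequality $\lexval \leqlex \mod{\mathbf{v}}^\lexabbr$ is the heart of the matter: I would take a locally lex-optimal MD strategy $\maxstrat$ of $\plmax$ that is lex-optimal in $\game$ — but at this point in the paper we do \emph{not} yet know such a strategy exists (Lemma~\ref{lem:fact_about_strats}\ref{facts:a} explicitly disclaims it), so I would instead argue directly. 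Fix $\varepsilon > 0$ and let $\maxstrat$ be an $\varepsilon$-lex-optimal strategy of $\plmax$ in $\game$; the point is that such a $\maxstrat$ can be taken to use only locally lex-optimal actions, because replacing any positively-weighted non-optimal action by an optimal one can only improve (w.r.t.\ $\leqlex$) the value secured — formally by the same convexity/one-step reasoning as in Part~(a), propagated along paths. Once $\maxstrat$ only uses actions surviving in $\mod{\game}$, it is a legal strategy there, and $\plmin$ in $\mod{\game}$ has access to \emph{at least} all of $\plmin$'s locally lex-optimal actions; the claim that $\plmin$ loses nothing by being restricted to locally lex-optimal actions against such $\maxstrat$ is the genuinely delicate step and uses that lex-values of actions are defined via the \emph{same} $\lexval$, so removing $\plmin$'s non-locally-optimal actions removes only actions $\plmin$ would never want to play against an adversary already pinned to the value surface.

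\smallskip\noindent\textbf{Main obstacle.} The crux — and the step I expect to occupy most of the real work — is the reverse inequality in~(b), specifically showing that pruning \emph{both} players' non-locally-optimal actions does not change the value. Pruning $\plmax$'s bad actions is clearly harmless to $\plmax$; pruning $\plmin$'s bad actions is clearly harmless to $\plmin$; but one must check these do not interact badly, i.e.\ that after pruning, the value is still exactly $\lexval$ and not merely sandwiched loosely. Here the interleaving of reachability and safety (flagged in the paper's technical-contribution paragraph) is the danger: a removed action might have been the only one keeping some cycle ``live,'' so I would be careful to phrase everything in terms of the one-step operator $\vec{x} \mapsto$ (pointwise $\max$/$\min$ over surviving actions of $\sum_{s'} P(s,a,s')\vec{x}_{s'}$) and argue that $\lexval$ remains a fixed point of this pruned operator with the same extremal-fixed-point characterization, then appeal to uniqueness of that characterization to conclude $\mod{\mathbf{v}}^\lexabbr = \lexval$. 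I would expect to need a small auxiliary observation that the set of lex-optimal actions at each state is non-empty and that lex-optimality is witnessed locally, both of which follow from finiteness of $\Act(s)$ and the existence of lex-suprema/infima.
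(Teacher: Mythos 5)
Part~(a) of your plan is essentially the paper's argument (one-step expansion at the offending state plus the strict lexicographic inequality for a convex combination that puts positive weight on a suboptimal action). The paper avoids the ``mild subtlety'' you flag by simply fixing the lex-optimal counter-strategy $\minstrat$ provided in the hypothesis: then $\prob_{s'}^{\maxstrat,\minstrat}(\objvec)=\lexval(s')$ at \emph{every} state, and the one-step expansion is just the law of total probability in the induced finite Markov chain, with no need for a backward-induction characterization of $\inf_{\minstrat'}$ in the induced MDP.

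In part~(b), however, there is a genuine gap: your ``easy direction'' $\mod{\mathbf{v}}^\lexabbr \leqlex \lexval$ does \emph{not} follow from the inclusion of strategy pairs. In $\mod{\game}$ both players are restricted simultaneously; restricting $\plmax$ pushes the sup--inf value down, but restricting $\plmin$ pushes it up, so neither inequality is monotone-by-inclusion. The real content of that direction is precisely that the pruned $\plmin$-actions are useless to $\plmin$, which requires the same domination argument you reserve for the reverse direction. This is in fact how the paper proceeds: it extends lex-values to finite paths, observes that the value after a history equals the value of its last state, and shows that any strategy of \emph{either} player that puts positive probability on a locally suboptimal action after some history secures a strictly worse ($\lesslex$, resp.\ strictly better for $\plmin$) value from that history onward; this single path-wise argument disposes of both players' pruned actions and hence of both inequalities. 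Your fallback route is also shaky: appealing to ``uniqueness of the extremal fixed-point characterization'' of the pruned one-step operator is not available here --- already for plain reachability the Bellman operator has many fixed points, and the paper's own Example~3 (the action from $r$ back to $q$ survives pruning, yet playing it forever is not globally lex-optimal) shows that the lex-value of $\mod{\game}$ is not determined by local one-step conditions alone; the correct global side condition (almost-sure reachability of the final set) only appears later, in Lemma~3. So the replacement/domination argument, applied separately to each player as in the paper, is really what is needed, and the inclusion shortcut should be dropped.
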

\begin{proof}[Sketch]
Both claims follow from the definitions of lex-value and lex-optimal strategy. For (b) in particular, we show that a strategy using actions which are not lex-optimal can be transformed into a strategy that achieves a greater (lower, resp.) value. Thus removing the non lex-optimal actions does not affect the lex-value.
See \ifarxivelse{\cite[Appendix A.1]{techreport}}{Appendix \ref{app:facts}}  for more technical details. \qed
\end{proof}

\begin{example}[Modified game $\mod{\game}$]
	Consider again the SG from Figure \ref{fig:prelimRunningA}. Recall the lex-values from Example \ref{ex:running_ex_intro}.
	Now we remove the actions that are not locally lex-optimal. This means we drop the action that leads from $p$ to $s$ and the action that leads from $r$ to $t$ or $u$ (Figure \ref{fig:prelimRunningB}).
	Since these actions were not used by the lex-optimal strategies, the value in the modified SG is the same as that of the original game.
	\qee
\end{example}

\begin{example}[Locally lex-optimal does not imply globally lex-optimal]
	\label{ex:safeEvil}
	Note that we do not drop the action that leads from $r$ to $q$, because $\lexval(r)=\lexval(q)$, so this action is locally lex-optimal.
	In fact, a lex-optimal strategy can use it arbitrarily many times without reducing the lex-value, as long as eventually it picks the action leading to $t$ or $v$.
	However, if we only played the action leading to $q$, the lex-value would be reduced to $(0,1)$ as we would not reach $S_1$, but would also avoid $S_2$. 
	
	We stress the following consequence of this: Playing a locally lex-optimal strategy is not necessarily globally lex-optimal.
	It is not sufficient to just restrict the game to locally lex-optimal actions of the previous objectives and then solve the current one. 
	Note that in fact the optimal strategy for the second objective $\safe{S_2}$ would be to remain in $\{p,q\}$; however, we must not pick this safety strategy, before we have not ``tried everything'' for all previous reachability objectives, in this case reaching $S_1$. \qee
\end{example}

This idea of ``trying everything'' for an objective $\reach{S_i}$ is equivalent to the following:
either reach the target set $S_i$, or reach a set of states from which $S_i$ cannot be reached anymore.
Formally, let $\valzeroset_i = \{ s \in S \mid \lexval_i(s) = 0 \}$ be the set of states that cannot reach the target set $S_i$ anymore.
Note that it depends on the lex-value, not the single-objective value. 
This is important, as the single-objective value could be greater than 0, but a more important objective has to be sacrificed to achieve it.
%\todo[inline]{example: Safe, reach. There is state which could achieve reach, but in doing so it violates safe. So it is in $\valzeroset_2$}

We define the set of states where we have ``tried everything'' for all reachability objectives as follows:
\begin{definition}[Final Set]
	\label{def:final_set}
	For absorbing $\objvec$, let $R_{<i} = \{j<i \mid \obj_j = \reach{S_j}\}$. We define the \emphdef{final set} $F_{<i} = \bigcup_{k \in R_{<i}} S_k\ \cup\ \bigcap_{k\in R_{<i}}\valzeroset_k$ with the convention that $F_{<i} = S$ if $R_{<i} = \emptyset$. We also let $F = F_{<n+1}$.
\end{definition}
The final set contains all target states as well as the states that have lex-value 0 for all reachability objectives; we need the intersection of the sets $\valzeroset_k$, because as long as a state still has a positive probability to reach any target set, its optimal behaviour is to try that.
\begin{example}[Final set]
%	\todo[inline]{Example: make sure you see the point I am making before.}
	\label{ex:final_set}
	For the game in Figure \ref{fig:prelimRunning}, we have $\valzeroset_1 = \{u,v,w\}$ and thus $F = \valzeroset_1 \cup S_1 = \{s,t,u,v,w\}$. 
	An MD lex-optimal strategy of $\plmax$ must almost-surely reach this set against any strategy of $\plmin$; 
	only then it has ``tried everything''. \qee
\end{example}

The following lemma characterizes MD lex-optimal strategies in terms of local lex-optimality and the final set.

\begin{lemma}
	\label{lem:THE_lemma}
	Let $\objvec$ be an absorbing lex-objective and $\maxstrat \in \maxstratsmd$.
	Then $\maxstrat$ is lex-optimal for $\objvector$ if and only if
	$\maxstrat$ is locally lex-optimal and for all $s \in S$ we have
	\begin{equation}
	\label{eq:THE_lemma}
	\forall \minstrat \in \minstratsmd\colon \prob_s^{\maxstrat,\minstrat}(\reach{F}) = 1. \tag{$\star$}
	\end{equation}
\end{lemma}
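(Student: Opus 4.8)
The plan is to prove the two directions of the equivalence separately, in each case by induction on the number $n$ of objectives. The base cases are the classical characterisations of MD‑optimal strategies for a single objective: for a lone reachability objective, optimal $=$ locally optimal $+$ almost surely reaching $S_1\cup\valzeroset_1$ (which is the instance of ($\star$) for $n=1$); for a lone safety objective, every locally optimal MD strategy is already optimal and $F=S$, so ($\star$) is vacuous.

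For the \refonlyif\ direction, let $\maxstrat\in\maxstratsmd$ be lex‑optimal. Local lex‑optimality follows as in Lemma~\ref{lem:fact_about_strats}: unfolding one step in the induced Minimizer‑MDP $\game^\maxstrat$ at a state $s\in\Splmax$ gives $\lexval(s)=\sum_{s'}P(s,\maxstrat(s),s')\lexval(s')=\lexval(s,\maxstrat(s))$, so $\maxstrat(s)$ is lex‑optimal. For ($\star$), suppose for contradiction that $\prob_{s_0}^{\maxstrat,\minstrat}(\reach F)<1$ for some $\minstrat\in\minstratsmd$ and $s_0$; then the finite MC $\game^{\maxstrat,\minstrat}$ has a bottom strongly connected component $C$ reachable from $s_0$ with $C\cap F=\emptyset$. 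I would first record that every sink of $\game$ lies in $F$ (a sink outside all $S_k$ has $\lexval_k=0$ for every reachability position, hence lies in $\bigcap_k\valzeroset_k$, the intersection over reachability positions $k\le n$), so $C$ contains neither a sink nor a target. Next, using local lex‑optimality of $\maxstrat$ at $\plmax$‑states (equality) and $\lexval(s)\leqlex\sum_{s'}P(s,\minstrat(s),s')\lexval(s')$ at $\plmin$‑states (inequality), $\lexval$ does not decrease in expectation along $C$; weighting with the stationary distribution of $C$ and examining coordinates one at a time forces $\lexval$ to be constant ($\vec c$, say) and harmonic on $C$. Since $C$ misses $\bigcap_k\valzeroset_k$, some reachability coordinate of $\vec c$ is positive; take $m$ minimal with $\obj_m=\reach{S_m}$ and $\vec c_m>0$. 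Passing to the truncated lex‑objective $(\obj_1,\dots,\obj_m)$ — and using that its lex‑value is the length‑$m$ prefix of $\lexval$, so that $\maxstrat$ stays lex‑optimal and locally lex‑optimal for it while $C$ stays disjoint from its final set — the inductive hypothesis applies when $m<n$ and yields the contradiction. For $m=n$ one argues directly: against any Minimizer strategy that is optimal on the first $n-1$ coordinates, lex‑optimality of $\maxstrat$ forces the $n$‑th coordinate to equal $\vec c_n>0$, i.e.\ $\maxstrat$ must reach the absorbing set $S_n$ with positive probability, which is impossible once the play is confined to $C$.

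For the \refif\ direction, let $\maxstrat\in\maxstratsmd$ be locally lex‑optimal with $\prob_s^{\maxstrat,\minstrat}(\reach F)=1$ for all $s$ and all $\minstrat\in\minstratsmd$. The bound $\inf_\minstrat\prob_s^{\maxstrat,\minstrat}(\objvec)\leqlex\lexval(s)$ is immediate; for the converse it suffices to show $\prob_s^{\maxstrat,\minstrat}(\objvec)\geqlex\lexval(s)$ for all $\minstrat$, and since the Minimizer's infimum over the fixed finite MDP $\game^\maxstrat$ is realised by an MD strategy, it is enough to treat $\minstrat\in\minstratsmd$; fix such a $\minstrat$ and work in $\mc=\game^{\maxstrat,\minstrat}$. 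The structural facts I would use are $F\subseteq S_k\cup\valzeroset_k$ for every reachability position $k$ (the same sink argument as above) and that $\valzeroset_k$ is closed under $\mc$‑transitions at $\plmax$‑states by local lex‑optimality. Together with ($\star$) these imply that from every state $\mc$ almost surely reaches, for each reachability objective, either its target or the set where its lex‑value component vanishes; via the single‑objective reachability characterisation this makes the reachability coordinates of $\prob_s^\mc(\objvec)$ at least those of $\lexval(s)$. An induction on $n$ that peels off $\obj_1$ and restricts to its locally lex‑optimal actions without changing $\lexval$ (Lemma~\ref{lem:fact_about_strats}), together with the observation that any locally optimal MD strategy is optimal for a lone safety objective, then handles the tie‑breaking cascade and the safety coordinates.

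The step I expect to be the main obstacle is exactly the interaction of reachability and safety illustrated in Example~\ref{ex:safeEvil}: local lex‑optimality does not imply global lex‑optimality, because a safety objective can keep using locally lex‑optimal actions to avoid a more important reachability target. Closing the inductions hinges on (i) the prefix property of lex‑values, which legitimises truncating to $(\obj_1,\dots,\obj_m)$, and (ii) a careful treatment of bottom strongly connected components that avoid $F$, in particular the borderline case where a high safety value ``masks'' a reachability coordinate that has not made progress — precisely the situation that the final set $F$ and condition ($\star$) are designed to exclude, and verifying this is the technical heart of the proof.
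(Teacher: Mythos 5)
Your proposal has genuine gaps in both directions, and in the \refif-direction it replaces the paper's key machinery with an intermediate claim that is false. You reduce correctly to MD strategies $\minstrat$ of $\plmin$, but then assert that local lex-optimality of $\maxstrat$ together with $(\star)$ makes ``the reachability coordinates of $\prob^{\maxstrat,\minstrat}_s(\objvec)$ at least those of $\lexval(s)$'' for \emph{every} MD $\minstrat$. This is not true: take $\objvec=(\reach{S_1},\reach{S_2})$ and a $\plmin$-state $u$ with one action to a sink of value $(1,0)$ and one to a sink of value $(0,1)$; then $\lexval(u)=(0,1)$, the hypotheses of the \refif-direction hold (all successors lie in $F$), yet against the first action the achieved vector is $(1,0)$, whose second coordinate is below $\lexval_2(u)$. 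Only the lex-inequality holds against suboptimal $\minstrat$, and proving it is exactly the hard part. The paper does this by fixing a \emph{lex-optimal MD counter-strategy} (existence via a Pareto-polyhedron/vertex argument in the induced MDP, Lemma~\ref{lem:mdp_md} of the appendix), getting the first $n-1$ coordinates from the induction hypothesis, and establishing the $n$-th coordinate by a Knaster--Tarski greatest-fixpoint comparison in the induced MC; crucially, for $\obj_n=\reach{S_n}$, condition $(\star)$ is used to show that any state which cannot reach $S_n$ in the induced MC lies in $\valzeroset_n$, via a safety-gfp argument for $\safe{\valzeroset_n}$. Your ``peel off $\obj_1$ and restrict to its locally lex-optimal actions'' induction cannot substitute for this: the fixed adversary $\minstrat$ need not use locally lex-optimal actions, so you may not restrict $\plmin$'s actions, and your sketch never explains how the safety coordinates are compared against a minimizer that is only constrained on the earlier coordinates (the paper handles this inside the fixpoint proof by showing a lex-optimal counter-strategy must pick $\objvec_{<n}$-optimal actions at $\plmin$-states).

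In the \refonlyif-direction your BSCC skeleton matches the paper's, and the stationary-distribution argument that $\lexval$ is constant on $C$ is fine but not the issue; the problem is the closing step for $m=n$. There you argue ``against any Minimizer strategy that is optimal on the first $n-1$ coordinates \dots which is impossible once the play is confined to $C$'': this conflates two different strategies of $\plmin$. The play is confined to $C$ only under the \emph{witnessing} (possibly suboptimal) $\minstrat$, while lex-optimality of $\maxstrat$ constrains its performance only as an infimum over $\plmin$; against a minimizer that is optimal on the first $n-1$ coordinates nothing keeps the play inside $C$. Indeed, when a safety objective with value $<1$ precedes the live reachability coordinate $m$, confinement to $C$ yields the vector with $1$ on safety and $0$ on reachability coordinates, which can be \emph{lex-greater} than $\lexval$ on $C$, so no contradiction with lex-optimality follows as written -- this is precisely the delicate interleaving you flag at the end, and your argument does not resolve it (the paper's own \refonlyif-proof is also terse at exactly this point, asserting the contradiction from $\lexval_i(t)>0$ and $\prob_t^{\maxstrat,\minstrat}(\reach{S_i})=0$ directly). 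As it stands, neither direction of your proposal goes through.
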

\begin{proof}[Sketch]
	The \refif-direction is shown by induction on the number $n$ of targets. We make a case distinction according to the type of $\obj_n$: If it is safety, then we prove that local lex-optimality is already sufficient for global lex-optimality. Else if $\obj_n$ is reachability, then intuitively, the additional condition \eqref{eq:THE_lemma} ensures that the strategy $\maxstrat$ indeed ``tries everything'' and either reaches the target $S_n$ or eventually a state in $\valzeroset_n$ where the opponent $\plmin$ can make sure that $\plmax$ cannot escape. The technical details of these assertions rely on a fixpoint characterization of the reachability probabilities combined with the classic Knaster-Tarski Fixpoint Theorem \cite{Tar55} and are given in \ifarxivelse{\cite[Appendix A.2]{techreport}}{Appendix \ref{app:theLemma}}.
	
	For the \refonlyif-direction recall that lex-optimal strategies are necessarily locally lex-optimal by Lemma \ref{lem:fact_about_strats} (a). Further let $i$ be such that $\obj_i = \reach{S_i}$ and assume for contradiction that $\maxstrat$ remains forever within $S \setminus (S_i \cup \valzeroset_i)$ with positive probability against some strategy of $\plmin$. But then $\maxstrat$ visits states with positive lex-value for $\obj_i$ infinitely often without ever reaching $S_i$. Thus $\maxstrat$ is not lex-optimal, contradiction.
	\qed	
\end{proof}

Finally, this characterization allows us to prove that MD lex-optimal strategies exist for absorbing objectives.
\begin{theorem}
	\label{thm:md_exists}
	For an absorbing lex-objective $\objvec$, there exist MD lex-optimal strategies for both players.
\end{theorem}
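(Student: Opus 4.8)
The plan is to obtain the statement as a corollary of Lemma~\ref{lem:THE_lemma}. By the reachability--safety symmetry observed above (exchange the two kinds of objective and the two players), it suffices to produce an MD lex-optimal strategy for $\plmax$, and by Lemma~\ref{lem:THE_lemma} this is the same as producing an MD strategy that is \emph{locally} lex-optimal and moreover satisfies~\eqref{eq:THE_lemma}, i.e.\ almost-surely reaches the final set $F$ against every strategy of $\plmin$. I would first pass to the game $\mod\game$ obtained by deleting all actions (of both players) that are not locally lex-optimal: by Lemma~\ref{lem:fact_about_strats}(b) this preserves the lex-values, hence also the sets $\valzeroset_i$ and $F$ of Definition~\ref{def:final_set}, while every strategy in $\mod\game$ is locally lex-optimal by construction. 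Since a player's almost-sure reachability objective admits a single memoryless strategy winning from its whole almost-sure winning region (a classical fact for stochastic games), the theorem now reduces to the claim: \emph{in $\mod\game$, player $\plmax$ can almost-surely reach $F$ from every state}.

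I would prove this claim by contradiction, using induction on the number $n$ of objectives; the base case $n\le 1$ is the classical single-objective result. Suppose $\plmax$ does not win the almost-sure-reach-$F$ game on $\mod\game$ from some state. By standard results on qualitative reachability games, $\plmin$ then owns a nonempty \emph{trap} $C\subseteq S\setminus F$: every $\plmax$-state of $C$ has all its $\mod\game$-actions pointing back into $C$, every $\plmin$-state of $C$ has at least one such action, and so $\plmin$ has an MD strategy keeping the play inside $C$ forever from any state of $C$, whatever $\plmax$ does. (If $R_{<n+1}=\emptyset$ then $F=S$ and no such $C$ exists, so $R_{<n+1}\neq\emptyset$.) As $C\subseteq S\setminus F$ and $F\supseteq\bigcap_{k\in R_{<n+1}}\valzeroset_k$, every $c\in C$ has $\lexval_k(c)>0$ for some $k\in R_{<n+1}$; let $k^\star$ be the least index of $R_{<n+1}$ that occurs this way over $C$, let $M=\max_{c\in C}\lexval_{k^\star}(c)>0$, and set $C_M=\{c\in C\mid\lexval_{k^\star}(c)=M\}$. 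Using the local-optimality equations for the lex-value in $\mod\game$ (for $s\in\Splmax$, $\lexval(s)=\max_a\lexval(s,a)$, and the minimum for $s\in\Splmin$, so that every surviving action $a$ at $s$ satisfies $\lexval(s,a)=\lexval(s)$) together with the definition of $M$, one checks that $C_M$ is nonempty and is again a $\plmin$-trap. Now $C_M\subseteq S\setminus F$ is disjoint from every $S_k$ with $k\in R_{<n+1}$, so all the reachability objectives are vacuous on $\mod\game$ restricted to $C_M$; there the objective $\objvec$ is therefore equivalent to the lex-objective formed by its safety components alone, which has fewer than $n$ components. By the induction hypothesis $\plmin$ thus has an MD strategy that stays in $C_M$ forever and is lex-optimal there. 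Combining this with $\lexval_k(c)=0$ for all $c\in C$ and all $k\in R_{<n+1}$ with $k<k^\star$ (minimality of $k^\star$), $\plmin$ obtains, from every $c\in C_M$ and against every $\maxstrat$, an outcome vector that agrees with $\lexval(c)$ on coordinates $1,\dots,k^\star-1$ but carries $0$ in coordinate $k^\star$; this vector is lexicographically strictly below $\lexval(c)$, contradicting $\lexval(c)=\sup_\maxstrat\inf_\minstrat\prob^{\bothstrats}_c(\objvec)$.

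The step I expect to be the main obstacle is the claim that restricting $\mod\game$ to the $\plmin$-trap $C_M$ leaves the lex-value on $C_M$ unchanged --- equivalently, that the $\plmin$-strategy built in the restricted game is still competitive against $\plmax$ in $\mod\game$. What makes this delicate is that the lex-value is \emph{not} the unique solution of its Bellman equations (reachability contributes least fixpoints, safety greatest fixpoints, and the two interleave --- this is exactly the effect illustrated by Example~\ref{ex:safeEvil}), so a uniqueness argument is unavailable. One must instead argue directly that neither player profits from letting the play leave $C_M$: $\plmax$ does not, since $C_M$ is trapped by $\plmin$ and already realizes the maximal value $M$ of coordinate $k^\star$ among trapped states while the higher-priority reachability coordinates are pinned to $0$ throughout $C$; and $\plmin$ does not, since the deleted (escaping) actions carried exactly the same lex-value as the retained ones. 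Carrying this out cleanly reuses the fixpoint reasoning (Knaster--Tarski~\cite{Tar55}) already employed for Lemma~\ref{lem:THE_lemma}.
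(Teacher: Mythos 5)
Your overall skeleton coincides with the paper's: pass to $\mod{\game}$ and use Lemma~\ref{lem:fact_about_strats}(b), reduce the theorem to the claim that $\plmax$ almost surely reaches the final set $F$ from every state of $\mod{\game}$, then take an MD optimal strategy for the single objective $\reach{F}$ there and conclude via the \refif-direction of Lemma~\ref{lem:THE_lemma}. The divergence is in how that claim is proved, and there your argument has a genuine gap which you only flag but do not close. Since $\objvec$ is absorbing and $R_{<n+1}\neq\emptyset$, every sink lies in $F$, so your trap $C_M\subseteq S\setminus F$ contains no state of any $S_j$; once $\plmin$ confines the play to $C_M$, the outcome vector is therefore \emph{forced}: probability $0$ in every reachability coordinate and probability $1$ in every safety coordinate, no matter how either player moves inside $C_M$. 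Hence the assertion that this outcome ``agrees with $\lexval(c)$ on coordinates $1,\dots,k^\star-1$'' holds at a safety coordinate $j<k^\star$ only if $\lexval_j(c)=1$, and nothing in your argument establishes this: the minimality of $k^\star$ constrains only the reachability coordinates, and the recursion into the safety-only objective on $C_M$ cannot help, because whatever strategy is ``lex-optimal in the restricted game'', the confined outcome at coordinate $j$ is $1$ regardless. If some safety coordinate $j<k^\star$ has $\lexval_j(c)<1$, the confining strategy yields a vector that is lexicographically \emph{above} $\lexval(c)$, and no contradiction with $\lexval(c)=\sup_{\maxstrat}\inf_{\minstrat}\prob^{\bothstrats}_c(\objvec)$ (equivalently, with the $\mod{\game}$-value) follows. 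This missing fact — equivalently your claim that restricting to $C_M$ preserves the lex-value — is precisely the reachability/safety interleaving phenomenon of Example~\ref{ex:safeEvil}, and proving it appears to require the same fixpoint machinery as Lemma~\ref{lem:THE_lemma} itself; acknowledging it as ``the main obstacle'' does not discharge it, so the proof is incomplete at its central step.

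For comparison, the paper avoids this entirely with a $\plmax$-side argument that never compares safety coordinates: if the value of $\reach{F}$ in $\mod{\game}$ were below $1$ somewhere, an MD optimal strategy pair for that single objective would induce a BSCC $B$ disjoint from $F$ reached with positive probability; any $t\in B$ satisfies $t\notin F$, hence $\lexval_i(t)>0$ for some $i\in R$, and since lex-values are unchanged in $\mod{\game}$ (Lemma~\ref{lem:fact_about_strats}(b)), $\plmax$ has a strategy in $\mod{\game}$ reaching $S_i\subseteq F$ with positive probability against every counter-strategy, contradicting that $t$ has value $0$ for $\reach{F}$. Your trap $C$ could play exactly the role of that BSCC, so the simplest repair is to replace your $\plmin$-side contradiction (and the induction on $n$, which does no real work since the confined outcome is forced anyway) by this argument. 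Two further minor points: the qualitative fact you invoke (failure of almost-sure reachability yields a nonempty $\plmin$ ``sure-safety'' trap in a finite turn-based SG) is correct but deserves a reference or short proof; your verification that $C_M$ is again a trap via the local-optimality equations is fine.
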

\begin{proof}[Sketch]
	We consider the subgame $\mod{\game}$ obtained by removing lex-sub-optimal actions for both players and then show that the (single-objective) value of $\reach{F}$ in $\mod{\game}$ equals $1$. 
	An optimal MD strategy for $\reach{F}$ exists \cite{Con92}; further, it is locally lex-optimal, because we are in $\mod{\game}$, and it reaches $F$ almost surely.
	Thus, it is lex-optimal for $\objvec$ by the \refif-direction of Lemma~\ref{lem:THE_lemma}.
	See \ifarxivelse{\cite[Apendix A.3]{techreport}}{Appendix \ref{app:thmMDexists}} for more details on the proof.
	\qed
\end{proof}

%Stuff that is true but probably not necessary
%Notice that with Lemma \ref{lem:THE_lemma}, it follows immediately that MD lex-optimal strategies exist when $\objvector$ has only safety objectives: Any locally lex-optimal strategy is also (globally) lex-optimal. We now show how condition \eqref{eq:THE_lemma} can be further ``condensed'' to almost sure reachability of a \emph{single} set $F$, the so-called \emphdef{Final Set}.

%Note that Lemma \ref{lem:THE_lemma} generalizes similar results that hold for single reachability/safety objectives \cite{?} \todo{TW: I'm not sure if this is actually made explicit somewhere...}. In particular, it follows immediately that MD optimal strategies of player $\plmax$ exist in the case where $\objvec$ has only safety objectives: Any locally lex-optimal strategy is already lex-optimal.

%Thus by Corollary \ref{cor:final_set_F}, the MD lex-optimal strategies for absorbing $\objvec$ are characterized as those strategies that are locally lex-optimal \emph{and} almost-surely reach the Final Set $F$ against all counter-strategies of the opponent. 
%Finally, Corollary \ref{cor:final_set_F} allows us to prove together with the known results about single-objective reachability that such strategies indeed exist:

%============================
%============================
%============================

\subsection{Algorithm for SGs with Absorbing Targets}
\label{sec:algAbsorbing}

% T: I put the last sentence from the last section here and removed the first sentence of this section, it was basically the same.
Theorem \ref{thm:md_exists} is not constructive because it relies on the values $\lexval$ without showing how to compute them. 
Computing the values and constructing an optimal strategy for $\plmax$ in the case of an absorbing lex-objective is the topic of this subsection.
%In this section, we provide an algorithm to compute the lex-values $\lexval$ and a lex-optimal MD strategy $\maxstrat$ for a given SG $\game$ and absorbing lex-objective $\objvec$.
%In order to do so, we introduce the following notion:

\begin{definition}[QRO]
	\label{def:QRO}
A \emphdef{quantified reachability objective} (QRO) is determined by a function $\quanfun \colon S' \rightarrow [0,1]$ where $S' \subseteq S$. 
For all strategies $\maxstrat$ and $\minstrat$, we define:
\[
\prob^{\bothstrats}_s(\reach{\quanfun}) = \sum_{t \in S'} \prob^{\bothstrats}_s(\until{(S \setminus S')}{t}) \cdot \quanfun(t).
\]
\end{definition}
Intuitively, a QRO generalizes its standard Boolean counterpart by additionally assigning a weight to the states in the target set $S'$.
Thus the probability of a QRO is obtained by computing the sum of the $\quanfun(t)$, $t \in S'$, weighted by the probability to avoid $S'$ until reaching $t$. 
Note that this probability does not depend on what happens after reaching $S'$; so it is unaffected by making all states in $S'$ absorbing.

In Section \ref{sec:general}, we need the dual notion of a quantified safety property, defined as $\prob^{\bothstrats}_s(\safe{\quanfun}) = 1 - \prob^{\bothstrats}_s(\reach{\quanfun})$; intuitively, this amounts to minimizing the reachability probability.

\begin{remark}
	\label{rem:qro}
	A usual reachability property $\reach{S'}$ is a special case of a quantified one with $q(s) = 1$ for all $s \in S'$. Vice versa, quantified properties can be easily reduced to usual ones defined only by the set $S'$: Convert all states $t \in S'$ into sinks, then for each such $t$ prepend a new state $t'$ with a single action $a$ and $P(t',a,t)=\quanfun(t)$ and $P(t',a,\bot)=1-\quanfun(t)$ where $\bot$ is a sink state. Finally, redirect all transitions leading into $t$ to $t'$. 
	Despite this equivalence, it turns out to be convenient and natural to use QROs.
\end{remark}

\begin{example}[QRO]
	\label{ex:qro}
	Example \ref{ex:safeEvil} illustrated that solving a safety objective after a reachability objective can lead to problems, as the optimal strategy for $\safe{S_2}$ did not use the action that actually reached $S_1$.
	In Example \ref{ex:final_set} we indicated that the final set $F = \{s,t,u,v,w\}$ has to be reached almost surely, and among those states the ones with the highest safety values should be preferred.
	This can be encoded in a QRO as follows: Compute the values for the $\safe{S_2}$ objective for the states in $F$. 
	Then construct the function $\quanfun_2 \colon F \to [0,1]$ that maps all states in $F$ to their safety value, i.e.,
	$\quanfun_2: \{s \mapsto 1,t \mapsto 0, u \mapsto 0, v \mapsto \nicefrac 1 2, w \mapsto 1\}$.
	\qee
\end{example}

Thus using QROs, we can effectively reduce (interleaved) safety objectives to quantified \emph{reachability} objectives:

\begin{lemma}[Reduction Safe $\rightarrow$ Reach]
	\label{lem:safety_reduction}
	Let $\objvec$ be an absorbing lex-objective with $\obj_n = \safe{S_n}$, $\quanfun_n \colon F \rightarrow[0,1]$ with $\quanfun_n(t) = \lexval_n(t)$ for all $t \in F$ where $F$ is the final set (Def. \ref{def:final_set}), and $\objvec'=(\obj_1,\ldots,\obj_{n-1},\reach{\quanfun_n})$. Then: $^{\objvec}\lexval~=~^{\objvec'}\lexval$.
\end{lemma}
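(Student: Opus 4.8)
The plan is to establish the identity at every state $s$ componentwise. For the first $n-1$ components there is nothing to do: $\objvec$ and $\objvec'$ share the prefix $\objvec_{<n}$, and the last component of a lex-objective only breaks ties among strategies that are already optimal for the preceding ones, so a short induction on the index (using that all probabilities lie in $[0,1]$, which makes lex-suprema of truncated value vectors well behaved) shows that for every $k<n$ and every $s$ the $k$-th component of ${}^{\objvec}\lexval(s)$ equals the $k$-th component of ${}^{\objvec'}\lexval(s)$. In particular the sets $\valzeroset_k$ for $k\in R_{<n}$, and hence the final set $F$ of Definition~\ref{def:final_set}, are the same whether computed from $\objvec$ or from $\objvec'$, and it remains only to prove ${}^{\objvec}\lexval_n(s)={}^{\objvec'}\lexval_n(s)$. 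Writing $\mathcal{A}$ for the set of strategies of $\plmax$ that are optimal for $\objvec_{<n}$, a small computation with lex-suprema also shows that both quantities have the same shape: ${}^{\objvec}\lexval_n(s)$ is the supremum over $\maxstrat\in\mathcal{A}$ of the infimum over all optimal counter-strategies $\minstrat$ of $\plmin$ for $\objvec_{<n}$ against $\maxstrat$ of $\prob^{\bothstrats}_s(\safe{S_n})$, and ${}^{\objvec'}\lexval_n(s)$ is the same expression with $\prob^{\bothstrats}_s(\reach{\quanfun_n})$ in place of $\prob^{\bothstrats}_s(\safe{S_n})$.

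The core of the proof would be an optional-stopping argument for the real-valued function $\lexval_n$. First I would record two monotonicity facts, both immediate from the Bellman equations for the lex-value (valid in the absorbing case by Theorem~\ref{thm:md_exists}), namely $\lexval(s)=\max_{a\in\Act(s)}\lexval(s,a)$ at $\plmax$-states and $\lexval(s)=\min_{a\in\Act(s)}\lexval(s,a)$ at $\plmin$-states, with extrema in $\leqlex$: along any action locally lex-optimal for $\objvec_{<n}$ the value $\lexval_n$ cannot increase at a $\plmax$-state and cannot decrease at a $\plmin$-state (such an action preserves the prefix $\lexval_{<n}$, so comparing $\lexval(s,a)$ with the extremal $\lexval(s)$ in $\leqlex$ forces the claimed inequality on the last component), and along any action locally lex-optimal for the full $\objvec$ the value $\lexval_n$ is preserved exactly. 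Second, since $\quanfun_n$ coincides with $\lexval_n$ on $F$ and $S_n\subseteq F$ (every sink belongs to some reachability target $S_k$ with $k\in R_{<n}$, or else lies in $\bigcap_{k\in R_{<n}}\valzeroset_k$), Definition~\ref{def:QRO} gives $\prob^{\bothstrats}_s(\reach{\quanfun_n})=\mathbb{E}^{\bothstrats}_s[\lexval_n(X_\tau)]$ for every profile $\bothstrats$ that reaches $F$ almost surely, where $X_k$ is the $k$-th state on the play and $\tau$ the first time it lies in $F$.

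Then I would prove the two inequalities. For ``$\le$'': fix any $\maxstrat\in\mathcal{A}$ and take $\minstrat$ to be an MD globally $\objvec$-lex-optimal strategy of $\plmin$, which exists by Theorem~\ref{thm:md_exists} and is in particular an optimal counter-strategy for $\objvec_{<n}$ against $\maxstrat$. Along $\bothstrats$ player $\plmax$ uses actions locally lex-optimal for $\objvec_{<n}$ and player $\plmin$ uses actions locally lex-optimal for $\objvec$, so by the monotonicity facts the stopped process $\lexval_n(X_{k\wedge\tau})$ is a bounded supermartingale; since $\maxstrat\in\mathcal{A}$ forces $\prob^{\bothstrats}_s(\reach{F})=1$ (Lemma~\ref{lem:THE_lemma}, recalling $F=F_{<n}$ because $\obj_n$ is a safety objective), optional stopping gives $\prob^{\bothstrats}_s(\reach{\quanfun_n})=\mathbb{E}^{\bothstrats}_s[\lexval_n(X_\tau)]\le\lexval_n(s)$, and taking the infimum over $\minstrat$ and the supremum over $\maxstrat\in\mathcal{A}$ yields ${}^{\objvec'}\lexval_n(s)\le\lexval_n(s)={}^{\objvec}\lexval_n(s)$. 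For ``$\ge$'': now take $\maxstrat$ to be an MD globally $\objvec$-lex-optimal strategy of $\plmax$ (so $\maxstrat\in\mathcal{A}$) and let $\minstrat$ be an arbitrary optimal counter-strategy for $\objvec_{<n}$ against $\maxstrat$, which then uses only actions locally lex-optimal for $\objvec_{<n}$ at the reachable states, while $\maxstrat$ uses actions locally lex-optimal for $\objvec$; now the stopped process $\lexval_n(X_{k\wedge\tau})$ is a bounded submartingale and, $\maxstrat$ again forcing $F$ to be reached almost surely, optional stopping gives $\prob^{\bothstrats}_s(\reach{\quanfun_n})\ge\lexval_n(s)$. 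As this holds for every such $\minstrat$, this $\maxstrat$ witnesses ${}^{\objvec'}\lexval_n(s)\ge\lexval_n(s)={}^{\objvec}\lexval_n(s)$, and the two inequalities together finish the proof.

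The hardest part, I expect, is the accounting in the first paragraph: showing rigorously that the last component of a lexicographic value can indeed be written as the indicated supremum of infima over the strategies optimal for the prefix, and that the optimal counter-strategies of $\plmin$ are available in the required form (MD and optimal from every reachable state) — this is exactly where Theorem~\ref{thm:md_exists} and the fact that a strategy optimal from a state stays optimal along its reachable histories come in. The monotonicity facts for $\lexval_n$ along locally lex-optimal actions also deserve care, since that is the precise point where the interaction between the reachability and safety fixpoints manifests; once these ingredients are in place, the optional-stopping steps are routine, because ``$\reach{F}$ almost surely'' is exactly the condition supplied by Lemma~\ref{lem:THE_lemma}.
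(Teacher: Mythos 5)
Your proposal is correct and shares the paper's skeleton: restrict attention to the $n$-th component (the prefix values and hence $F$ coincide for $\objvec$ and $\objvec'$), note $S_n\subseteq F$ so the QRO with $\quanfun_n=\lexval_n$ on $F$ evaluates $\lexval_n$ at the first visit to $F$, and use Lemma~\ref{lem:THE_lemma} to get that prefix-optimal play reaches $F$ almost surely. Where you diverge is the core computation. The paper fixes a single globally lex-optimal MD pair $(\maxstrat,\minstrat)$ and writes one chain of equalities: decompose $\prob_s^{\bothstrats}(\safe{S_n})$ at the first hitting of $F$, replace $\prob_t^{\bothstrats}(\safe{S_n})$ by $\lexval_n(t)$, and then assert (rather tersely, ``because $\maxstrat$ is lex-optimal for $\objvec_{<n}$'' and ``by definition'') that this weighted sum equals the sup-inf of $\prob(\reach{\quanfun_n})$ over prefix-optimal strategies, i.e.\ ${}^{\objvec'}\lexval_n(s)$. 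You instead prove the two inequalities separately by optional stopping for the process $\lexval_n(X_{k\wedge\tau})$, using that locally prefix-lex-optimal actions make $\lexval_n$ a supermartingale at $\plmax$-states and a submartingale at $\plmin$-states, with the globally $\objvec$-lex-optimal MD strategies of Theorem~\ref{thm:md_exists} as the distinguished strategies in each direction. This buys an explicit argument for the ``$\leq$'' half that the paper compresses into one annotated equality, at the cost of an extra (easy but unstated) check that the globally $\objvec$-lex-optimal $\plmin$-strategy is indeed an admissible prefix-optimal counter-strategy against an arbitrary $\maxstrat\in\mathcal{A}$. Both your proof and the paper's rest on the same delicate accounting you flag as the hardest part — that the last lex-value component can be written as a sup-inf over (MD) strategies optimal for $\objvec_{<n}$, with counter-strategies optimal along all reachable histories; the paper simply takes this for granted, so your explicit acknowledgment is, if anything, more honest about where the real work lies.
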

\begin{proof}[Sketch]
	% T: mash up of what was previously in example. Was quite general, so I thought this might actually be a good proof sketch
	By definition, $^{\objvec}\lexval(s)~=~^{\objvec'}\lexval(s)$ for all $s \in F$, so we only need to consider the states in $S \setminus F$.
	Since any lex-optimal strategy for $\objvec$ or $\objvec'$ must also be lex-optimal for $\objvec_{<n}$, we know by Lemma \ref{lem:THE_lemma} that such a strategy reaches $F_{<n}$ almost-surely.
	Note that we have $F_{<n} = F$, as the $n$-th objective, either the QRO or the safety objective, does not add any new states to $F$.
	% Intuitively, for all states that are in the final set $F$, the reachability objectives are irrelevant, as we already failed or satisfied them; hence $\quanfun_n$ weighs them with their lexicographic safety value $\lexval_n$.
	%For states in $S \setminus F$, as we consider a reachability objective with respect to $F$, we ensure that an optimal strategy for $\objvec'$ maximizes the probability to reach $F$, and thus that the problem described in Example \ref{ex:safeEvil} cannot occur.
	The reachability objective $\reach{\quanfun_n}$ weighs the states in $F$ with their lexicographic safety values $\lexval_n$.
	Thus we additionally ensure that in order to reach $F$, we use those actions that give us the best safety probability afterwards.
	In this way we obtain the correct lex-values $\lexval_n$ even for states in $S \setminus F$.
	See \ifarxivelse{\cite[Appendix A.4]{techreport}}{Appendix \ref{app:safetyRed}} for the full technical proof.
%	\todo{M: drop or improve this sketch? Right now, it's mainly the intuition conveyed before.}
%	Recall that $\reach{\quanfun_n}$ is a quantified \emph{reachability} objective whose semantics is defined as
%	\[
%	\prob_s^\bothstrats(\reach{\quanfun_n}) = \sum_{t \in F} \prob_s^\bothstrats(\until{(S \setminus F)}{t}) \cdot \lexval_n(t)
%	\]
%	for all strategies $\bothstrats$ and states $s \in S$. 
%	Since any lex-optimal strategy $\maxstrat$ for $\objvec$ or $\objvec'$ must also be lex-optimal for $\objvec_{<n}$, we know by Lemma \ref{lem:THE_lemma} that $\maxstrat$ reaches $F_{<n} = F$ almost-surely against all $\minstrat \in \minstratsmd$. 
%	Thus $\lexval_n(s)$ for $s \notin F$ equals the sum of $\lexval_n(t)$, $t \in F$, weighted with the probability of a lex-optimal strategy to reach this $t$ from $s$ without seeing a state in $F$ before.
%	Thus a lex-optimal strategy must maximize precisely the QRO $\reach{\quanfun_n}$.
	%See appendix for more technical details.%~\ref{app:safetyRed}.
	\qed
\end{proof}

%\proofsafetyreduction

\begin{example}[Reduction Safe $\rightarrow$ Reach]
	Recall Example \ref{ex:qro}. By the preceding Lemma \ref{lem:safety_reduction},
	computing $\sup_{\maxstrat} \inf_{\minstrat} \prob^{\bothstrats}_s(\reach{S_1}, \reach{\quanfun_2})$ yields the correct lex-value $\lexval(s)$ for all $s \in S$.
	%See Figure~\ref{fig:illustration_F} for an illustration of this idea.\todo{M: write more or drop; I like the figure, but it doesn't fit my flow right now}
	Consider for instance state $r$ in the running example: The action leading to $q$ is clearly suboptimal for $\reach{\quanfun_2}$ as it does not reach $F$.
	Both other actions surely reach $F$. 
	However, since $\quanfun_2(t) = \quanfun_2(u) = 0$ while $\quanfun_2(v) = \nicefrac 1 2$, the action leading to $u$ and $v$ is preferred over that leading to $t$ and $u$, as it ensures the higher safety probability after reaching $F$.
	\qee
\end{example}

\begin{algorithm}[h]
	\caption{Solve absorbing lex-objective}
	\label{alg:absorbing}
	\begin{algorithmic}[1]
		\Require{ SG $\game$, absorbing lex-objective $\objvec = (\obj_1, \dots, \obj_n)$}
		\Ensure{Vector of lex-values $\lexval$, MD lex-optimal strategy $\maxstrat$ for $\plmax$}
		\Procedure{$\algabsorbing$}{$\game,\objvec$}
		\State initialize $\lexval$ and $\maxstrat$ arbitrarily
		\State $\mod{\game} \gets \game$ \Comment{Consider whole game in the beginning.}
		
		\medskip
		
		\For{$1 \leq i \leq n$}
		\State $(\val, \mod{\maxstrat}) \gets \algsingleobj(\mod{\game},\obj_i)$ \label{line:startSingleObj}
		\If{$\obj_i = \safe{S_i}$} \label{line:ifSafe}
		\State $F_{<i} \gets$ final set with respect to $\mod{\game}$ and $\objvec_{<i}$ \Comment{see Def. \ref{def:final_set}}
		\State $\quanfun_i(s) \gets \val(s)$ for all $s \in F_{<i}$ \Comment{see Def. \ref{def:QRO}} 
		\State $(\val, \maxstrat_{Q}) \gets \algsingleobj(\mod{\game},\reach{\quanfun_i})$ \label{line:qro}
		\EndIf \label{line:endSingleObj}
		
		\medskip
		 
		\State $\mod{\game} \gets $ restriction of $\mod{\game}$ to optimal actions w.r.t. $\val$ \label{line:restrict}
		
		\medskip
		
		\State $\lexval_i \gets \val$ \label{line:startResult}
		\For {$s \in S$}
		\If {($\obj_i = \reach{S_i}$ and $\val(s) > 0$) or ($\obj_i = \safe{S_i}$ and $s \in F_{<i}$)}
			\State $\maxstrat(s) \gets \mod{\maxstrat}(s)$ \Comment{Strategy improvement}
		\Else {~\textbf{if} $\obj_i = \safe{S_i}$ and $s \notin F_{<i}$}
			\State $\maxstrat(s) \gets \maxstrat_{Q}(s)$
		\EndIf
		\EndFor \label{line:endResult}
		
		\EndFor
		
		\Return $(\lexval, \maxstrat)$
		\EndProcedure
	\end{algorithmic}
\end{algorithm}

We now explain the basic structure of Algorithm \ref{alg:absorbing}. 
More technical details are explained in the proof sketch of Theorem \ref{thm:alg_correct} and the full proof is in \ifarxivelse{\cite[Appendix A.5]{techreport}}{Appendix \ref{app:algAbsCorr}}.
The idea of Algorithm \ref{alg:absorbing} is, as sketched in Section~\ref{sec:MD}, to consider the objectives sequentially in the order of importance, i.e., starting with $\obj_1$.
The $i$-th objective is solved (Lines \ref{line:startSingleObj}-\ref{line:endSingleObj}) and the game is restricted to only the locally optimal actions (Line \ref{line:restrict}).
This way, in the $i$-th iteration of the main loop, only actions that are locally lex-optimal for objectives 1 through $(i{-}1)$ are considered.
Finally, we construct the optimal strategy and update the result variables (Lines~\ref{line:startResult}-\ref{line:endResult}).

\begin{theorem}
	\label{thm:alg_correct}
	Given an SG $\game$ and an absorbing lex-objective $\objvec = (\obj_1, \dots, \obj_n)$, 
	Algorithm \ref{alg:absorbing} correctly computes the vector of lex-values $\lexval$ and an MD lex-optimal strategy $\maxstrat$ for player $\plmax$.
	It needs $n$ calls to a single objective solver.
\end{theorem}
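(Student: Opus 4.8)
The plan is to prove correctness by induction on the number $n$ of objectives, showing that after the $i$-th iteration of the main loop the following invariant holds: (1) $\lexval_{<i+1}$ stores the correct lex-values $^{\objvec}\lexval_j$ for all $j \le i$ (equivalently, $^{\objvec_{\le i}}\lexval$, using that the lex-value of the first $i$ components depends only on $\obj_1,\ldots,\obj_i$); (2) $\mod{\game}$ is exactly the restriction of the original game to actions that are locally lex-optimal for $\objvec_{\le i}$; and (3) the partially constructed strategy $\maxstrat$, restricted to the states already handled, is locally lex-optimal and satisfies the reachability condition $(\star)$ of Lemma \ref{lem:THE_lemma} relative to the objectives processed so far. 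Since the algorithm makes exactly one call to $\algsingleobj$ per iteration (plus one extra call for the QRO in the safety case, which I would note is still ``a single objective solver'' call; the phrasing ``$n$ calls'' should be read as $n$ iterations each reducing to single-objective computation — I would make this precise in the statement or remark that safety iterations use two such calls), the claim about the number of calls follows from the loop structure.

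First I would establish the base case and the reachability step: if $\obj_i = \reach{S_i}$, then by the invariant $\mod{\game}$ is already restricted to locally lex-optimal actions for the previous objectives, so by Lemma \ref{lem:fact_about_strats}\ref{facts:c} the lex-value in $\mod{\game}$ equals the lex-value in $\game$; computing the single-objective optimal value of $\reach{S_i}$ in $\mod{\game}$ via $\algsingleobj$ \cite{Con92} therefore yields $^{\objvec}\lexval_i$, and restricting to its optimal actions maintains invariant (2). For the strategy, an MD optimal strategy for $\reach{S_i}$ in the restricted game reaches $S_i$ with the optimal probability and, on the states with positive value, is exactly the $(\star)$-witness needed; this is why the algorithm copies $\mod{\maxstrat}(s)$ precisely when $\val(s) > 0$.

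The harder step — and the main obstacle — is the safety case $\obj_i = \safe{S_i}$, where naively optimizing $\safe{S_i}$ on the locally-lex-optimal subgame is wrong, as Example \ref{ex:safeEvil} shows: the safety-optimal strategy may refuse to ``try everything'' for the earlier reachability objectives. Here I would invoke Lemma \ref{lem:safety_reduction}: computing $\val$ as the single-objective value of $\safe{S_i}$ on $F_{<i}$, forming the QRO $\quanfun_i$ with $\quanfun_i(t) = \val(t)$ for $t \in F_{<i}$, and then solving $\reach{\quanfun_i}$ on $\mod{\game}$ yields exactly $^{\objvec}\lexval_i$ on all of $S$, because any strategy optimal for $\objvec_{<i}$ reaches $F_{<i} = F_{<i+1}$ almost surely (Lemma \ref{lem:THE_lemma}) and the QRO forces it to do so through states maximizing the subsequent safety value. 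I must also check that restricting $\mod{\game}$ to optimal actions for the QRO is the correct update of invariant (2): an action optimal for $\reach{\quanfun_i}$ is locally lex-optimal for $\obj_i$, and conversely, which follows since the QRO value coincides with the true lex-value $\lexval_i$ and a locally lex-optimal action for $\obj_i$ is one attaining $\max$/$\min$ of $\sum_{s'} P(s,a,s')\lexval_i(s')$. Finally, I would assemble the full strategy: the last iteration's $\mod{\game}$ contains only globally lex-optimal actions by the invariant, and the constructed $\maxstrat$ is locally lex-optimal everywhere and satisfies $(\star)$ for $F = F_{<n+1}$ by construction (reachability states and QRO states both contribute their $(\star)$-witnesses), so Lemma \ref{lem:THE_lemma} gives that $\maxstrat$ is lex-optimal for $\objvec$. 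Full technical details are deferred to \ifarxivelse{\cite[Appendix A.5]{techreport}}{Appendix \ref{app:algAbsCorr}}.
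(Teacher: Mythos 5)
Your plan matches the paper's proof essentially step for step: induction on $n$ with the invariant that $\mod{\game}$ is the restriction to locally lex-optimal actions for the objectives processed so far, the reachability case handled by the single-objective solver while keeping the old strategy on zero-value states so that condition $(\star)$ of Lemma~\ref{lem:THE_lemma} is preserved for the earlier objectives, and the safety case handled via the final set and the QRO of Lemma~\ref{lem:safety_reduction}, with Lemma~\ref{lem:THE_lemma} certifying lex-optimality of the assembled strategy. The only step the paper spells out that you pass over quickly is the verification that the plain safety value $\val$ computed in $\mod{\game}$ actually coincides with $\lexval_i$ on $F_{<i}$ (the hypothesis needed to invoke Lemma~\ref{lem:safety_reduction}), which the paper proves by a short two-inequality argument using local lex-optimality of $\mod{\maxstrat}$ and the fact that $(\star)$ holds trivially from states in $F_{<i}$; your observation that safety iterations cost an extra solver call is a fair reading of the ``$n$ calls'' claim.
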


\begin{proof}[Sketch]
	%We now explain the intuition of the algorithm and highlight some interesting details. See Appendix \ref{app:algAbsCorr} for the formal proof of correctness.
	%We only highlight the most important aspects of the algorithm, see the appendix for more formal details.
\begin{itemize}
	\item \textbf{$\mod{\game}$-invariant:} For $i>1$, in the $i$-th iteration of the loop, $\mod{\game}$ is the original SG restricted to only those actions that are locally lex-optimal for the targets 1 to $(i{-}1)$; this is the case because Line \ref{line:restrict} was executed for all previous targets.
	\item \textbf{Single-objective case:} The single-objective that is solved in Line~\ref{line:startSingleObj} can be either reachability or safety. 
	We can use any (precise) single-objective solver as a black box, e.g. strategy iteration~\cite{HK66}. Recall that by Remark \ref{rem:qro}, it is no problem to call a single-objective solver with a QRO since there is a trivial reduction.
	\item \textbf{QRO for safety:} If an objective is of type reachability, no further steps need to be taken; 
	if on the other hand it is safety, we need to ensure that the problem explained in Example \ref{ex:safeEvil} does not occur. 
	Thus we compute the final set $F_{<i}$ for the $i$-th target and then construct and solve the QRO as in Lemma~\ref{lem:safety_reduction}.
	%This means we need to ensure that the resulting strategy tries to achieve every previous (more important) reachability objective before using the safety strategy.
	%Thus, we compute the final set $F$ with respect to $\mod{\game}$ all previous objective $\objvec_{<i}$, and then set up a QRO as follows:
	%the target set is $F$ and the values assigned to the targets are the safety values that we computed in Line \ref{line:startSingleObj}.
	%By solving the QRO, we ensure that a strategy can only have a positive value if it first reaches $F$, i.e. tries to achieve all previous reachability objectives. 
	%By weighing the states in $F$ with their safety value, we ensure that the strategy prefers those states in $F$ where it has a higher chance of satisfying the safety objective.
	%Note that, as we work on $\mod{\game}$, it is irrelevant which state in $F$ we try to reach, as every strategy reaching them can only use lex-optimal actions for the previous objectives by the $\mod{\game}$-invariant.\todo{T: this point essentially repeats what was previously said. we could just reference Lemma 3 here}
	\item \textbf{Resulting strategy:} When storing the resulting strategy, we again need to avoid errors induced by the fact that locally lex-optimal actions need not be globally lex-optimal.
	This is why for a reachability objective, we only update the strategy in states that have a positive value for the current objective; if the value is 0, the current strategy does not have any preference, and we need to keep the old strategy.
%	For safety objectives, we need to update the strategy in two ways:
%	for all states in the final set $F_{<i}$ we set it to the safety strategy $\mod{\maxstrat}$ from Line~\ref{line:startSingleObj}.
%	For all states in $S \setminus F_{<i}$, we set it to the reachability strategy from the QRO $\maxstrat_Q$ from Line~\ref{line:qro}.
	For safety objectives, we need to update the strategy in two ways: for 
	all states in the final set $F_{<i}$, we set it to the safety strategy 
	$\mod{\maxstrat}$ (from Line~\ref{line:startSingleObj}) as within $F_{<i}$ we do not have to consider 
	the previous reachability objectives and therefore must follow an 
	optimal safety strategy. For all states in $S \setminus F_{<i}$, we set it to the 
	reachability strategy from the QRO $\maxstrat_Q$ (from Line\ref{line:qro}). This is 
	correct, as $\maxstrat_Q$ ensures almost-sure reachability of $F_{<i}$ which is 
	necessary to satisfy all preceding reachability objectives; moreover 
	$\maxstrat_Q$ prefers those states in $F_{<i}$ that have a higher safety value 
	(cf. Lemma~\ref{lem:safety_reduction}).
	
	\item \textbf{Termination:} The main loop of the algorithm invokes $\algsingleobj$ for each of the $n$ objectives.
\end{itemize}
\qed
\end{proof}

%\subsubsection{Proof of correctness}
%\begin{proof}[of Theorem \ref{thm:alg_correct} -- Sketch]
%	\todo{reread; also probably refer to the description of algo, as that intuitively justifies many of the ideas.}
%	In the case $\obj_n = \reach{S_n}$ (line \ref{line:ifreach}), one can prove using Lemma \ref{lem:THE_lemma} that the MD strategy $\maxstrat$ which is constructed as a modification of $\maxstrat_{<n}$ in lines \ref{line:construct_strat_start} - \ref{line:construct_strat_end} remains lex-optimal for the first $n-1$ objectives $\objvec_{<n}$. Then we argue that the values $\mod{\val}$ are indeed equal to $\lexval_n$, the $n$-th entries of the lex-values. The correctness of the treatment of the other case $\obj_n = \safe{S_n}$ (line \ref{line:ifsafe}) follows from Lemma \ref{lem:safety_reduction}. The full proof is given in the appendix.
%	\qed
%\end{proof}

%\paragraph{Remark} We explicitly allowed quantified reachability objectives $\reach{q}$ in the input to $\algabsorbing$. It is not difficult to see that the algorithm can actually also handle quantified \emph{safety} objectives $\safe{q}$ since they can be reduced to a standard safety objective $\safe{S_q}$ is the same way as reachability objectives. This feature is needed later in Section \ref{sec:non_absorbing}

\section{General Lexicographic SGs}
\label{sec:general}

We now consider $\objvec$ where $S_i \subseteq \sinks(\game)$ does \emph{not} necessarily hold.
Section \ref{sec:non_absorbing} describes how we can reduce these general lex-objectives to the absorbing case. 
The resulting algorithm is given in Section \ref{sec:algGeneral} and the theoretical implications in Section \ref{sec:theorImpl}.

\subsection{Reducing General Lexicographic SGs to SGs with Absorbing Targets}
\label{sec:non_absorbing}

In general lexicographic SG, strategies need memory, because they need to remember which of the $S_i$ have already been visited and behave accordingly. 
We formalize the solution of such games by means of \emphdef{stages}. 
Intuitively, one can think of a stage as a copy of the game with less objectives, 
or as the sub-game that is played after visiting some previously unseen set $S_i$. 
\begin{definition}[Stage]
Given an arbitrary lex-objective $\objvec = (\objvec_1, \dots, \objvec_n)$ and a set $I \subseteq \{i \leq n\}$, a \emphdef{stage} $\objvec(I)$ is the objective vector where the objectives $\objvec_i$ are removed for all $i \in I$.

For state $s \in S$, let $\objvec(s)=\objvec(\{i \mid s \in S_i\})$.
If a stage contains only one objective, we call it \emph{simple}.
\end{definition}

%A stage is uniquely determined by a sub-objective $\objvec'$ which only contains the $\obj_i$ for which $S_i$ has not yet been visited. 

%Formally, for all $j\leq n$ let $\objvec_{j}$ be the new stage that begins upon reaching $s \in S_j$.
%It is obtained by removing the $j$-th objective from $\objvec$.
%\todo{trivial stage was earlied defined to only consist of sinks, and the absorbing case was called trivial. However, stage is a vector of objectives, so this does not fit together. The earlier definition implicitly also used the starting state s, but this would make the algorithm more complex. Hence I chose to define it a bit differently (now called simple), but in the spirit of the algo.}
%A stage is \emphdef{trivial} if it only consists of a sink or if all $n$ sets $S_i$ have been visited. Notice that $\game$ can be decomposed into at most $2^n - 1$ non-trivial stages and that for absorbing $\objvec$, there is just one non-trivial stage (which is the reason why we started with it).

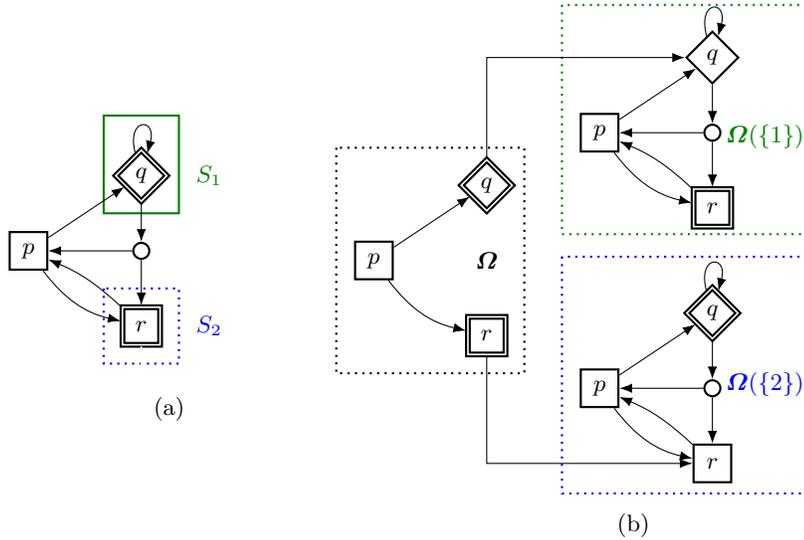
\begin{figure}[t]
	\begin{subfigure}{0.35\textwidth}
		\begin{tikzpicture}
		\node[max] (s0) at (-0.5,0) {$p$};
		\node[prob] (p) at (1,0){};
		\node[min,target] (s1) at (1,1){$q$};
		\node[max,target] (s2) at (1,-1) {$r$};
		
		\draw[green!50!black,thick] ($(s1)+(-0.5,0.8)$)  rectangle ($(s1)+(0.5,-0.5)$);
		\draw[blue,thick,dotted]     ($(s2)+(-0.5,0.5)$) rectangle ($(s2)+(0.5,-0.5)$);
		\node (t1) at (1.9,1) {\textcolor{green!50!black}{$S_1$}};
		\node (t2) at (1.9,-1) {\textcolor{blue}{$S_2$}};

		\draw[trans] (s0) -- (s1);
		\draw[trans] (s0) edge[bend right=20] (s2);
		\draw[trans] (s1) edge[loop above] (s1);
		\draw[trans] (s1) -- (p);
		\draw[trans] (s2) edge[bend right=10] (s0);
		\draw[trans] (p) -- (s0);
		\draw[trans] (p) -- (s2);
		
		\draw[white] (s2) -- ++(0,-7mm) -| (s2); % this is a space hack so that subfigs align
		\end{tikzpicture}
		\subcaption{}
		\label{fig:unfoldingA}
	\end{subfigure}
	\begin{subfigure}{0.65\textwidth}
		\begin{tikzpicture}
		
		\draw[black,thick,dotted]     (-1,1.5) rectangle (1.5,-1.5);
		\draw[green!50!black,thick,dotted]     (2,3.4) rectangle (5.3,0.35);
		\draw[blue,thick,dotted]     (2,0.05) rectangle (5.3,-3.1);
		\node (o) at (1,0) {$\objvec$};
		\node (o1) at (4.7,1.65) {\textcolor{green!50!black}{$\objvec(\{1\})$}};
		\node (o1) at (4.7,-1.65) {\textcolor{blue}{$\objvec(\{2\})$}};

		\node[max] (s0) at (-0.5,0) {$p$};
		%\node[prob] (p) at (1,0){};
		\node[min,target] (s1) at (1,1){$q$};
		\node[max,target] (s2) at (1,-1) {$r$};
		
		%+3,+1.7
		\node[max] (s0a) at (2.5,1.7) {$p$};
		\node[prob] (pa) at (4,1.7){};
		\node[min] (s1a) at (4,2.7){$q$};
		\node[max,target] (s2a) at (4,0.7) {$r$};
		
		%+3,-1.7
		\node[max] (s0b) at (2.5,-1.7) {$p$};
		\node[prob] (pb) at (4,-1.7){};
		\node[min,target] (s1b) at (4,-0.7){$q$};
		\node[max] (s2b) at (4,-2.7) {$r$};
		
		\draw[trans] (s0) -- (s1);
		\draw[trans] (s0) edge[bend right=20] (s2);
		\draw[trans] (s1) |- (s1a);
		\draw[trans] (s2) |- (s2b);
		%\draw[trans] (p) -- (s0);
		%\draw[trans] (p) -- (s2);
		
		\draw[trans] (s0a) -- (s1a);
		\draw[trans] (s0a) edge[bend right=20] (s2a);
		\draw[trans] (s1a) edge[loop above] (s1a);
		\draw[trans] (s1a) -- (pa);
		\draw[trans] (s2a) edge[bend right=10] (s0a);
		\draw[trans] (pa) -- (s0a);
		\draw[trans] (pa) -- (s2a);
		
		\draw[trans] (s0b) -- (s1b);
		\draw[trans] (s0b) edge[bend right=20] (s2b);
		\draw[trans] (s1b) edge[loop above] (s1b);
		\draw[trans] (s1b) -- (pb);
		\draw[trans] (s2b) edge[bend right=10] (s0b);
		\draw[trans] (pb) -- (s0b);
		\draw[trans] (pb) -- (s2b);

		\end{tikzpicture}
		\subcaption{}
		\label{fig:unfoldingB}
	\end{subfigure}
	\caption{(a) SG with non-absorbing lex-objective $\objvec = (\reach{S_1}, \reach{S_2})$. (b) The three stages identified by the sub-objectives $\objvec$, $\objvec(\{1\}) = (\reach{S_2})$ and $\objvec(\{2\}) = (\reach{S_1})$. The two stages on the right are both \emph{simple}.}
	\label{fig:unfolding}
\end{figure}

\begin{example}[Stages]
	Consider the SG in Figure \ref{fig:unfoldingA}.
	As there are two objectives, there are four possible stages: The one where we consider both objectives (the region denoted with $\objvec$ in Figure \ref{fig:unfoldingB}), the \emph{simple} ones where we consider only one of the objectives (regions $\objvec(\{1\})$ and $\objvec(\{2\})$), and the one where both objectives have been visited. 
	The last stage is trivial since there are no more objectives, hence we do not depict it and do not have to consider it.
	The actions of $q$ and $r$ are omitted in the $\objvec$-stage, as upon visiting these states, a new stage begins.
	
	Consider the simple stages: in stage $\objvec(\{1\})$, $q$ has value 0, as it is a $\plmin$-state and will use the self-loop to avoid reaching $r \in S_2$.
	In stage $\objvec(\{2\})$, both $p$ and $r$ have value 1, as they can just go to the target state $q \in S_1$.
	Combining this knowledge, we can get an optimal strategy for every state.
	In particular, note that an optimal strategy for state $p$ needs memory: First go to $r$ and thereby reach stage $\objvec(\{2\})$. Afterwards, go from $r$ to $p$ and now, on the second visit in a different stage, use the other action in $p$ to reach $q$.
	In this example, we observe another interesting fact about lexicographic games: it can be optimal to first satisfy less important objectives. \qee
\end{example}
%Consider a stage identified with lex-objective $\objvec = (\obj'_1,\ldots,\obj'_m)$, $0 < m \leq n$, and

In the example, we combined our knowledge of the sub-stages to find the lex-values for the whole lex-objective.
In general, the values for the stages are numbers in $[0,1]$.
Thus we reuse the idea of \emph{quantified} reachability and safety objectives, see Definition \ref{def:QRO}.

For all $1 \leq i \leq n$, let $\quanfun_i \colon \bigcup_{j\leq n} S_j \to [0,1]$ by defined by:
\[
\quanfun_i(s) = \begin{cases}
1 &$if $ s \in S_i $ and else:$\\
\phantom{1 - .} ^{\objvec(s)}\lexval_{i}(s) &$if $ \obj_i $ is reachability$ \\
1 -\ ^{\objvec(s)}\lexval_{i}(s) & $if $ \obj_i $ is safety.$
\end{cases}
\]
To keep the correct type of every objective, we let 
$\qobjvec = (\type_1(\quanfun_1),\ldots,\type_n(\quanfun_n))$ where for all $1 \leq i \leq n$, $\type_i = \mathsf{Reach}$ if $\obj_i=\reach{S_i}$ and else $\type_i = \mathsf{Safe}$ if $\obj_i = \safe{S_i}$.
So we have now reduced a general lexicographic objective $\objvec$ to a vector of quantitative objectives $\qobjvec$.
Lemma \ref{lem:non_absorbing_eq} shows that this reduction preserves the values.
\begin{lemma}
	\label{lem:non_absorbing_eq}
	For arbitrary lex-objectives $\objvec$ it holds that $^{\objvec}\lexval =\ ^{\qobjvec}\lexval$.
\end{lemma}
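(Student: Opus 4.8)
I would prove the equality by induction on the number $n$ of objectives, carrying as an auxiliary hypothesis that finite-memory lex-optimal strategies exist for both players for every general lex-objective with fewer than $n$ objectives (at lower levels this follows from the lemma itself together with Remark~\ref{rem:qro} and Theorem~\ref{thm:md_exists}). The base case $n\le 1$ is immediate: then $\bigcup_j S_j = S_1$, every stage $\objvec(t)$ with $t\in S_1$ is empty, so $\quanfun_1\equiv 1$ on $S_1$ and $\qobjvec=\objvec$ up to the trivial encoding of Remark~\ref{rem:qro}; and a single reachability or safety objective has optimal MD strategies by~\cite{Con92}.

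For the step, fix a state $s$ and let $S'=\bigcup_{j\le n}S_j$, which is exactly the domain of the $\quanfun_i$. The core tool is the first-visit decomposition of the path measure (law of total probability along the first entry of $S'$, cf.~\cite{BK08}): a path either never enters $S'$, in which case $\obj_i$ evaluates deterministically to $0$ if it is reachability and to $1$ if it is safety, or it enters $S'$ for the first time in a state $t$ after a prefix $\rho$ staying in $S\setminus S'$. From $t$ on, every $\obj_i$ with $t\in S_i$ is \emph{settled} ($1$ if reachability, $0$ if safety), and the surviving objectives are exactly those of the stage $\objvec(t)$. Unfolding the definitions of $\quanfun_i$ and $\type_i$ shows that the per-$t$ ``value of having reached $t$'' is precisely the vector $\mathbf{w}(t)\in[0,1]^n$ with $\mathbf{w}(t)_i$ equal to the settled value when $t\in S_i$ and to $^{\objvec(t)}\lexval_i(t)$ otherwise, and that $\mathbf{w}(t)$ is also exactly the per-$t$ contribution to $\prob^{\cdot,\cdot}_s(\qobjvec)$. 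Two elementary $\leqlex$-facts are used throughout: lex-$\sup$/$\inf$ distribute over convex combinations with a fixed offset and convex combinations are $\leqlex$-monotone; and if two vectors agree on a set of coordinates while on the complement the restriction of one is $\geqlex$ (resp.\ $\leqlex$) that of the other, then the same relation holds for the full vectors.

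\emph{Direction $^{\objvec}\lexval(s)\leqlex{}^{\qobjvec}\lexval(s)$} (no induction hypothesis needed). Fix any $\maxstrat$ and let $\hat\maxstrat$ be its behaviour up to the first visit of $S'$; this is a strategy for the $\qobjvec$-game, since QRO-values ignore everything after $S'$. A Minimizer strategy is determined by its navigation before $S'$ and its independent continuations $\minstrat_\rho$ after each first-hitting prefix $\rho$ (ending in $t(\rho)$, say, with continuation $\maxstrat_\rho$ of $\maxstrat$); pushing $\inf_\minstrat$ through the decomposition gives
\[
\inf_{\minstrat}\prob^{\maxstrat,\minstrat}_s(\objvec)=\inf_{\mathrm{nav}}\Big[\textstyle\sum_\rho \prob_s(\rho)\cdot\big(\inf_{\minstrat'}\prob^{\maxstrat_\rho,\minstrat'}_{t(\rho)}(\objvec(t(\rho)))\big)^{\mathrm{emb}}+(\text{never-}S')\Big],
\]
where $(\cdot)^{\mathrm{emb}}$ places the $\objvec(t)$-vector into the surviving coordinates and the settled values elsewhere. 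Since $\inf_{\minstrat'}\prob^{\maxstrat_\rho,\minstrat'}_{t}(\objvec(t))\leqlex{}^{\objvec(t)}\lexval(t)$ straight from the definition of the lex-value, the second $\leqlex$-fact makes the embedded vector $\leqlex\mathbf{w}(t)$, so the right-hand side is $\leqlex\inf_{\mathrm{nav}}\prob^{\hat\maxstrat,\mathrm{nav}}_s(\qobjvec)=\inf_\minstrat\prob^{\hat\maxstrat,\minstrat}_s(\qobjvec)\leqlex{}^{\qobjvec}\lexval(s)$; taking $\sup_\maxstrat$ proves the inequality.

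\emph{Direction $^{\objvec}\lexval(s)\geqlex{}^{\qobjvec}\lexval(s)$.} By Remark~\ref{rem:qro} the objective $\qobjvec$ reduces to an absorbing ordinary reachability-safety lex-objective on an enlarged game, where Theorem~\ref{thm:md_exists} yields an MD lex-optimal Maximizer strategy; translated back, we get a finite-memory lex-optimal $\maxstrat_Q$ for $\qobjvec$ on $\game$. Let $\bar\maxstrat$ play $\maxstrat_Q$ until the first visit of $S'$ and then, upon entering $S'$ in a state $t$, switch to a lex-optimal strategy $\maxstrat_t^{\ast}$ for the stage $\objvec(t)$, which exists by the induction hypothesis ($\objvec(t)$ has fewer than $n$ objectives); $\bar\maxstrat$ is finite-memory. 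For every $\minstrat$, the decomposition, $\prob^{\maxstrat_t^{\ast},\minstrat_\rho}_t(\objvec(t))\geqlex{}^{\objvec(t)}\lexval(t)$ and the second $\leqlex$-fact give $\prob^{\bar\maxstrat,\minstrat}_s(\objvec)\geqlex\sum_\rho\prob_s(\rho)\,\mathbf{w}(t(\rho))+(\text{never-}S')=\prob^{\maxstrat_Q,\minstrat'}_s(\qobjvec)\geqlex{}^{\qobjvec}\lexval(s)$, where $\minstrat'$ is the navigation of $\minstrat$ and we use that $\bar\maxstrat$ agrees with $\maxstrat_Q$ before $S'$ and that $\maxstrat_Q$ is optimal. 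Hence $\inf_\minstrat\prob^{\bar\maxstrat,\minstrat}_s(\objvec)\geqlex{}^{\qobjvec}\lexval(s)$ at every $s$, finishing the equality and also exhibiting $\bar\maxstrat$ as a finite-memory lex-optimal strategy for $\objvec$, which re-establishes the auxiliary hypothesis at level $n$; the Minimizer side is symmetric since swapping the two players exchanges reachability with safety. I expect the main obstacle to be exactly this lexicographic bookkeeping — the two $\leqlex$-facts, and above all getting the induction structure right: $\epsilon$-optimal strategies are useless here because $\leqlex$-closeness does not propagate across coordinates, so the $\geqlex$-direction must invoke genuinely optimal strategies for strictly smaller stages, which forces the induction on the number of objectives and requires the $\leqlex$-direction to be argued with no induction hypothesis.
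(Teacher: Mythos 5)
Your overall route is the same as the paper's: induction on the number of objectives combined with a first-visit decomposition at $\bigcup_{j\le n}S_j$, identifying the per-first-hit contribution with $\quanfun_i$. Your $\geqlex$ direction (compose a $\qobjvec$-optimal strategy obtained from Remark~\ref{rem:qro} and Theorem~\ref{thm:md_exists} with stage-optimal strategies from the induction hypothesis) is sound and in fact mirrors what Algorithm~\ref{alg:general} does. The gap is in the $\leqlex$ direction, precisely at the step ``pushing $\inf_\minstrat$ through the decomposition''. Termwise, each continuation value is $\geqlex$ its lex-infimum, so your convex-combination monotonicity only shows that the bracketed expression is a lex \emph{lower} bound on $\inf_\minstrat\prob_s^{\maxstrat,\minstrat}(\objvec)$; the half you actually need ($\leqlex$) would require $\plmin$ to attain, or simultaneously approach, all per-prefix infima, and in the lexicographic order this can genuinely fail because those infima need not be attained: against a fixed history-dependent continuation $\maxstrat_\rho$ the stage problem is an infinite-state MDP, and in coordinates where $\plmin$ effectively maximizes a reachability probability only $\epsilon$-optimal strategies may exist. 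Concretely, if one prefix (probability $\nicefrac{1}{2}$) leads to achievable continuation vectors $(\nicefrac{1}{k},0)$, $k\ge1$, whose lex-infimum is $(0,1)$ but is not attained, and another prefix (probability $\nicefrac{1}{2}$) yields exactly $(0,0)$, then the left-hand side of your claimed identity is $(0,1)$ while the right-hand side is $(0,\nicefrac{1}{2})$, i.e.\ the identity is false and your chain collapses. Your justification ``straight from the definition of the lex-value'' hides exactly this attainment issue, and your closing claim that the $\leqlex$ direction needs no induction hypothesis is therefore not tenable: $\epsilon$-arguments are unavailable, so this direction also needs genuinely optimal strategies.

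The natural repair stays within your framework: argue $\leqlex$ symmetrically to your $\geqlex$ direction, i.e.\ fix a lex-optimal \emph{Minimizer} strategy for $\qobjvec$ (Theorem~\ref{thm:md_exists} after swapping reachability and safety, via Remark~\ref{rem:qro}) and, upon first entering $\bigcup_j S_j$ at $t$, switch to a lex-optimal Minimizer strategy for the stage $\objvec(t)$ supplied by the induction hypothesis; determinacy at the absorbing and lower levels then turns ``optimal against every $\maxstrat$'' into the bound $\sup_\maxstrat\inf_\minstrat\prob_s^{\maxstrat,\minstrat}(\objvec)\leqlex{}^{\qobjvec}\lexval(s)$. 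The paper sidesteps the vector-valued infimum altogether: it fixes the coordinate $i$, works with the scalar $\prob_s^{\bothstrats}(\obj_i)$, and takes $\sup$/$\inf$ only over strategies that are lex-optimal for $\objvec_{<i}$, for which exchanging the infimum with the countable first-visit sum is standard. Either of these devices closes the hole; as written, your $\leqlex$ direction does not.
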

\begin{proof}[Sketch]
	We write $\alltargets = \bigcup_{j\leq n} S_j$ for the sake of readability in this sketch.
	By induction on the length $n$ of the lex-objective $\objvec$, it is easy to show that the equation holds in states $s \in \alltargets$, i.e., $^{\objvec}\lexval(s) =\ ^{\qobjvec}\lexval(s)$.
	For a state $s$ which is not contained in any of the $S_j$, and for any strategies $\bothstrats$ we have the following equation 
	\begin{align*}
	\prob^\bothstrats_s(\reach{S_i}) &= \sum_{\path t \in Paths_{fin}(\alltargets)} \prob_s^\bothstrats(\path t) \cdot \prob_{\path t}^{\bothstrats}(\reach{S_i})
	\end{align*}
	where $Paths_{fin}(\alltargets) = \{\path t \in ((S \setminus \alltargets)\times \actlabels)^* \times S \mid  t \in \alltargets\}$ denotes the set of all finite paths to a state in $\alltargets$ in the Markov chain $\game^\bothstrats$ and $\prob_s^\bothstrats(\path t)$ is the probability of such a path when $\game^\bothstrats$ starts in $s$. From this we deduce that in order to maximize the left hand size of the equation in the lexicographic order, we should play such that we prefer reaching states in $\alltargets$ where $\quanfun_i$ has a higher value; that is, we should maximize the QRO $\reach{\quanfun_i}$. The argument for safety is similar and detailed in \ifarxivelse{\cite[Appendix A.6]{techreport}}{Appendix \ref{app:redGenAbsorb}}.
	\qed
\end{proof}

The functions $\quanfun_i$ involved in $\qobjvec$ \emph{all have the same domain} $\bigcup_{j\leq n} S_j$.
%Thus, upon reaching any target state, the quantitative objective returns the lex-value for the whole objective.
Hence we can, as mentioned below Definition \ref{def:QRO}, consider $\qobjvec$ on the game where all states in $\bigcup_{j\leq n} S_j$ are sinks without changing the lex-value.
This is precisely the definition of an absorbing game, and hence we can compute $^{\qobjvec}\lexval$ using Algorithm \ref{alg:absorbing} from Section \ref{sec:algAbsorbing}.

\subsection{Algorithm for General SG}
\label{sec:algGeneral}

Algorithm \ref{alg:general} computes the lex-value $^{\objvec}\lexval$ for a given lexicographic objective $\objvec$ and an arbitrary SG $\game$. We highlight the following technical details:
\begin{itemize}
\item \textbf{Reduction to absorbing case:} We just have seen, that once we have the quantitative objective vector $\qobjvec$, we can use the algorithm for absorbing SG (Line \ref{line:callAbsorb}).
\item \textbf{Computing the quantitative objective vector:} To compute $\qobjvec$, the algorithm calls itself recursively on all states in the union of all target sets (Line \ref{line:recurseStart}-\ref{line:recurseEnd}). We annotated this recursive call ``With dynamic programming'', as we can reuse the results of the computations. In the worst case, we have to solve all $2^n - 1$ possible non-empty stages.
Finally, given the values $^{\objvec(s)}\lexval$ for all $s \in \bigcup_{j\leq n} S_j$, we can construct the quantitative objective (Line \ref{line:quanFun} and \ref{line:quanObj}) that is used for the call to $\algabsorbing$.
\item \textbf{Termination:} Since there are finitely many objectives in $\objvec$ and in every recursive call at least one objective is removed from consideration, eventually we have a \emph{simple} objective that can be solved by $\algsingleobj$~(Line~\ref{line:simple}). 
\item \textbf{Resulting strategy:} The resulting strategy is composed in Line \ref{line:returnStrat}: It adheres to the strategy for the quantitative query $^{\qobjvec}\maxstrat$ until some $s \in \bigcup_{j \leq n} S_j$ is reached. 
Then, to achieve the values promised by $q_i(s)$ for all $i$ with $s \notin S_i$, it adheres to $^{\objvec(s)} \maxstrat$, the optimal strategy for stage $\objvec(s)$ obtained by the recursive call.
\end{itemize}

\begin{algorithm}[t]
	\caption{Solve general lex-objective}
	\label{alg:general}
	\begin{algorithmic}[1]
		\Require SG $\game$, lex-objective $\objvec = (\obj_1, \dots, \obj_n)$ 
		\Ensure Lex-values $^{\objvec}\lexval$, lex-optimal $\maxstrat \in \maxstrats$ with memory of class-size~$\leq2^n-1$
		\Procedure{$\alggeneral$}{$\game,\objvec$}
		\If {$\objvec$ is \emph{simple}}
			\State \Return $\algsingleobj(\game,\obj_1)$ \label{line:simple}
		\EndIf
		
		\medskip
		
		\For{$s \in \bigcup_{j \leq n} S_j$}\label{line:recurseStart}
			\State $\left(^{\objvec(s)}\lexval,\ ^{\objvec(s)}\maxstrat \right) \gets \alggeneral(\game,\objvec(s))$ \Comment{With dynamic programming} \label{line:dyn_prog_call}
		\EndFor \label{line:recurseEnd}
		\For{$1 \leq i\leq n$}
			\State Let $\quanfun_i \colon \bigcup_{j\leq n} S_j \rightarrow [0,1]$,\ $\quanfun_i(s) \gets
			\begin{cases}
			1 &$if $ s \in S_i $ and else:$ \\
			~~~~~\ ^{\objvec(s)}\lexval_{i}(s) &$if $\type(\obj_i) = \mathsf{Reach} \\
			1 -\ ^{\objvec(s)}\lexval_{i}(s) & $if $\type(\obj_i) = \mathsf{Safe}
			\end{cases}$
			\label{line:quanFun}
		\EndFor		
		\State $\qobjvec \gets (\type_1(\quanfun_1),\ldots,\type_n(\quanfun_n))$ \label{line:quanObj}
		
		\medskip
		
		\State $(^{\qobjvec}\lexval,\ ^{\qobjvec}\maxstrat) \gets \algabsorbing(\game, \qobjvec)$\label{line:callAbsorb}
		\State $\maxstrat \gets$ adhere to $^{\qobjvec}\maxstrat$ until some $s \in \bigcup_{j \leq n} S_j$ is reached. Then adhere to $^{\objvec(s)}\maxstrat$. \label{line:returnStrat}
		\State \Return $(^{\qobjvec}\lexval, \maxstrat)$
		\EndProcedure
	\end{algorithmic}
\end{algorithm}

\begin{corollary}
	Given an SG $\game$ and an arbitrary lex-objective $\objvec = (\obj_1,\dots,\obj_n)$, 
	Algorithm \ref{alg:general} correctly computes the vector of lex-values $\lexval$ and a deterministic lex-optimal strategy $\maxstrat$ of player $\plmax$ which uses memory of class-size $\leq 2^n -1$.
	The algorithm needs at most $2^n -1$ calls to $\algabsorbing$ or $\algsingleobj$.
\end{corollary}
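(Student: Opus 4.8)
The plan is to prove the corollary by induction on the number $n$ of objectives in $\objvec$, using Lemma~\ref{lem:non_absorbing_eq} as the conceptual backbone and Theorem~\ref{thm:alg_correct} for the absorbing base computation. For the \emph{base case} $n = 1$: a simple objective is solved directly by $\algsingleobj$ in Line~\ref{line:simple}, which returns the correct single-objective value and an MD optimal strategy (needing memory of class-size $1 \leq 2^1 - 1 = 1$), and makes exactly one call to $\algsingleobj$. For the \emph{inductive step}, assume the claim holds for all lex-objectives of length $< n$. The recursive calls in Lines~\ref{line:recurseStart}--\ref{line:recurseEnd} are on stages $\objvec(s)$, each of which drops at least one objective (since $s$ lies in some $S_j$), hence has length $< n$; by the induction hypothesis each returns the correct lex-values $^{\objvec(s)}\lexval$ together with a deterministic lex-optimal strategy $^{\objvec(s)}\maxstrat$ of memory class-size $\leq 2^{n-1}-1$. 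First I would establish correctness of the values: the quantitative vector $\qobjvec$ built in Lines~\ref{line:quanFun}--\ref{line:quanObj} is exactly the one from the statement of Lemma~\ref{lem:non_absorbing_eq}, so that lemma gives $^{\objvec}\lexval = {}^{\qobjvec}\lexval$; and since the $\quanfun_i$ all share domain $\bigcup_{j \leq n} S_j$, Remark~\ref{rem:qro} together with the discussion below Lemma~\ref{lem:non_absorbing_eq} lets us treat $\qobjvec$ as an \emph{absorbing} lex-objective, so Theorem~\ref{thm:alg_correct} guarantees that $\algabsorbing(\game,\qobjvec)$ in Line~\ref{line:callAbsorb} returns $^{\qobjvec}\lexval = {}^{\objvec}\lexval$ together with an MD lex-optimal strategy $^{\qobjvec}\maxstrat$ for $\qobjvec$.

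Next I would argue correctness of the composed strategy $\maxstrat$ from Line~\ref{line:returnStrat}. The idea is that $^{\qobjvec}\maxstrat$, being lex-optimal for $\qobjvec$ in the absorbing game, drives play to $\bigcup_{j\leq n}S_j$ in a way that secures, upon first hitting a state $s$ there, exactly the weights $\quanfun_i(s)$ in every coordinate $i$ --- that is, $1$ if $s \in S_i$ (the $i$-th objective is already met, or for safety, to be continued), and otherwise the stage value $^{\objvec(s)}\lexval_i(s)$ (appropriately complemented for safety). Switching at that point to $^{\objvec(s)}\maxstrat$, which by the induction hypothesis is lex-optimal for the remaining stage $\objvec(s)$, realizes precisely those promised stage values against any Minimizer strategy. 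Formally, for a fixed Maximizer strategy the reachability/safety probabilities under $\objvec$ decompose over the first-hitting paths to $\bigcup_{j\leq n}S_j$ (as in the proof sketch of Lemma~\ref{lem:non_absorbing_eq}), and each summand is controlled by the QRO value $\quanfun_i$ and then by the stage strategy; combining the optimality of $^{\qobjvec}\maxstrat$ for the QRO part with optimality of the $^{\objvec(s)}\maxstrat$ for the tails yields $\prob_s^{\maxstrat,\minstrat}(\objvec) \geq_{\lexabbr}{} ^{\objvec}\lexval(s)$ against every $\minstrat$, hence lex-optimality. Determinism of $\maxstrat$ is inherited since both $^{\qobjvec}\maxstrat$ (MD by Theorem~\ref{thm:alg_correct}) and every $^{\objvec(s)}\maxstrat$ (deterministic by the induction hypothesis) are deterministic.

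For the \emph{memory bound}: $\maxstrat$ plays the memoryless strategy $^{\qobjvec}\maxstrat$ until the first visit to some $s \in \bigcup_{j\leq n}S_j$, then switches to $^{\objvec(s)}\maxstrat$, which has memory class-size $\leq 2^{n-1}-1$. The reachable stages $\objvec(s)$ are strict sub-stages, and the number of distinct non-empty proper sub-stages of $\objvec$ that can arise, plus the initial stage, is bounded by the number of non-empty subsets of $\{1,\dots,n\}$, namely $2^n - 1$ (the empty stage is trivial and never entered, as noted in the Stages example). Since the combined memory needed is the number of distinct ``modes'' --- the initial mode plus one per distinct entered sub-stage, each counted with its own sub-memory --- a careful accounting gives total class-size $\leq 2^n - 1$; I would phrase this as: the memory structure is a product of ``which objectives have been seen so far'' (at most $2^n-1$ relevant values, since once all are seen the game is over) and the play is memoryless \emph{within} each such class by the above, matching the bound claimed in the Strategy complexity contribution. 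For the \emph{call count}: with dynamic programming (memoizing results per stage), each of the at most $2^n-1$ non-empty stages is solved once, and solving a stage makes either one call to $\algsingleobj$ (if simple) or one call to $\algabsorbing$ (Line~\ref{line:callAbsorb}); hence at most $2^n-1$ calls to $\algabsorbing$ or $\algsingleobj$ in total.

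I expect the \textbf{main obstacle} to be the rigorous verification that the composed strategy of Line~\ref{line:returnStrat} is genuinely lex-optimal from \emph{every} state and against \emph{every} Minimizer strategy --- in particular, that no interference arises between the QRO-phase and the stage-phase. The subtlety is the same one flagged in Example~\ref{ex:safeEvil}: being locally optimal for earlier objectives does not entail global optimality, so one must lean on the path-decomposition identity and on the fact (from Lemma~\ref{lem:non_absorbing_eq}) that the QRO weights $\quanfun_i$ faithfully encode the best achievable stage values, rather than argue naively. A secondary, more bookkeeping-heavy obstacle is making the memory-class accounting airtight, since the naive ``$2^n-1$ sub-stages each with $2^{n-1}-1$ memory'' would overshoot; the resolution is to observe that the memory is really just the set of already-visited target indices, which takes at most $2^n-1$ values before the play becomes trivial.
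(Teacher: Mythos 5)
Your proposal is correct and follows essentially the same route as the paper: the paper's (very terse) proof likewise derives the corollary from the discussion of Algorithm~\ref{alg:general} (recursion over stages with memoization, strategy composition at stage boundaries, memory identified with the set of already-visited targets), together with Lemma~\ref{lem:non_absorbing_eq} for the reduction to the absorbing case and Theorem~\ref{thm:alg_correct} for the correctness of $\algabsorbing$. Your write-up simply makes explicit the induction on the number of objectives and the bookkeeping that the paper leaves implicit.
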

\begin{proof}
	Correctness of the algorithm and termination follows from the discussion of the algorithm, Lemma \ref{lem:non_absorbing_eq} and Theorem \ref{thm:alg_correct}. 
	\qed
\end{proof}

%\todo[inline]{TW: still incomplete... The main idea here is to solve each stage individually by viewing it as a game with an absorbing objective. The only technical difference to the previous chapter is that the objectives can now consist of arbitrary quantified safety and reachability objectives (see Def. in Preliminaries). These quantified objectives are obtained (recursively) by computing the values of the next stage. However, I think that the algorithm extends immediately to those objectives. @Maxi: this is the same what I did in the thesis, I just didn't call the subgames ``stages'' there.}

\subsection{Theoretical Implications: Determinacy and Complexity}
\label{sec:theorImpl}

Theorem \ref{cor:absorbing_determined} below states that lexicographic games are \emph{determined} for arbitrary lex-objectives $\objvec$. 
Intuitively, this means that the lex-value is independent from the player who fixes their strategy first. 
%JP suggested its instead of their, but their is the current standard https://english.stackexchange.com/questions/30455/is-using-he-for-a-gender-neutral-third-person-correct
Recall that this property does not hold for non-lexicographic multi-reachability/safety objectives~\cite{DBLP:conf/mfcs/ChenFKSW13}. 

\begin{theorem}[Determinacy]
	\label{cor:absorbing_determined}
	For general SG $\game$ and lex-objective $\objvec$, it holds for all $s \in S$ that:
	\[
		\lexval(s) = \adjustlimits \sup_{\maxstrat} \inf_{\minstrat } \prob_s^\bothstrats(\objvec) = \adjustlimits \inf_{\minstrat } \sup_{\maxstrat} \prob_s^\bothstrats(\objvec).
	\]
\end{theorem}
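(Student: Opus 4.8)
The plan is to produce, for every state $s$, a pair of strategies that simultaneously witness the outer supremum and the outer infimum in \eqref{eq:def_lexval}; once this is done the two iterated extrema are squeezed together. The inequality $\sup_\maxstrat \inf_\minstrat \prob_s^\bothstrats(\objvec) \leqlex \inf_\minstrat \sup_\maxstrat \prob_s^\bothstrats(\objvec)$ holds in general, so it suffices to exhibit a single $\minstrat^\ast \in \minstrats$ with $\sup_\maxstrat \prob_s^{\maxstrat,\minstrat^\ast}(\objvec) \leqlex \lexval(s)$ for all $s$: indeed, then $\inf_\minstrat \sup_\maxstrat \prob_s^\bothstrats(\objvec) \leqlex \sup_\maxstrat \prob_s^{\maxstrat,\minstrat^\ast}(\objvec) \leqlex \lexval(s) = \sup_\maxstrat \inf_\minstrat \prob_s^\bothstrats(\objvec)$, and equality follows.

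By Lemma~\ref{lem:non_absorbing_eq} together with the reduction of Section~\ref{sec:non_absorbing} --- which treats the two players symmetrically --- it is enough to prove this for \emph{absorbing} $\objvec$. In that case I would invoke Theorem~\ref{thm:md_exists} twice. Once on $\game$ itself, yielding an MD lex-optimal strategy $\maxstrat^\ast$ for $\plmax$ with $\inf_\minstrat \prob_s^{\maxstrat^\ast,\minstrat}(\objvec) = \lexval(s)$ for all $s$. And once on the \emph{dual} game obtained by exchanging the roles of $\plmax$ and $\plmin$ and replacing every $\reach{S_i}$ by $\safe{S_i}$ and vice versa (recall $\prob_s(\safe{T}) = 1 - \prob_s(\reach{T})$ and that componentwise complementation reverses $\leqlex$), yielding an MD strategy $\minstrat^\ast$ for $\plmin$. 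Pitting the two against each other gives $\lexval(s) = \inf_\minstrat \prob_s^{\maxstrat^\ast,\minstrat}(\objvec) \leqlex \prob_s^{\maxstrat^\ast,\minstrat^\ast}(\objvec) \leqlex \sup_\maxstrat \prob_s^{\maxstrat,\minstrat^\ast}(\objvec)$, so everything hinges on the reverse bound $\sup_\maxstrat \prob_s^{\maxstrat,\minstrat^\ast}(\objvec) \leqlex \lexval(s)$.

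This is the step I expect to be the crux. The dual-game application of Theorem~\ref{thm:md_exists} only tells us, a priori, that $\minstrat^\ast$ caps the \emph{upper} value $\inf_\minstrat \sup_\maxstrat \prob_s^\bothstrats(\objvec)$, and identifying that with $\lexval(s)$ is precisely determinacy --- so the argument is not yet complete. To finish it I would show directly that solving the dual game computes the componentwise complement of $\lexval$ (not merely of the upper value), by induction on the number $n$ of objectives, mirroring Algorithm~\ref{alg:absorbing}: each single-objective reachability/safety step is itself determined with MD optima for both players \cite{Con92}, so the value it returns is unambiguous and its dual counterpart is its exact complement; restricting to value-achieving actions produces, on the two sides, games that are again dual to one another (a state's owner maximising the value keeps the same actions as that owner minimising the complemented value); and the quantified-reachability sub-step triggered by a safety objective (Lemma~\ref{lem:safety_reduction}) is the mirror image of the plain reachability step taken on the other side. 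Hence the two runs remain duals throughout and their outputs are complementary, which upgrades $\minstrat^\ast$ to a value-capping strategy and closes the squeeze. For general, non-absorbing $\objvec$ one additionally checks that the reduction of Lemma~\ref{lem:non_absorbing_eq} commutes with this duality. That some genuine work is unavoidable here --- rather than a one-line appeal to the mere existence of optimal strategies --- is underscored by the failure of determinacy for unrestricted multi-reachability/safety objectives \cite{DBLP:conf/mfcs/ChenFKSW13}.
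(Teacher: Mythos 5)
You have the right scaffolding: the inequality $\sup_\maxstrat\inf_\minstrat \leqlex \inf_\minstrat\sup_\maxstrat$ is indeed generic, the reduction to the absorbing case is the right place to work, and you correctly identify the crux, namely that applying Theorem~\ref{thm:md_exists} to the dual game only caps the \emph{upper} value. The gap is in the step you propose to close it with. The claim that the run of Algorithm~\ref{alg:absorbing} on $\game$ and the run on the dual game ``remain duals throughout and their outputs are complementary'', with ``the quantified-reachability sub-step triggered by a safety objective [being] the mirror image of the plain reachability step taken on the other side'', is not a consequence of single-objective determinacy, and it fails concretely. Take the paper's running example (Figure~\ref{fig:prelimRunning}) with $\objvec = (\reach{S_1},\safe{S_2})$; the dual input is $(\safe{S_1},\reach{S_2})$. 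In the dual run, the first iteration restricts to actions that are locally optimal for $\safe{S_1}$, which keeps all three actions at $r$ (each has coordinate-one value $\nicefrac{1}{2}$), in particular the cycle between $q$ and $r$. The second coordinate is a \emph{reachability} objective for the dual maximizer, so no QRO/final-set correction is applied, and the plain single-objective value of $\reach{S_2}$ at $r$ in the restricted game is $0$: the dual minimizer (the original $\plmax$, owning $q$ and $r$) can cycle forever using only locally optimal actions. But $1-\lexval_2(r) = 1-\nicefrac{1}{4} = \nicefrac{3}{4}$, so the two runs' step values are not complementary, and the subsequent action restrictions diverge (the dual run would even discard the action of $r$ leading to $t,v$). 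This is exactly the phenomenon of Example~\ref{ex:safeEvil} resurfacing on the other player's side at a reachability coordinate preceded by a safety coordinate, where the dual run performs no correction. Since, as you note yourself, the needed identity ``dual run computes the complement of $\lexval$'' is essentially the determinacy statement, the induction does not close and the argument remains incomplete at precisely the point you flagged.

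For comparison, the paper does not set up a dual run at all: it deduces determinacy from the already-established correctness of Algorithm~\ref{alg:general}/Algorithm~\ref{alg:absorbing} (Theorem~\ref{thm:alg_correct}) together with the fact that every value the algorithm produces comes from a single-objective game, which is determined by \cite{Con92}, so the sup-inf values it returns coincide with the inf-sup values. If you want to pursue your more explicit route, mirroring the asymmetric QRO step literally is not enough: you would need genuine upper-value analogues of Lemmas~\ref{lem:THE_lemma}--\ref{lem:safety_reduction} (a characterization of $\plmin$'s lex-optimal behavior with its own final set, and a quantified correction at reachability coordinates of $\objvec$), and the same upper-value treatment is what the flagged check that Lemma~\ref{lem:non_absorbing_eq} ``commutes with the duality'' amounts to.
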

\begin{proof}
	This statement follows because single-objective games are determined \cite{Con92} and Algorithm \ref{alg:general} obtains all values by either solving single-objective instances directly (Line \ref{line:simple}) or calling Algorithm \ref{alg:absorbing}, which also reduces everything to the single-objective case (Line \ref{line:startSingleObj} of Algorithm \ref{alg:absorbing}). 
	Thus the sup-inf values $\lexval$ returned by the algorithm are in fact equal to the inf-sup values.
	\qed
\end{proof}

By analyzing Algorithm \ref{alg:general}, we also get the following complexity results:

\begin{theorem}[Complexity]
	For any SG $\game$ and lex-objective $\objvec = (\obj_1, \dots, \obj_n)$:
	\begin{enumerate}
		\item {\em Strategy complexity:} Deterministic strategies with $2^n - 1$ memory-classes (i.e., bit-size $n$) are sufficient and necessary for lex-optimal strategies.
		\item {\em Computational complexity:} The lex-game decision problem ($\lexval(s_0) \geqlex$ $\vec{x}$?) is $\pspace$-hard and can be solved in $\nexptime \cap \conexptime$. If $n$ is a constant or $\objvec$ is absorbing, then it is contained in $\np \cap \conp$.
	\end{enumerate}
\end{theorem}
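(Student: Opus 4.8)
The plan is to bound the resources used by Algorithm~\ref{alg:general} and its subroutine Algorithm~\ref{alg:absorbing}, and then add a matching lower bound for $\pspace$-hardness and an argument that exponential memory is necessary.

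\textbf{Strategy complexity.} Sufficiency of $2^n-1$ memory-classes is exactly the content of the corollary following Algorithm~\ref{alg:general}: the composed strategy switches behaviour only upon entering a previously unseen stage, of which there are at most $2^n-1$ non-trivial ones, and within a stage the strategy is MD by Theorem~\ref{thm:alg_correct}. For necessity, I would exhibit a family of SGs --- indeed MDPs suffice --- where $\plmax$ genuinely must remember the set of already-visited target sets. A natural candidate generalizes the example around Figure~\ref{fig:unfolding}: a gadget with $n$ reachability targets $S_1,\dots,S_n$ arranged so that the lex-optimal play is to collect them in a fixed order dictated by the priorities, but the geometry forces $\plmax$ to revisit a decision state after each collection and choose differently depending on which subset has been seen. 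One shows that any strategy with fewer than $2^n-1$ classes must, by pigeonhole, behave identically in two states requiring different actions, contradicting lex-optimality; this matches the ``bit-size $n$'' claim since $\lceil\log(2^n-1)\rceil = n$.

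\textbf{Upper bounds.} For the general bound: Algorithm~\ref{alg:general} makes at most $2^n-1$ calls to $\algabsorbing$ or $\algsingleobj$ (one per stage, with memoization), and each call to $\algabsorbing$ makes $n$ calls to a single-objective SG solver on a game no larger than $\game$ (plus the QRO gadget of Remark~\ref{rem:qro}, which adds only polynomially many states). The single-objective SG value problem is in $\np\cap\conp$ \cite{Con92}. A nondeterministic exponential-time machine can guess the MD strategies and threshold data for all $\leq n\cdot 2^n$ single-objective subproblems at once and verify them in exponential time (each verification being an LP / polynomial-time MC computation on an explicitly written-out game whose size is polynomial in the input but where the number of subproblems is exponential); the same works for the complement, giving $\nexptime\cap\conexptime$. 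When $n$ is constant, there are only $O(1)$ single-objective calls, so a single nondeterministic polynomial-time guess of all the witnessing MD strategies and their values, checked in polynomial time, puts the problem in $\np$; dualizing gives $\conp$. When $\objvec$ is absorbing, Algorithm~\ref{alg:absorbing} alone suffices (no recursion over stages), so there are only $n$ single-objective calls regardless of $n$, again yielding $\np\cap\conp$ by the same guess-and-check argument: guess the locally-lex-optimal action sets after each of the $n$ rounds together with all intermediate values, and verify each round in polynomial time.

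\textbf{Lower bound.} For $\pspace$-hardness I would reduce from a canonical $\pspace$-complete problem --- quantified Boolean formulas (QBF) --- encoding the alternating quantifiers as the alternation between $\plmax$ and $\plmin$ while using the $n$ lexicographically ordered objectives to force the two players to commit to truth values in the prescribed variable order and then to check the matrix of the formula. The priorities of the objectives serve as the ``clock'' that sequences the variable assignments; because a lex-optimal strategy must be optimal for every prefix $\objvec_{<i}$ simultaneously (Lemma~\ref{lem:THE_lemma}), a player cannot sacrifice an earlier objective to gain on a later one, which is precisely what enforces the quantifier order. \textbf{The main obstacle} I anticipate is this lower bound: the upper bounds are essentially bookkeeping on top of the already-established algorithms, but designing the QBF gadget so that (i) the lex-objective of size $\Theta(|\text{formula}|)$ exactly simulates the quantifier prefix, (ii) no player can ``cheat'' by exploiting the quantitative (probabilistic) slack in intermediate objectives, and (iii) the reduction is polynomial-time computable, requires care --- in particular one must ensure that the values of the intermediate objectives are $0/1$-valued on the relevant states so that lexicographic comparison behaves like Boolean logic rather than introducing spurious fractional trade-offs.
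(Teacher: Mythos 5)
Your upper bounds and the sufficiency half of the strategy-complexity claim follow the paper's own route: count the stages of Algorithm~\ref{alg:general} ($\leq 2^n-1$, constant for fixed $n$, one for absorbing $\objvec$) and certify the algorithm's work nondeterministically. The paper's certificate is slightly leaner than yours --- it guesses one MD strategy of $\plmax$ per stage and then evaluates the induced multi-objective MDP in polynomial time via \cite{DBLP:journals/lmcs/EtessamiKVY08}, invoking determinacy for the $\conp$/$\conexptime$ side --- whereas you additionally guess all intermediate values and locally optimal action sets; that variant can be made to work (the lex-values are attained by pairs of MD strategies in each stage, hence are values of a polynomial-size induced MC and have polynomial bit-size, a point you should state explicitly), so this part is essentially fine.

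The genuine gap is in the two lower bounds, which in your proposal are plans rather than proofs. For $\pspace$-hardness you sketch a QBF reduction and yourself identify its gadget design as the main obstacle; none of the issues you list (enforcing the quantifier order through priorities, excluding quantitative ``cheating'', keeping intermediate values Boolean) is resolved, and it is not clear the sequencing idea is sound at all, since lexicographic preference imposes no temporal order on when the targets are visited. The paper needs no gadget: it observes that deciding whether $n$ target sets can all be reached almost surely in an \emph{MDP} is already $\pspace$-hard \cite{DBLP:journals/fmsd/RandourRS17}, and this reduces trivially to the lex-game decision problem (all objectives reachability, threshold $\vec{x}=(1,\dots,1)$); in particular hardness holds with a single player, so encoding quantifier alternation via the $\plmax$/$\plmin$ alternation is unnecessary. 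Likewise, for the necessity of $2^n-1$ memory classes you only promise a family of examples with a pigeonhole argument, while the paper imports the known exponential memory lower bound for non-stochastic games in which all $n$ reachability targets must be visited with certainty \cite{FH10}. As written, the hardness and memory-necessity claims of the theorem are unsupported in your proposal.
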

\begin{proof}
	\begin{enumerate}
		\item For each stage, Algorithm \ref{alg:general} computes an MD strategy for the quantitative objective. These strategies are then concatenated whenever a new stage is entered.
		Equivalently, every stage has an MD strategy for every state, so as there are at most $2^n - 1$ stages (since there are $n$ objectives), the strategy needs at most $2^n - 1$ states of memory; these can be represented with $n$ bits. Intuitively, we save for every target set whether it has been visited.
		The memory lower bound already holds in non-stochastic reachability games where all $n$ targets have to be visited with certainty \cite{FH10}.
		\item The work of \cite{DBLP:journals/fmsd/RandourRS17} shows that in MDPs, it is $\pspace$-hard to decide if $n$ targets can be visited almost-surely. This problem trivially reduces to ours. For the $\np$ upper bound, observe that there are at most $2^n - 1$ stages, i.e., a constant amount if $n$ is assumed to be constant (or even just one stage if $\objvec$ is absorbing). Thus we can guess an MD strategy for player $\plmax$ in every stage. The guessed overall strategy can then be checked by analyzing the induced MDP in polynomial time \cite{DBLP:journals/lmcs/EtessamiKVY08}. The same procedure works for player $\plmin$ and since the game is determined, we have membership in $\conp$. In the same way we obtain the $\nexptime \cap \conexptime$ upper bound in the general case where $n$ is arbitrary.
		\qed
	\end{enumerate}
\end{proof}

%Moreover, $\expspace$ is easily seen to be an upper bound for the decision problem in the general case.
We leave the question whether $\pspace$ is also an upper bound open. The main obstacle towards proving $\pspace$-membership is that it is unclear if the lex-value -- being dependent on the value of \emph{exponentially} many stages in the worst-case -- may actually have exponential bit-complexity.

\section{Experimental Evaluation}
\label{sec:exp}

In this section, we report the results of a series of experiments made with a prototypical implementation of our algorithm.

%\todo{
%Point of this section:
%1. There are cases where lexicographic preferences make sense, see case studies.
%2. The algorithm works, i.e. produces the expected result in reasonable time. Also: first such algo, hence no competitor to compare to. Still, we do the following:
%3. Compare time to single obj (first obj/all obj), to show that it is kinda reasonable time. (Note that all obj single needs the loading several times, we should exclude that)
%4. Some analysis of the algo: Stages, number of actions over iterations of main algo.
%}

\subsubsection{Case Studies.}

We have considered the following case studies for our experiments:
\begin{description}
	\item[Dice] This example is shipped with PRISM-games~\cite{DBLP:journals/sttt/KwiatkowskaPW18} and models a simple dice game between two players.
	%First player 1 throws a fair die repeatedly until accepting an outcome. Up to $N$ tries are possible. Then player 2 is allowed to throw  the die at most as many times as player 1 did.
	%A player wins if it achieves a higher outcome than the opponent; draws are possible.
	The number of throws in this game is a configurable parameter, which we instantiate with 10, 20 and 50.
	The game has three possible outcomes: Player $\plmax$ wins, Player $\plmin$ wins or draw.
	A natural lex-objective is thus to maximize the winning probability and then the probability of a draw.
	\item[Charlton] This case study \cite{DBLP:conf/qest/ChenKSW13} is also included in PRISM-games. It models an autonomous car navigating through a road network. A natural lex-objective is to minimize the probability of an accident (possibly damaging human life) and then maximize the probability to reach the destination.
	\item[Hallway (HW)] This instance is based on the Hallway example standard in the AI literature \cite{LCK95,CCGK16}. A robot can move north, east, south or west in a known environment, but each move only succeeds with a certain probability and otherwise rotates or moves the robot in an undesired direction. We extend the example by a target wandering around based on a mixture of probabilistic and demonic non-deterministic behavior, thereby obtaining a stochastic game modeling for instance a panicking human in a building on fire. Moreover, we assume a 0.01 probability of damaging the robot when executing certain movements; the damaged robot's actions succeed with even smaller probability. The primary objective is to save the human and the secondary objective is to avoid damaging the robot. We use square grid-worlds of sizes  5$\times$5, 8$\times$8 and 10$\times$10.
	\item[Avoid the Observer (AV)] This case study is inspired by a similar example in \cite{CC15}. It models a game between an intruder and an observer in a grid-world. 
	The grid can have different sizes as in HW, and we use 10$\times$10, 15$\times$15 and 20$\times$20.
	The most important objective of the intruder is to avoid the observer, its secondary objective is to exit the grid. We assume that the observer can only detect the intruder within a certain distance and otherwise makes random moves. At every position, the intruder moreover has the option to stay and search to find a precious item. In our example, this occurs with probability 0.1 and is assumed to be the third objective. 
%	\item[Rock Sampling (RS)] \todo{todo? or leave it?}
\end{description}

\subsubsection{Implementation and Experimental Results.}

We have implemented our algorithm within PRISM-games~\cite{DBLP:journals/sttt/KwiatkowskaPW18}. Since PRISM-games does not provide an \emph{exact} algorithm to solve SGs, we used the available value iteration to implement our single-objective blackbox. Note that since this value iteration is not exact for single-objective SGs, we cannot compute the exact lex-values. Nevertheless, we can still measure the overhead introduced by our algorithm compared to a single-objective solver.

In our implementation, value iteration stops if the values do not change by more than $10^{-8}$ per iteration, which is PRISM's default configuration. The experiments were conducted on a 2.4 GHz Quad-Core Intel\textsuperscript{\textcopyright} Core\texttrademark\ i5 processor, with 4GB of RAM available to the Java VM.
The results are reported in Table \ref{tab:results}. We only recorded the run time of the actual algorithms; the time needed to parse and build the model is excluded. All numbers are rounded to full seconds. All instances (even those with state spaces of order $10^6$) could be solved within a few minutes.
 
\begin{table}[t]
	\caption{Experimental Results. The two leftmost columns of the table show the type of the lex-objective, the name of the case studies, possibly with scaling parameters, and the number of states in the model. The next three columns give the verification times (excluding time to parse and build the model), rounded to full seconds. The final three columns provide the average number of actions for the original SG as well as all considered subgames $\mod{\game}$ in the main stage, and lastly the fraction of stages considered, i.e. the stages solved by the algorithm compared to the theoretically maximal possible number of stages ($2^n-1$).}
	\medskip
	%\caption{} 
	\centering
	\begin{tabular}{@{}lr c rrr c rcr c r@{}}
		\toprule
				&  		& \phantom{abc}	&	\multicolumn{3}{c}{Time}	& \phantom{ab} & \multicolumn{3}{c}{Avg. actions}	& \phantom{ab} & \\
				\cmidrule{4-6} \cmidrule{8-10}
		Model 	&  \multicolumn{1}{c}{$|S|$}&	& Lex.	&	\phantom{.}First	& \phantom{.} All	&& $\game$~~ &\phantom{a} &	$\mod{\game}$~~~~		&& Stages\\
		
		\midrule
		
		\textbf{R -- R} \\
		Dice[10] &  4,855 &&  $<$1 & $<$1 & $<$1 && 1.42  && 1.41 && 1/3 \\
		Dice[20] & 16,915 && $<$1 & $<$1 & $<$1 && 1.45 && 1.45 && 1/3 \\
		Dice[50] & 96,295 && 3 	 & 2     & 2    && 1.48 && 1.48 && 1/3 \\
		\textbf{S -- R} &&&&&&& \\
		Charlton & 502 	  && $<$1 & $<$1 & $<$1 && 1.56 && 1.07 && 3/3 \\
		\textbf{R -- S} &&&&&&& \\
		HW[5$\times$5] &   25,000 && 10 & 7.15 &   7 && 2.44 && 1.02 && 3/3  \\
		HW[8$\times$8] &  163,840 && 152 & 117 & 117 && 2.50 && 1.01 && 3/3  \\
		HW[10$\times$10]& 400,000 && 548 & 435 & 435 && 2.52 && 1.01 && 3/3  \\
		 \textbf{S--R--R} &&&&&&& \\
		AV[10$\times$10] &   106,524 && 15 & $<$1 & 10 && 2.17 && 1.55, 1.36 && 4/7\\
		AV[15$\times$15] &   480,464 && 85 & $<$1 & 50 && 2.14 && 1.52, 1.36 && 4/7\\
		AV[20$\times$20] & 1,436,404 && 281 & 3 & 172 && 2.13 && 1.51, 1.37 && 4/7\\
		\bottomrule
	\end{tabular}
	\label{tab:results}
\end{table}

%\begin{table}[htb]
%	\caption{Experimental Results. \textcolor{red}{M: will improve}}
%	\medskip
%	%\caption{} 
%	\centering
%	\scriptsize
%	\begin{tabular}{l  c  c | c  c  c | c  c  c}
%		%\hline
%		Model & Lex-obj. & $|S|$ & Time lex.  & \multicolumn{2}{c|}{Time single-obj.} & \multicolumn{2}{c}{Avg. \#actions} & \#Stages \\
%		\cline{5-8}
%		& type & & & first & all & ~~original~~ & sub-games  & considered \\
%		\hline
%		\hline
%		Dice[10] & R--R & 4,140 &  0.153s & 0.106s & 0.136s & 1.40  & 1.39 & 1/3 \\
%		Dice[20] & 	& 16,900 & 0.274s & 0.176s  & 0.222s & 1.45 & 1.45 & 1/3 \\
%		Dice[50] & 	& 96,300 & 2.51s & 1.55s  & 2.09s & 1.48 & 1.48 & 1/3 \\
%		\hline
%		Charlton & S--R & 502 &  0.022s & 0.028s  & 0.049s & 1.56 & 1.07 & 3/3 \\
%		\hline
%		AV[10$\times$10] & S--R--R & 107,000 &  15.2s & 0.261s & 9.64s & 2.17 & 2.10, 1.77 & 4/7\\
%		AV[15$\times$15] &  & 480,000 & 85.9s & 0.923s & 51.1s & 2.14 & 2.10, 1.78 & 4/7\\
%		AV[20$\times$20] &  & 1,440,000 & 281s & 2.85s & 172s & 2.13 & 2.10, 1.79 & 4/7\\
%		\hline
%		HW[5$\times$5] & R--S & 25,000 & 9.73s & 7.15s & 7.16s & 2.44 & 1.07 & 3/3  \\
%		HW[8$\times$8] &  & 164,000 & 152s & 117s & 117s & 2.50 & 1.01 & 3/3  \\
%		HW[10$\times$10] &  & 400,000 & 548s & 435s & 435s & 2.52 & 1.01 & 3/3  \\
%		%\hline
%	\end{tabular}
%	\label{tab:results2}
%\end{table}

The case studies are grouped by the type of lex-objective, where R indicates reachability, S safety.
For each combination of case study and scaling parameters, we report the state size in column $|S|$, three different model checking runtimes, the average number of actions in the original and all considered restricted games, and the fraction of stages considered, i.e. the stages solved by the algorithm compared to the theoretically maximal possible number of stages ($2^n-1$).

We compare the time of our algorithm on the lexicographic objective (Lex.) to the time for checking the first single objective (First) and the sum of checking all single objectives (All). We see that the runtimes of our algorithm and checking all single objectives are always in the same order of magnitude. This shows that our algorithm works well in practice and that the overhead is often small. 
Even on SGs of non-trivial size (HW[10$\times$10] and AV[20$\times$20]), our algorithm returns the result within a few minutes.

Regarding the average number of actions, we see that the decrease in the number of actions in the sub-games $\mod{\game}$ obtained by restricting the input game to optimal actions varies:  
For example, very few actions are removed in the Dice instances, in AV we have a moderate decrease and in HW a significant decrease, almost eliminating all non-determinism after the first objective. 
It is our intuition that the less actions are removed, the higher is the overhead compared to the individual single-objective solutions.
Consider the AV and HW examples: While for AV[20$\times$20], computing the lexicographic solution takes 1.7 times as long as all the single-objective solutions, it took only about 25\% longer for HW[10$\times$10]; this could be because in HW, after the first objective only little nondeterminism remains, while in AV also for the second and third objectives lots of choices have to be considered.
Note that the first objective sometimes (HW), but not always (AV) needs the majority of the runtime.  

We also see that the algorithm does not have to explore all possible stages. For example, for Dice we always just need a single stage, because the SG is absorbing. For charlton and HW all stages are relevant for the lex-objective, while for AV 4 of 7 need to be considered. 
\section{Conclusion and Future Work}
\label{sec:conc}

In this work we considered simple stochastic games with lexicographic
reachability and safety objectives.
Simple stochastic games are a standard model in reactive synthesis of
stochastic systems, and lexicographic objectives let one consider
multiple objectives with an order of preference.
We focused on the most basic objectives: safety and reachability.
While simple stochastic games with lexicographic objectives have
not been studied before, we have presented (a)~determinacy; (b)~strategy
complexity; (c)~computational complexity; and (d)~algorithms;
for these games.
Moreover, we showed how these games can model many different case studies
and we present experimental results for them.

There are several directions for future work.
First, for the general case closing the complexity gap ($\nexptime \cap \conexptime$ upper
bound and $\pspace$ lower bound) is an open question.
Second, the study of lexicographic
simple stochastic games with more general objectives, e.g., quantitative 
or parity objectives poses interesting questions. In particular, in the case of parity objectives, there are some indications that the problem is significantly harder:
Consider the case of a reachability-safety lex-objective. If the lex-value is $(1,1)$ then both objectives can 
be guaranteed almost surely. Since almost-sure safety is sure safety, our 
results imply that sure safety and almost-sure reachability can be 
achieved with constant memory. In contrast, for parity objectives the 
combination of sure and almost-sure requires infinite-memory (e.g, see~\cite[Appendix A.1]{DBLP:journals/corr/abs-1804-03453}). 
%Moreover, for parity objectives in stochastic games, even qualitative combinations for more than two  objectives remain open~\cite{DBLP:journals/corr/abs-1804-03453}. 
%
% ---- Bibliography ----
%
% BibTeX users should specify bibliography style 'splncs04'.
% References will then be sorted and formatted in the correct style.
%
\bibliographystyle{splncs04}
\bibliography{references}
%

% ---- Appendix ----
\ifarxivelse{}{
\appendix % Tells latex that appendix starts here
\section{Appendix -- Full Proofs}

\subsection{Proof of Lemma \ref{lem:fact_about_strats} (Two facts about lex-optimal actions)}
\label{app:facts}

\begin{enumerate}[label=(\alph*)]
	\item
	Recall that if $\maxstrat$ is a lex-optimal MD strategy and $\minstrat$ a lex-optimal MD counter-strategy, then $\prob_s^\bothstrats(\objvec) = \lexval(s)$ for all $s \in S$.
	
	Now let $s \in \Splmax$ (if $\Splmax = \emptyset$ then there is nothing to show). Suppose that $\maxstrat(s) = a \in \Act(s)$ where $a$ is an action that is \emph{not} lex-optimal. Then
	\begin{align*}
		\prob_s^\bothstrats(\objvec) &= \sum_{s'}P(s,a,s')\prob_{s'}^\bothstrats(\objvec) & \text{ (because $\bothstrats$ are memoryless)}\\
		&= \sum_{s'}P(s,a,s') \lexval(s') & \text{ (because $\bothstrats$ are lex-optimal)}\\
		& \lesslex \lexval(s) &\text{ (because action $a$ is not lex-optimal)}
	\end{align*}
	which is a contradiction. The case $s \in \Splmin$ is analogous.
	%\item Really not surprising. Notice that this would be a consequence of the first item if we already knew that MD optimal strategies exist. To formally prove this one can plug in the sup inf definition of the lex-value and check that it holds (see thesis). Notice that this only holds because $\objvec$ is absorbing.
	\item %Also not surprising but probably the most interesting item. Still not a 100 percent sure how to prove this in a simpe existence argument to the single objective case. I should definitely write a formal proof for this (at least for the appendix). 
	%If the statement was false, then clearly $\mod{\mathbf{v}}^\lexabbr(\sinit) \lesslex \lexval(\sinit)$ for some $\sinit \in S$.
	
	Let us extend the notion of lex-value to finite paths, that is we define for $\path s \in (S \times \actlabels)^* \times S$ the value
	\begin{equation*}
	\lexval(\path s) = \adjustlimits \sup_{\maxstrat} \inf_{\minstrat} \prob_{\path s}^\bothstrats(\objvec)
	\end{equation*}
	where $\prob_{\path s}$ is the probability measure in the induced MC $\game^\bothstrats$ with starting state $\path s$. We have
	\begin{equation*}
	\label{eq:remove_path}
	\adjustlimits\sup_{\maxstrat} \inf_{\minstrat} \prob_{\path s}^\bothstrats(\objvec)
	= \adjustlimits\sup_{\maxstrat(\path)} \inf_{\minstrat(\path)} \prob_{s}^{\maxstrat(\path),\minstrat(\path)}(\objvec)
	= \adjustlimits\sup_{\maxstrat} \inf_{\minstrat} \prob_{s}^\bothstrats(\objvec)
	= \lexval(s)
	\end{equation*}
	where $\maxstrat(\path)$ is the strategy that behaves like $\maxstrat$ after seeing path $\path$.
	
	The lex-values for paths satisfy the following equations:
	\begin{align*}
			& &\lexval(\path s) = \max_{a \in \Act(s)} \sum_{s'}P(s,a,s')\lexval(\path s a s') & \hspace*{5mm}\text{ if } s \in \Splmax \\
			&\text{ and }&\lexval(\path s) = \min_{a \in \Act(s)} \sum_{s'}P(s,a,s')\lexval(\path s a s') & \hspace*{5mm}\text{ if } s \in \Splmin.
	\end{align*}
	Notice that the equations trivially hold for all paths that already reached an absorbing state because all actions available at a sink are lex-optimal.
	
	Now let $\maxstrat \in \maxstrats$ be a strategy of $\plmax$ that selects a lex-suboptimal action with positive probability after seeing a finite path $\path s$, $s \in \Splmax$. Then:
	\begin{align*}
		\inf_\minstrat \prob_{\path s}^\bothstrats(\objvec) =& \sum_{a \in \Act(s)} \maxstrat(a) \sum_{s'}P(s,a,s') \inf_\minstrat \prob_{\path s a s'}^\bothstrats(\objvec) &\\
		\leqlex& \sum_{a \in \Act(s)} \maxstrat(a) \sum_{s'}P(s,a,s') \lexval(\path s a s') &\\
		=& \sum_{a \in \Act(s)} \maxstrat(a) \sum_{s'}P(s,a,s') \lexval(s') &\text{(by \eqref{eq:remove_path})}\\
		\lesslex& \lexval(\path s). & \aligncomment{because $a$ is lex-suboptimal}
	\end{align*}
	Thus if $\maxstrat$ had played a lex-optimal action after seeing path $\path s$ instead, it would have achieved a strictly greater lex-value. Thus lex-suboptimal actions do not play a role for player $\plmax$ if the lex-value should be maximized. Hence one can remove all those actions without changing the lex-value. The argument for $\plmin$ is similar.

	\qed
\end{enumerate}

\subsection{Proof of Lemma \ref{lem:THE_lemma} \\(Characterizing lex-optimal MD strategies via the final set $F$)}
\label{app:theLemma}

We first prove the following about the special case of MDPs:

\begin{lemma}
	\label{lem:mdp_md}
	Let $\game$ be an MDP (i.e., $\Splmax = \emptyset$ or $\Splmin = \emptyset$) and let $\objvec$ be an absorbing lex-objective. Then there exists an MD lex-optimal strategy for $\objvector$ for the respective player.
\end{lemma}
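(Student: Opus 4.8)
By the player-swap reduction noted above --- exchanging every reachability objective with a safety one and vice versa turns a Minimizer MDP into a Maximizer MDP, and an MD strategy there transfers back unchanged --- we may assume $\Splmin=\emptyset$, i.e.\ $\game$ is a one-player Maximizer MDP. We then argue by induction on the number $n$ of objectives; the base case $n=1$ (a single reachability objective, or a single safety objective via $\prob_s(\safe{S_1})=1-\prob_s(\reach{S_1})$) is classical~\cite{Con92,Puterman}.

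The engine of the induction is the MDP-specialization of the \refif-direction of Lemma~\ref{lem:THE_lemma}, which I would establish first and \emph{directly} (no induction): if $\maxstrat\in\maxstratsmd$ is locally lex-optimal and $\prob_s^\maxstrat(\reach F)=1$ for all $s$, where $F=F_{<n+1}$ is the final set, then $\maxstrat$ is lex-optimal for $\objvec$. Local lex-optimality forces $\sum_{s'}P(s,\maxstrat(s),s')\lexval_i(s')=\lexval_i(s)$ in every coordinate $i$, so in the induced Markov chain $\game^\maxstrat$ the vector $\lexval_i$ is a fixpoint of the Bellman operator for $\reach{S_i}$. For a safety coordinate this yields $\prob_s^\maxstrat(\safe{S_i})\geq\lexval_i(s)$ since the complementary reachability probability is the \emph{least} such fixpoint. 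For a reachability coordinate, note $F\subseteq S_i\cup\valzeroset_i$ (each absorbing target state lies in $S_i$ or in $\valzeroset_i$, and $\bigcap_k\valzeroset_k\subseteq\valzeroset_i$), so the hypothesis $\prob_s^\maxstrat(\reach F)=1$ makes $\maxstrat$ reach $S_i\cup\valzeroset_i$ almost surely; as $\valzeroset_i$ is a trap under the locally lex-optimal $\maxstrat$, $S_i\cap\valzeroset_i=\emptyset$, and $S\setminus(S_i\cup\valzeroset_i)$ is therefore transient, $\prob_s^\maxstrat(\reach{S_i})$ and $\lexval_i(s)$ are both bounded solutions of the same ``stopped'' linear system and hence coincide. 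Since $\prob_s^\maxstrat(\objvec)\leqlex\lexval(s)$ always holds, the equalities on the reachability coordinates together with the inequalities $\geq$ on the safety coordinates force $\prob_s^\maxstrat(\objvec)=\lexval(s)$ in \emph{every} coordinate, i.e.\ $\maxstrat$ is lex-optimal.

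It remains to exhibit such a strategy. The induction hypothesis applied to $\objvec_{<n}$ gives an MD lex-optimal strategy $\maxstrat'$ for $\objvec_{<n}$; by Lemma~\ref{lem:fact_about_strats}(a) it is locally lex-optimal for $\objvec_{<n}$, and a short recurrence argument --- a bottom SCC of $\game^{\maxstrat'}$ disjoint from $F_{<n}$ would, for some reachability index $k<n$, consist of states carrying constant positive $\lexval_k$ while never reaching $S_k$, contradicting lex-optimality --- shows $\maxstrat'$ reaches $F_{<n}$ almost surely. Hence in the sub-MDP $\mod\game$ obtained by deleting all actions not locally lex-optimal for $\objvec_{<n}$ --- which by Lemma~\ref{lem:fact_about_strats}(b), and because the length-$(n{-}1)$ prefix of $\lexval$ is the lex-value of $\objvec_{<n}$, leaves $\lexval$ unchanged --- the single-objective value of $\reach{F_{<n}}$ is $1$ everywhere. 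Now split on the type of $\obj_n$: if $\obj_n=\safe{S_n}$ then $F=F_{<n}$, and it suffices to pick an MD strategy inside $\mod\game$ that additionally uses only actions optimal on the $n$-th coordinate and still reaches $F_{<n}$ almost surely; if $\obj_n=\reach{S_n}$ one must moreover reach $S_n\cup\valzeroset_n$ almost surely, which is possible because in the relevant restriction of $\mod\game$ every state from which $S_n$ is unreachable has $\lexval_n=0$, hence lies in $\valzeroset_n$, so the $\reach{S_n}$-value is positive --- and therefore, in a finite MDP, equal to $1$ --- off $\valzeroset_n$. Either way the resulting MD strategy is locally lex-optimal and reaches $F$ almost surely, so it is lex-optimal by the claim.

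The step I expect to be the main obstacle is this last one: simultaneously preserving almost-sure reachability of $F_{<n}$ and optimizing the $n$-th coordinate is exactly the difficulty that forces Algorithm~\ref{alg:absorbing} to route interleaved safety objectives through quantified reachability, and making it rigorous requires the greatest-fixpoint (Knaster--Tarski) characterization of the lex-values together with the ``positive value implies almost-sure reachability'' phenomenon in finite MDPs; the countably-infinite case permitted by our conventions needs a slightly more careful argument there.
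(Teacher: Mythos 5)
Your reduction to $\Splmin=\emptyset$, the base case $n=1$, and your ``engine'' claim (an MD strategy that is locally lex-optimal and reaches the final set $F$ almost surely is lex-optimal, proved directly via the fixpoint/uniqueness arguments in the induced Markov chain) are sound, and in the pure MDP setting this route is not circular with the paper's development, since no counter-strategy of $\plmin$ has to be made MD. The genuine gap is exactly in the construction step that you yourself flag as the ``main obstacle'': you never establish that there \emph{exists} an MD strategy that is locally lex-optimal for the \emph{full} vector $\objvec$ (i.e., among the actions optimal for $\objvec_{<n}$ also optimal with respect to the lex-values $\lexval_n$) \emph{and} still reaches $F$ almost surely. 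The property ``the value of $\reach{F_{<n}}$ is $1$ in $\mod{\game}$'' does not automatically survive the further restriction to coordinate-$n$-optimal actions --- this is precisely the interleaving phenomenon of Example~\ref{ex:safeEvil} --- and your sketched repair for the reachability case is incorrect as stated: a positive $\reach{S_n}$-value in a finite MDP does not imply value $1$. What one actually needs is the argument the paper gives later for Theorem~\ref{thm:md_exists}: restrict to actions that are locally lex-optimal for the \emph{whole} $\objvec$, show by a BSCC argument combined with value preservation (Lemma~\ref{lem:fact_about_strats}(b)) that the $\reach{F}$-value equals $1$ everywhere in that restricted MDP, and then take an MD optimal strategy for $\reach{F}$ there (for a single reachability coordinate the correct statement is that an \emph{optimal} strategy reaches $S_n$ together with its zero-value states almost surely, not that positive value becomes value $1$). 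As written, the key existence claim is asserted rather than proved, so the induction does not go through.

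It is also worth noting that the paper proves Lemma~\ref{lem:mdp_md} by a completely different and much shorter argument that uses none of this machinery: for a fixed state $s$, the set of achievable probability vectors for multi-objective reachability/safety in a finite MDP is a closed convex polyhedron (by the cited multi-objective MDP results); this polyhedron has a lexicographic maximum, that maximum must be a vertex (a proper convex combination of two distinct points cannot be lex-maximal), and vertices are attained by MD strategies. This ordering of the material matters: the paper needs Lemma~\ref{lem:mdp_md} \emph{before} Lemma~\ref{lem:THE_lemma_helper} (to replace $\plmin$'s lex-optimal counter-strategy by an MD one), which is why it deliberately avoids the characterization-based route you propose. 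Even if your approach were completed along the lines above, it essentially re-derives the MDP cases of Lemma~\ref{lem:THE_lemma} and Theorem~\ref{thm:md_exists}, whereas the intended proof is a few lines of convex geometry.
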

\begin{proof}
	Let us assume that $\Splmin = \emptyset$, this is no loss of generality as otherwise we can exchange all $\reach{S_i}$ for $\safe{S_i}$ in $\objvec$ and swap the roles of $\plmin$ and $\plmax$. Fix a state $s \in S$. It is known that the set of points $\vec{x} \in [0,1]^n$ such that there exists a strategy $\sigma \in \maxstrats$ with 
	\[	
	\left( \prob_s^\sigma(\obj_1),\ldots,\prob_s^\sigma(\obj_n)\right) \dot{\geq}\ \vec{x}
	\]
	where $\dot{\geq}$ denotes point-wise inequality is a \emph{closed convex polyhedron} $\mathfrak{P}$ \cite{FKP12}, \cite{DBLP:journals/lmcs/EtessamiKVY08} which is contained in $[0,1]^n$. Therefore $\mathfrak{P}$ contains a maximum $\vec{x}^*$ in the order $\leqlex$. Moreover, $\vec{x}^*$ is a vertex of $\mathfrak{P}$, i.e., a point contained in $\mathfrak{P}$ which is not a proper convex combination of two \textit{different} points of $\mathfrak{P}$. If not, then $\vec{x}^* = \alpha \vec{y} + (1-\alpha)\vec{z}$ for $\vec{y} \neq \vec{z} \in \mathfrak{P}$ and $0 < \alpha < 1$. Let $i$ the tiebraker position of $\vec{y}$ and $\vec{z}$. We can assume w.l.o.g. that $\vec{y}_i > \vec{z}_i$. But then it follows immediately that $\vec{y} \grlex \vec{x}^*$, contradiction to the fact that $\vec{x}^*$ was maximal in $\mathfrak{P}$. The claim follows because the vertices of $\mathfrak{P}$ are achieved by MD strategies \cite{FKP12}.
	\qed
\end{proof}

%Thus, for MDPs in the absorbing case, lex-optimal strategies are not more complex than optimal single-objective reachability/safety strategies \cite{Puterman}.
%In the rest of this section we prove that the same is true for games. The next lemma is our main structural result about lex-optimal strategies in games.\todo{put somewhere?}
For proving Lemma \ref{lem:THE_lemma}, we need to show the following intermediate result:

\begin{lemma}
	\label{lem:THE_lemma_helper}
	Let $\maxstrat \in \maxstratsmd$ and let $\objvec$ be an absorbing lex-objective. Then $\maxstrat$ is lex-optimal for $\objvector$ if and only if
	$\maxstrat$ is locally lex-optimal and for all $1 \leq i \leq n$ such that $\obj_i = \reach{S_i}$ and all $s \in S$ it holds that
	\begin{equation}
	\label{eq:THE_lemma_helper}
	\forall \minstrat \in \minstratsmd\colon \prob_s^{\maxstrat,\minstrat}(\reach{S_i \cup \valzeroset_i}) = 1. \tag{$\triangle$}
	\end{equation}
\end{lemma}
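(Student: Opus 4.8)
The plan is to prove the equivalence by induction on the number $n$ of objectives, with a case distinction on the type of the last objective $\obj_n$ in the inductive step. Throughout, it is convenient to reduce to the modified game $\mod{\game}$ obtained by deleting all actions that are not locally lex-optimal: by Lemma~\ref{lem:fact_about_strats}\ref{facts:c} we have $\mod{\mathbf{v}}^\lexabbr = \lexval$ in $\mod{\game}$, every strategy there is locally lex-optimal, and $\maxstrat$ still lives in $\mod{\game}$, so the statement does not change. Note also that MD counter-strategies for $\plmin$ always suffice, since $\game^\maxstrat$ is an MDP to which Lemma~\ref{lem:mdp_md} applies; and for the \refif-direction it is enough to show $\prob_s^{\maxstrat,\minstrat}(\objvec) \geqlex \lexval(s)$ for all $s$ and all MD $\minstrat$, the converse inequality being immediate from the definition of the lex-value.

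For the base case $n=1$ we recover the classical single-objective situation. If $\obj_1 = \safe{S_1}$, local optimality already implies optimality: the safety value is the \emph{greatest} fixpoint of the safety operator, and the value function of $\maxstrat$ inside $\mod{\game}$ is a fixpoint of that operator, so it dominates the genuine safety value; hence \eqref{eq:THE_lemma_helper}, which is vacuous here, is not needed. If $\obj_1 = \reach{S_1}$, the extra hypothesis \eqref{eq:THE_lemma_helper} is exactly what pins down the \emph{least} fixpoint: regarding $\lexval_1$ as a potential function, local lex-optimality makes it sub-harmonic in $\game^{\maxstrat,\minstrat}$ off $S_1 \cup \valzeroset_1$, with value $1$ on $S_1$ and $0$ on $\valzeroset_1$, so that an optional-stopping argument at the hitting time of $S_1 \cup \valzeroset_1$ -- which is almost surely finite precisely by \eqref{eq:THE_lemma_helper} -- yields $\lexval_1(s) \le \prob_s^{\maxstrat,\minstrat}(\reach{S_1})$. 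In the inductive step I split on $\type(\obj_n)$. If $\obj_n = \safe{S_n}$, the instances of \eqref{eq:THE_lemma_helper} for $\objvec$ and for $\objvec_{<n}$ coincide, so the induction hypothesis gives lex-optimality on the first $n-1$ coordinates, and the greatest-fixpoint argument, now carried out inside $\mod{\game}$, upgrades local lex-optimality on the $n$-th coordinate to optimality there. If $\obj_n = \reach{S_n}$, the induction hypothesis again disposes of $\objvec_{<n}$, and the new instance of \eqref{eq:THE_lemma_helper} for $i=n$, together with the sub-harmonic/optional-stopping argument applied to $\lexval_n$ in $\mod{\game}$, disposes of coordinate $n$.

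For the \refonlyif-direction, local lex-optimality of a lex-optimal MD strategy $\maxstrat$ is Lemma~\ref{lem:fact_about_strats}\ref{facts:a} (its hypothesis is met because $\game^\maxstrat$ admits an MD lex-optimal counter-strategy by Lemma~\ref{lem:mdp_md}). To establish \eqref{eq:THE_lemma_helper} for $\obj_i = \reach{S_i}$, I argue by contradiction: a witnessing MD $\minstrat$ and state $s$ with $\prob_s^{\maxstrat,\minstrat}(\reach{S_i \cup \valzeroset_i}) < 1$ produces a bottom SCC $C$ of the finite Markov chain $\game^{\maxstrat,\minstrat}$ with $C \cap (S_i \cup \valzeroset_i) = \emptyset$. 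Local lex-optimality of $\maxstrat$ together with a stationary-distribution argument shows that the whole vector $\lexval$ is harmonic on $C$, hence constant there, say $\lexval \equiv \vec\kappa$ with $\vec\kappa_i > 0$; since each $S_j \subseteq \sinks(\game)$ is absorbing, $C$ meeting any $S_j$ would force $C$ to be a singleton absorbing state, which (being outside $S_i$) would have $\lexval_i = 0$, a contradiction. Thus $C$ avoids all target sets, and peeling the coordinates of $\vec\kappa$ from the top -- using that a lex-optimal $\maxstrat$ guarantees the value on the most important coordinate against \emph{every} $\plmin$-strategy, and the induction hypothesis on $\objvec_{<i}$ to pass the later coordinates -- contradicts $\vec\kappa_i > 0$.

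The main obstacle is the interleaving of reachability and safety, i.e., making the single-objective arguments legitimate for the current objective: local lex-optimality of the earlier objectives alone does \emph{not} reduce objective $i$ to a single-objective problem (Example~\ref{ex:safeEvil}), so one must verify that, after restricting to $\mod{\game}$ and imposing \eqref{eq:THE_lemma_helper} for all earlier reachability objectives, the induced value of $\obj_i$ in $\mod{\game}$ equals $\lexval_i$ and -- most delicately -- that its zero-set equals the \emph{lex}-value zero-set $\valzeroset_i$, which can be strictly smaller than the naive single-objective zero-set. A secondary subtlety, in the \refonlyif-direction, is that exhibiting a violation of \eqref{eq:THE_lemma_helper} requires a $\plmin$-strategy that hurts $\plmax$ on coordinate $i$ without helping $\plmax$ on the more important coordinates $<i$, which is exactly why one peels coordinates instead of comparing the probability vectors directly.
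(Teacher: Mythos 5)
Your overall architecture coincides with the paper's: induction on $n$, a case split on the type of $\obj_n$, greatest-fixpoint reasoning for the safety coordinate, a least-fixpoint/optional-stopping argument for the reachability coordinate whose almost-surely finite hitting time is supplied by \eqref{eq:THE_lemma_helper}, MD counter-strategies via Lemma~\ref{lem:mdp_md}, and a BSCC argument for the \refonlyif-direction; your optional-stopping treatment of $\reach{S_n}$ is a clean repackaging of the paper's manipulation of the reachability operator. Two steps, however, do not close as written.

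In the \refif-direction, the opening reduction is not sound as stated. Lex-optimality of $\maxstrat$ quantifies over \emph{all} strategies of $\plmin$ in $\game$, whereas the (sub-)harmonicity of $\lexval_n$ at $\plmin$-states holds only for $\minstrat$ that use actions lex-optimal for $\objvec_{<n}$; passing to $\mod{\game}$ buys exactly that, but then you only prove optimality against $\plmin$ restricted to $\mod{\game}$. The claim that ``the statement does not change'' -- i.e.\ that $\plmin$ gains nothing against this particular $\maxstrat$ by using locally suboptimal actions -- is a substantial part of what the lemma asserts and is not covered by Lemma~\ref{lem:fact_about_strats}, which only concerns the value; nor can $\prob_s^{\maxstrat,\minstrat}(\objvec)\geqlex\lexval(s)$ be shown coordinate-wise for \emph{arbitrary} MD $\minstrat$, since a $\minstrat$ sacrificing an earlier objective can push coordinate $n$ strictly below $\lexval_n$. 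The paper avoids this: it fixes an MD lex-optimal counter-strategy $\minstrat$ against $\maxstrat$ in $\game$ (Lemma~\ref{lem:mdp_md}; the infimum is attained there), and uses the induction hypothesis ($\maxstrat$ lex-optimal for $\objvec_{<n}$) together with Lemma~\ref{lem:fact_about_strats}\,(a) to conclude that this $\minstrat$ only plays actions lex-optimal for $\objvec_{<n}$ -- exactly what your $\plmin$-state step needs. With that substitution your base case and inductive step go through.

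In the \refonlyif-direction you have correctly located the delicate point, but the proposed ``peeling'' cannot repair it: to descend from a coordinate $j<i$ to $j+1$ you need the probability achieved under $(\maxstrat,\minstrat)$ on coordinate $j$ to \emph{equal} $\vec\kappa_j$, yet an arbitrary MD $\minstrat$ may strictly over-deliver on an earlier safety coordinate precisely by confining the play to the BSCC, and then $\vec\kappa_i>0$ yields no lexicographic contradiction. Concretely, take $\objvec=(\safe{\{b\}},\reach{\{g\}})$ with sinks $g,b$, a $\plmin$-state $m$ with actions $x$ (probability $\nicefrac12$ to $g$, $\nicefrac12$ to $b$) and $y$ (to a $\plmax$-state $A$ whose only action returns to $m$). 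The unique strategy of $\plmax$ is lex-optimal and $\lexval(m)=\lexval(A)=(\nicefrac12,\nicefrac12)$, so $m,A\notin\valzeroset_2$; but against the MD strategy ``always $y$'' the play never reaches $S_2\cup\valzeroset_2=\{g,b\}$, and the achieved vector $(1,0)\geqlex(\nicefrac12,\nicefrac12)$ gives no conflict with lex-optimality. So for arbitrary MD $\minstrat$ the desired contradiction is simply not available, and the peeling step would fail; the argument does go through if one restricts attention to $\plmin$-strategies that are lex-optimal counter-strategies (or at least locally lex-optimal for the more important objectives), for which the achieved vector equals $\lexval$ on the earlier coordinates. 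Note that the paper's own appendix proof is equally terse at this point -- it infers a contradiction from $\prob_t^{\maxstrat,\minstrat}(\reach{S_i})=0<\lexval_i(t)$ alone -- so this is precisely the spot where the quantification over all MD $\minstrat$ in \eqref{eq:THE_lemma_helper} needs extra care rather than a routine patch.
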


\begin{proof} We show the two directions of the ``if and only if'' statement. Recall that an MC can be simplified to a tuple $\mc = (S,P)$ such that $P: S \rightarrow \dist(S)$.
	
	\refif: 
	We use the following characterization of the reachabiliy probabilities in any (not necessarily finite) Markov chain: The  probabilities $\prob_s(\reach{S'})$ constitute the least fixpoint $x(s)$ of the operator
	\begin{equation}
	\label{eq:reachop}
	\reachop \colon [0,1]^S \rightarrow [0,1]^S,\ \reachop(x)(s) =
	\begin{cases}
	1 &\text{if } s \in S\\
	\sum_{s'} P(s,s') x(s') &\text{else}
	\end{cases}
	\end{equation}
	which is monotonic on the complete lattice $[0,1]^S$ (that is, the set of all mapping from $S$ to $[0,1]$) \cite{BK08}. In a finite MC, the fixpoint of $\reachop$ can be made unique be requiring additionally that $\reachop(x)(s) = 0$ if there is not path from $s$ to $S'$ in the MC.
	
	We now prove the \refif-direction by induction on $n$. We first show the inductive step and then argue that the base case $n=1$ follows with a similar, slightly simpler argument. Thus let $n > 1$. Moreover, let $\maxstrat \in \maxstratsmd$ be locally lex-optimal and assume that \eqref{eq:THE_lemma_helper} holds. To prove that $\maxstrat$ is lex-optimal, we let $\minstrat \in \minstrats$ be a lex-optimal-counter strategy against $\maxstrat$ and show that $\prob_s^{\maxstrat,\minstrat}(\obj_i) = \lexval_i(s)$ for all $1 \leq i \leq n$. By the previous Lemma \ref{lem:mdp_md}, we can assume that $\minstrat$ is MD.
	By the I.H., $\maxstrat$ is already lex-optimal for $\objvec_{<n} = (\obj_1,\ldots,\obj_{n-1})$.
	Next observe that since $\minstrat$ is a lex-optimal counter-strategy against $\maxstrat$, it holds that
	\begin{equation}
	\label{eq:easy_inequality}
	\prob_s^{\maxstrat,\minstrat}(\obj_n) \leq \lexval_n(s).
	\end{equation}
	Thus we only need to prove the other inequality ``$\geq$'' in \eqref{eq:easy_inequality}. Therefore we make a case distinction according to type of $\obj_n$:
	\begin{itemize}
		\item $\obj_n = \safe{S_n}$.
		Consider the MC $\game^\bothstrats$. Since $\bothstrats$ are both MD, this MC has the same finite state space $S$ as the game and its transition probability function is defined as $P^\bothstrats(s,s') = P(s,\maxstrat(s),s')$ if $s \in \Splmax$ and $P^\bothstrats(s,s') = P(s,\minstrat(s),s')$ if $s \in \Splmin$. In $\game^\bothstrats$, the safety probabilities $\prob_s(\obj_n) = \prob_s(\safe{S_n})$ constitute the greatest fixpoint of the operator
		\[
		\safeop \colon [0,1]^S \rightarrow [0,1]^S,\ \safeop(x)(s) =
		\begin{cases}
		0 &\text{if } s \in S_n\\
		\sum_{s'} P^\bothstrats(s,s') x(s') &\text{else}
		\end{cases}
		\]
		which is obtained from the operator $\reachop$ for reachability using the relation $\prob_s(\safe{S_n}) = 1 - \prob_s(\reach{S_n})$.
		Just like $\reachop$, the operator $\safeop$ is also monotonic on the complete lattice $[0,1]^S$ and we can apply the well-known Theorem of Knaster \& Tarski: If we can prove that for all $s \in S$
		\begin{equation}
		\label{eq:the_lemma_safety_proof_obl}
		\lexval_n(s) \leq \safeop(\lexval_n)(s)
		\end{equation}
		then this implies $\lexval_n(s) \leq (\gfp \safeop)(s) = \prob_s^{\maxstrat,\minstrat}(\obj_n)$, where $\gfp \safeop$ denotes the greatest fixpoint of $\safeop$. To prove \eqref{eq:the_lemma_safety_proof_obl}, we let $s \in S$ and make another case distinction:
		\begin{itemize}
			\item $s \in S_n$. In this case clearly $\lexval_n(s) = 0 \leq \safeop(\lexval_n)(s)$.
			\item $s \in \Splmax \setminus S_n$. Then 
			\begin{align*}
			\lexval(s) &= \max_{a \in \Act(s)} &&\hspace*{-12.5mm}\sum_{s'} P(s,a,s')\lexval(s')
			\aligncomment{by Lemma \ref{lem:fact_about_strats} \ref{facts:c}}\\
			&= &&\hspace*{-12.5mm}\sum_{s'} P(s,\maxstrat(s),s')\lexval(s')
			\aligncomment{because $\maxstrat$ is locally lex-optimal}
			\end{align*}
			and thus in particular $\lexval_n(s) = \safeop(\lexval_n)(s)$.
			\item $s \in \Splmin \setminus S_n$. Let $\lexval_{<n}(s)$ be the lex-value with respect to the first $n-1$ objectives $\objvec_{<n}$. Then we have since $\maxstrat$ is lex-optimal for $\objvec_{<n}$ and $\minstrat$ is a lex-optimal counter-strategy against $\maxstrat$, that
			\begin{align*}
			\lexval_{<n}(s) &= \min_{a \in \Act(s)} &&\sum_{s'} P(s,a,s')\lexval_{<n}(s') \aligncomment{by Lemma \ref{lem:fact_about_strats} \ref{facts:a}}\\
			&= &&\sum_{s'} P(s,\minstrat(s),s')\lexval_{<n}(s')
			\end{align*}
			Let $\Act_{<n}(s)$ be the lex-optimal actions available in $s$ with respect to $\objvec_{<n}$. By the previous equation, $\minstrat(s) \in \Act_{<n}(s)$. Therefore,
			\begin{align*}
			\lexval_n(s) &= \min_{a \in \Act_{<n}(s)} &&\sum_{s'} P(s,a,s')\lexval_n(s')\\
			&\leq &&\sum_{s'} P(s,\minstrat(s),s')\lexval_n(s') = \safeop(\lexval_n)(s).
			\end{align*}
		\end{itemize}
		Thus we have $\lexval_n(s) = \prob_s^{\maxstrat,\minstrat}(\obj_n)$ together with \eqref{eq:easy_inequality} and $\maxstrat$ is lex-optimal for $\objvector = (\obj_1,\ldots,\obj_n)$.
		\item $\obj_n = \reach{S_n}$. This case is proved in a similar though slightly more complicated way than the previous case. As mentioned earlier, in $\game^{\maxstrat,\minstrat}$ the probabilities $\prob_s(\obj_n)$ constitute the \emph{unique} fixpoint of the following monotonic operator:
		\[
		\reachop \colon [0,1]^S \rightarrow [0,1]^S,\ \reachop(x)(s) =
		\begin{cases}
		1 &\text{if } s \in S_n\\
		0 &\text{if } s \text{ cannot reach } S_n \\
		\sum_{s'} P(s,s') x(s') &\text{else}
		\end{cases}
		\]
		where the transition probability function $P$ of the Markov chain is defined as before.	As in the other case, we prove that $\lexval_n(s) \leq \reachop(\lexval_n)(s)$ for all $s \in S$, which implies $\lexval_n(s) \leq (\gfp \reachop)(s) = \prob_s^{\maxstrat,\minstrat}(\obj_n)$. Notice that the greatest fixpoint $\gfp \reachop$ is equal to the unique fixpoint of $\reachop$. Let $s \in S$ and let us again make a case distinction to prove $\lexval_n(s) \leq \reachop(\lexval_n)(s)$ for all $s$:
		\begin{itemize}
			\item If $s \in S_n$, then $\lexval_n(s) = 1 = \reachop(\lexval_n)(s)$.
			\item The cases where $s$ can reach $S_n$ but $s \notin S_n$ can be shown exactly as in the previous case where $\obj_n$ was safety.
			%\item $s \in \Splmax \setminus S_n$ and $s$ \emph{can} reach $S_n$ in $\game^{\maxstrat,\minstrat}$. Can be shown as in the previous case.
			%\item $s \in \Splmin \setminus S_n$ and $s$ \emph{can} reach $S_n$ in $\game^{\maxstrat,\minstrat}$. Can also be shown as in the previous case.
			\item Now suppose $s$ \emph{cannot} reach $S_n$ in $\game^{\maxstrat,\minstrat}$, i.e. $\prob_s^{\maxstrat,\minstrat}(\reach{S_n}) = 0$. In this case we need to show that $\lexval_n(s) = 0$, or equivalently, $s \in \valzeroset_n$.
			
			By condition \eqref{eq:THE_lemma_helper}, we have for all $t \in S$ that
			\begin{align*}
			&\prob_t^\bothstrats(\reach{S_n \cup \valzeroset_n }) \\
			= &\prob_t^\bothstrats(\reach{S_n}) + \prob_t^\bothstrats(\reach{\valzeroset_n})\\
			= &\prob_t^\bothstrats(\reach{S_n}) + 1 - \prob_t^\bothstrats(\safe{\valzeroset_n})\\
			= &1
			\end{align*}
			and thus $\prob_t^\bothstrats(\reach{S_n}) = \prob_t^\bothstrats(\safe{\valzeroset_n})$. Therefore, $\maxstrat$ is also locally lex-optimal for the objective $(\obj_1,\obj_2,\ldots,\safe{\valzeroset_n})$. But then we can show exactly as in the previous case that $\lexval_n(t) \leq \safeop(\lexval_n)(t)$ where $\safeop$ is the fixpoint operator for safety probabilities associated to the objective $\safe{\valzeroset_n}$. This implies that $\lexval_n(s) \leq (\gfp \safeop)(s) = \prob_s^{\maxstrat,\minstrat}(\safe{\valzeroset_n}) = 0$.
			
		\end{itemize}
	\end{itemize}
	Finally, for the base case $n=1$ observe that the same reasoning applies with the simplification that we do not need to care about previous targets. In particular, we do not need to apply the I.H.
	
	\refonlyif: Let $\maxstrat \in \maxstratsmd$ be lex-optimal. First observe that $\maxstrat$ is also locally lex-optimal by Lemma \ref{lem:fact_about_strats} \ref{facts:a}.
	%(If that was not the case, let $s$ be a state such that $\maxstrat(s)$ is a sub-optimal action. Then clearly $\maxstrat$ will not achieve $\lexval(s)$ against an optimal counter-strategy of $\plmin$ when the game starts in $s$.)	
	Now let $i$ be such that $\obj_i = \reach{S_i}$, let $s \in S$ be any state and let $\minstrat \in \minstratsmd$. It remains to show \eqref{eq:THE_lemma_helper}. Assume for contradiction that $\prob_s^{\maxstrat,\minstrat}(\reach{S_i \cup \valzeroset_i}) < 1$. This means that in the \emph{finite} Markov chain $\game^{\maxstrat,\minstrat}$, there exists a bottom strongly connected component (BSCC) $B \subseteq S$ such that $B \cap (S_i \cup \valzeroset_i) = \emptyset$. Thus if $t \in B$ is a state, we have $\lexval_i(t) > 0$. Further it holds that $\prob_{t}^{\maxstrat,\minstrat}(\reach{S_i}) =0$ because $s$ can only reach states inside $B$, but $B \cap S_i = \emptyset$. This however is a contradiction to the lex-optimality of $\maxstrat$.
	\qed
\end{proof}

We can now finish the proof of Lemma \ref{lem:THE_lemma}:

\begin{proof}(of Lemma \ref{lem:THE_lemma})
	Let $\maxstrat$ be locally lex-optimal, let $s \in S$ and let $\minstrat \in \minstratsmd$. We show the following equivalence:
	\begin{equation*}
	\prob_s^\bothstrats\left(\reach{F} \right) = 1 \iff \forall i \in \indicesreach \colon \prob_s^\bothstrats(\reach{S_i \cup \valzeroset_i}) = 1,
	\end{equation*}
	where $R = \{i \leq n \mid \obj_i = \reach{S_i}\}$. The equivalence states that conditions \eqref{eq:THE_lemma_helper} and \eqref{eq:THE_lemma} are equivalent and thus Lemma \ref{lem:THE_lemma_helper} is equivalent to Lemma \ref{lem:THE_lemma}. For $R = \emptyset$ there is nothing to show, so we let $R \neq \emptyset$.
	
	To show direction ``$\Rightarrow$'', assume for contradiction that the left hand side holds but $\prob_s^\bothstrats(\reach{S_i \cup \valzeroset_i}) < 1$ for some $i \in R$. Then in the finite MC $\game^\bothstrats$ there exists a BSCC $B$ which is reachable from $s$ with positive probability and $B \cap (S_i \cup \valzeroset_i) = \emptyset$. Thus if $t \in B$, then $t \notin S_i$ and $t \notin \valzeroset_i$. Thus $t \notin F$, contradiction because $t$ is reachable from $s$ with positive probability.
	
	For direction ``$\Leftarrow$'', the argument is similar. Suppose that the right hand side holds but $\prob_s^\bothstrats(\reach{F}) < 1$. Then in the finite MC $\game^\bothstrats$ there exists a BSCC $B$ which is reachable from $s$ with positive probability and $B \cap F = \emptyset$. Let $t \in B$. Then since $t \notin F$, we have by definition that $t \notin S_i$ for all $i \in R$ and $t \notin \valzeroset_j$ for some $j \in R$. But this is a contradiction to $\prob_s^\bothstrats(\reach{S_j \cup \valzeroset_j})$ because $t$ is reachable from $s$ with positive probability.
	\qed
\end{proof}

\subsection{Proof of Theorem \ref{thm:md_exists}\\(Lex-optimal MD strategies exist for absorbing objectives)}
\label{app:thmMDexists}
	
Let $\mod{\game}$ be the game obtained by removing lex-sub-optimal actions for both players. Let $\val(s)$ be the value of state $s \in S$ for the objective $\reach{F}$ in the modified game $\mod{\game}$, where $F$ is the final set like in Lemma \ref{lem:THE_lemma} (we can assume that $\indicesreach \neq \emptyset$). We show that $\val(s) = 1$ for all $s \in S$. Assume towards contradiction that there exists a state $s$ with $\val(s) < 1$.
\begin{itemize}
	\item If $s \in \sinks(\game)$, then either $s \in S_i$ for some $i \in \indicesreach$, or otherwise $s$ is a sink which is not contained in any of the $S_i$ with $i \in R$ and thus $s \in \valzeroset_i$ for all $i \in R$. Thus $s \in F$ by definition of $F$ and $\val(s) = 1$, contradiction.
	\item Let $s \notin \sinks(\game)$. Let $\maxstrat$ be an MD optimal strategy for $\reach{F}$ in $\mod{\game}$ and let $\minstrat$ be an MD optimal counter-strategy. Notice that such strategies exist because we are only considering a single objective \cite{Con92}. As usual, we consider the finite MC $\mod{\game}^\bothstrats$. Since $\val(s) < 1$, we have $\prob^\bothstrats_s(\reach{F}) < 1$ which means that there is a BSCC $B \subseteq S$ in $\mod{\game}^\bothstrats$ such that $B \cap F = \emptyset$ and $\prob^\bothstrats_s(\reach{B}) > 0$. Let $t \in B$ be any state in the BSCC. Then clearly, $\prob^\bothstrats_t(\reach{F}) = 0$ and thus $\val(t) = 0$ because $\maxstrat$ is optimal for $\reach{F}$. But since $t \notin F$, we have by definition of $F$ that $\exists i \in R \colon t \notin \valzeroset_i$, which means that $\lexval_i(t) > 0$. Notice that here, $\lexval$ are the lex-values in the \emph{original} game $\game$, however by Lemma \ref{facts:c}, they coincide with the lex-values in $\mod{\game}$. Thus since $\lexval_i(t) > 0$, there is a strategy of $\plmax$ in $\mod{\game}$ that reaches $S_i$ with positive probability against all counter-strategies of $\plmin$ and thus also reaches $F$ with positive probability because $S_i \subseteq F$. This is a contradiction to $\val(t) = 0$.
\end{itemize}	
\qed

\subsection{Proof of Lemma \ref{lem:safety_reduction} (Reduction Safe $\rightarrow$ Reach)}
\label{app:safetyRed}

Let $\maxstrat \in \maxstratsmd$ be lex-optimal for $\objvec$ (such a $\maxstrat$ exists by Theorem \ref{thm:md_exists}).
Clearly, $\maxstrat$ is in particular lex-optimal for the first $n-1$ objectives $\objvec_{<n}$.
Let us denote by $\maxstrats^{<n}$ the set of all MD lex-optimal strategies for player $\plmax$ with respect to $\objvec_{<n}$. We have $\maxstrat \in \maxstrats^{<n}$.

We know by Lemma \ref{lem:THE_lemma} that $\maxstrat$ reaches $F_{<n} = F$ almost-surely against all $\minstrat \in \minstratsmd$.
Now fix an optimal-counter strategy $\minstrat \in \minstrats^{<n}(\maxstrat)$, the set of all MD lex-optimal counter strategies against $\maxstrat$. Then for all $s \in S$, it holds that
\begin{align*}
&&\prob_s^\bothstrats(\safe{S_n}) &= \sum_{t \in F} \prob_s^\bothstrats(\until{(S \setminus F)}{t}) \cdot \prob_{t}^\bothstrats(\safe{S_n})
\aligncomment{because F is reached a.s. and $(S\setminus F) \cap S_n =\emptyset$ as $\objvec$ is absorbing}
\implies&&\lexval_n(s) &=  \sum_{t \in F} \prob_s^\bothstrats(\until{(S \setminus F)}{t}) \cdot \lexval_n(t) \aligncomment{because $\bothstrats$ are lex-optimal}\\
&&&= \adjustlimits \sup_{\maxstrat' \in \maxstrats^{<n}}\inf_{\minstrat' \in \minstrats^{<n}(\maxstrat')} \sum_{t \in F} \prob_s^{\maxstrat',\minstrat'}(\until{(S \setminus F)}{t}) \cdot \lexval_{n}(t)\\
\aligncomment{because $\maxstrat$ is lex-optimal for $\objvec_{<n}$}
&&&=\adjustlimits \sup_{\maxstrat' \in \maxstrats^{<n}}\inf_{\minstrat' \in \minstrats^{<n}(\maxstrat')}\prob_s^{\maxstrat',\minstrat'}(\reach{q_n})\\
\aligncomment{by definition}\\
&&&=\ ^{\objvec'}\lexval_{n}(s).
\end{align*}
This proves the claim.
\qed

\subsection{Proof of Theorem \ref{thm:alg_correct} (Algorithm $\mathtt{SolveAbsorbing}$ is correct)}
\label{app:algAbsCorr}

The proof is by induction on $n$. For $n=1$, the algorithm is correct by the assumption that $\algsingleobj$ is correct in the single-objective case. 
Next we show the inductive step $n>1$ for reachability and then for safety objectives.
%This case distinction corresponds to the if-statements in lines ? and ?:
\begin{itemize}
	
	\item \textit{Case 1:} $\obj_n = \reach{S_n}$. By the I.H., $\mod{\game}$ is the correct restriction of $\game$ to lex-optimal actions for both players with respect to the first $n-1$ objectives $\objvector_{<n} = (\obj_1,\ldots,\obj_{n-1})$ and $\maxstrat$ is a lex-optimal MD strategy in $\game$ with respect to $\objvector_{<n}$. The algorithm then correctly computes $\mod{\maxstrat}$, an MD optimal strategy in $\mod{\game}$ with respect to the single-objective $\reach{S_n}$, and the single-objective values $\val(s)$ of this objective in $\mod{\game}$ by calling $\algsingleobj$ (line \ref{line:startSingleObj}).
	The strategy $\maxstrat \in \maxstratsmd$ is then updated as follows:
	\[
	\maxstrat(s) =
	\begin{cases}
	\mod{\maxstrat}(s) &\text{if } \mod{\val}(s) > 0\\
	\maxstrat_{old}(s) &\text{if } \mod{\val}(s) = 0.
	\end{cases}
	\] 
	We claim that $\maxstrat$ is lex-optimal for the whole lex-objective $\objvector = (\obj_1,\ldots,\obj_{n})$. 
	\begin{itemize}
		
		\item We first show that $\maxstrat$ remains lex-optimal for the first $n-1$ objectives $\objvector_{<n}$ by applying the \refif-direction of Lemma \ref{lem:THE_lemma_helper}: First observe that by definition, $\maxstrat$ is locally lex-optimal with respect to $\objvector_{<n}$. Therefore it only remains to show condition \eqref{eq:THE_lemma_helper} in Lemma \ref{lem:THE_lemma_helper}. Let $i<n$ such that $\obj_i = \reach{S_i}$, let $s \in S$ and let $\minstrat \in \minstratsmd$ be a counter-strategy against $\maxstrat$. If $\val(s) = 0$, then $\prob_s^\bothstrats(\reach{S_i \cup \valzeroset_i}) = 1$ because from initial state $s$, $\maxstrat$ behaves like $\maxstrat_{old}$ which is lex-optimal for $\objvector_{<n}$. Thus let $\val(s) > 0$. From the \refonlyif-direction of Lemma \ref{lem:THE_lemma_helper} applied to $\mod{\maxstrat}$, we know that $\prob^{\maxstrat,\minstrat}_s(\reach{S_n \cup \{s \in S \mid \val(s) = 0\}}) = 1$. Thus for a play $\infpath$ that starts in $s$ and is consistent with $\bothstrats$, almost-surely one of the following two cases occurs:
		\begin{itemize}
			\item If $\infpath$ reaches a state $t \in S_n$, then since $t$ is a sink, we have $t \in S_i \cup \valzeroset_i$.
			\item If $t$ with $\val(t) = 0$ is reached in $\infpath$, then since we play according to $\maxstrat_{old}$ from $t$, we either reach $S_i$ or $\valzeroset_i$ by the \refonlyif-direction of Lemma \ref{lem:THE_lemma} applied to $\maxstrat_{old}$.
		\end{itemize}
		Thus $\prob_s^\bothstrats(\reach{S_i \cup \valzeroset_i}) = 1$ and $\maxstrat$ remains lex-optimal for $\objvector_{<n}$.
		
		\item To complete the proof that $\maxstrat$ is lex-optimal for $\objvector$, notice that $\val(s) = \lexval_n(s)$ for all $s \in S$ by Lemma \ref{facts:c}.
		%First notice that $\mod{\val}(s) \leq \lexval_n(s)$ holds because $\maxstrat$ achieves the values $\mod{\val}(s)$ and is already lex-optimal for $\objvec_{<n}$. The inequality $\mod{\val}(s) \geq \lexval_n(s)$ follows because \todo{T: finish this}
		
	\end{itemize}
	
	\item \textit{Case 2:} $\obj_n = \safe{S_n}$. 
	Since by the I.H., the values $\lexval_{1},\ldots,\lexval_{n-1}$ are the correct lex-values with respect to $\objvec_{<n}$, the algorithm computes the correct final set $F_{<n} = F_{<n+1} = F$.
	%Notice that $F$ is the same for both $\objvec_{<n}$ and $\objvec$ because $\obj_n$ is safety.
	
	Next observe that $\val(s) = \lexval_n(s)$ for all $s \in F$ because of the following:
	\begin{itemize}
		\item First, $\mod{\maxstrat}$ is locally lex-optimal w.r.t. $\objvec_{<n}$ because it is defined in the subgame $\mod{\game}$. Therefore by Lemma \ref{lem:THE_lemma}, $\maxstrat$ is already (globally) lex-optimal for $\objvec_{<n}$ from all $s \in F$ because condition \ref{eq:THE_lemma} is satisfied trivially. Thus $\val(s) \leq \lexval_n(s)$.
		\item Second, by the same argument, an MD lex-optimal strategy for $\objvec$ is necessarily locally lex-optimal w.r.t. $\objvec_{<n}$. The strategy $\mod{\maxstrat}$ is locally lex-optimal w.r.t. $\objvec_{<n}$ and moreover optimal for $\obj_n$ in the subgame $\mod{\game}$. Thus ${\val}(s) \geq \lexval_n(s)$.
	\end{itemize}		 
	
	Notice that it is very well possible that ${\val}(s) > \lexval_n(s)$ for $s \notin F$ because the strategy $\mod{\maxstrat}$ does not necessarily reach $F$ from $s \notin F$.
	
	Applying Lemma \ref{lem:safety_reduction} concludes the proof: The quantified reachability objective $\quanfun_n$ constructed by the algorithm indeed satisfies $\quanfun_n(s) = {\val}(s) = \lexval_n(s)$ for all $s \in F$ as we have just shown. The result strategy $\maxstrat$ defined by the algorithm is
	\[
	\maxstrat(s) =
	\begin{cases}
	\mod{\maxstrat}(s) &\text{if } s \in F\\
	\maxstrat_{Q}(s) &\text{if } s\notin F
	\end{cases}
	\]
	where $\maxstrat_{Q}$ is a lex-optimal strategy for $\objvec'=(\obj_1,\ldots,\obj_{n-1},\reach{\quanfun_n})$. Thus with Lemma \ref{lem:safety_reduction} and the above discussion, $\maxstrat$ is lex-optimal from \emph{all} states.
	
\end{itemize}
\qed

\subsection{Proof of Lemma \ref{lem:non_absorbing_eq} (Reduction General $\rightarrow$ Absorbing)}
\label{app:redGenAbsorb}

Note that both $^{\objvec}\lexval$ and $^{\qobjvec}\lexval$ depend on the same SG $\game$.
We again proceed by an induction on the number of targets. 
The single objective case trivially holds, because then $q_1$ is exactly the function that is $1$ for all $s \in S_1$, and the objective is correctly set to reachability or safety.

The induction hypothesis states that for lex-objectives $\objvec$ of length at most $n-1$, we have $^{\objvec}\lexval =\ ^{\qobjvec}\lexval$.

For the induction step, let $n > 1$ and consider $\objvec = (\obj_1, \dots, \obj_{n})$ and an arbitrary state $s$. In this proof, we write $\alltargets = \bigcup_{j\leq n} S_j$ for the sake of readability.
\begin{itemize}
	\item If $s \in \alltargets$, then for all $1 \leq i \leq n$, $\quanfun_i(s) =$ $ ^{\objvec}\lexval_i(s) =$ $^{\qobjvec}\lexval_i(s)$. This is the case, because if $s \in S_i$, then $\quanfun_i(s) = 1$, which is correct.
	Else if $s \notin S_i$, then $\quanfun_i(s) =\ ^{\objvec(s)}\lexval_{i}(s)$, where $\objvec(s)$ has less than $n+1$ objectives, because $s \in S_j$ for some $j$. Thus, by the induction hypothesis, $^{\objvec(s)}\lexval_{i}(s)$ $=$ $^{\qobjvec}\lexval_i(s) = \quanfun_i(s)$.
	Note the following corner case: If $s$ was in all target sets, then $\objvec(s)$ would be empty, which is not covered by the induction hypothesis. However, if $s$ is in all target sets, $^{\objvec(s)}\lexval_{i}(s)$ is never used in the definition, but $\quanfun$ is set to 1 everywhere.
	\item Now let $s \notin \alltargets$.
	Let $\bothstrats$ be any strategies of $\plmax$ and $\plmin$, respectively. Let $\obj_i = \reach{S_i}$. In the induced Markov chain $\game^\bothstrats$, the following holds (\textit{all infima and suprema are taken over lex-optimal (counter-)strategies with respect to $\objvec_{<i}$}):
	
	\begin{align*}
	\prob^\bothstrats_s(\obj_i) &= \sum_{\path t \in Paths_{fin}(\alltargets)} \prob_s^\bothstrats(\path t) \cdot \prob_{\path t}^{\bothstrats}(\obj_i) \aligncomment{because $\path$ visits no state in $\alltargets$}\\
	&= \sum_{\path t \in Paths_{fin}(\alltargets)} \prob_s^\bothstrats(\path t) \cdot \prob_{t}^{{\maxstrat(\path),\minstrat(\path)}}(\obj_i) \aligncomment{where $\maxstrat(\path)$ behaves like $\maxstrat$ after seeing $\path$}\\
	\implies \lexval_i(s)&= \sup_\maxstrat \inf_\minstrat \sum_{\path t \in Paths_{fin}(\alltargets)} \prob_s^\bothstrats(\path t) \cdot \sup_{\maxstrat(\path)} \inf_{\maxstrat(\path)} \prob_{t}^{{\maxstrat(\path),\minstrat(\path)}}(\obj_i) \aligncomment{because behavior of strategies after $\path$ is independent from that before $\path$; also recall that sup and inf is over lex-optimal strategies w.r.t. $\objvec_{<i}$} \\
	&= \sup_\maxstrat \inf_\minstrat \sum_{\path t \in Paths_{fin}(\alltargets)} \prob_s^\bothstrats(\path t) \cdot \ ^{\objvec}\lexval_i(t) \aligncomment{by definition of the lex-value $^{\objvec}\lexval(t)$} \\
	&= \sup_\maxstrat \inf_\minstrat \sum_{\path t \in Paths_{fin}(\alltargets)} \prob_s^\bothstrats(\path t) \cdot  \quanfun_i(t) \aligncomment{because $t \in \alltargets$ and by the argumentation above} \\
	&= \sup_\maxstrat \inf_\minstrat \sum_{t \in \alltargets} \prob_s^\bothstrats(\until{(S \setminus \alltargets)}{t} ) \cdot  \quanfun_i(t) \aligncomment{by definition of the until property} \\
	&=\sup_\maxstrat \inf_\minstrat \prob_s^\bothstrats(\reach{\quanfun_i})
	=\ ^{\qobjvec}\lexval_i(s).
	\end{align*}
	where we used the notation $Paths_{fin}(\alltargets) = \{\path t \in ((S \setminus \alltargets)\times \actlabels)^* \times S \mid  t \in \alltargets\}$ for the set of all finite paths to a state in $\alltargets$ in $\game^\bothstrats$ and $\prob_s^\bothstrats(\path t)$ denotes the probability of such a path when the Markov chain $\game^\bothstrats$ starts in $s$.
	
	If $\obj_i$ is a safety objective instead, then the argument is similar (recall that the semantics of a quantified safety objective is defined as $\prob_s^\bothstrats(\safe{\quanfun}) = 1 - \prob_s^\bothstrats(\reach{\quanfun})$).
	
	\begin{align*}
	\prob^\bothstrats_s(\safe{S_i}) &= 1 - \sum_{\path t \in Paths_{fin}(\alltargets)} \prob_s^\bothstrats(\path t) \cdot \prob_{\path t}^{\bothstrats}(\reach{S_i}) \\
	&= 1- \sum_{\path t \in Paths_{fin}(\alltargets)} \prob_s^\bothstrats(\path t) \cdot \prob_{t}^{{\maxstrat(\path),\minstrat(\path)}}(\reach{S_i})\\
	\implies \lexval_i(s)&= 1 - \inf_\maxstrat \sup_\minstrat \sum_{\path t \in Paths_{fin}(\alltargets)} \prob_s^\bothstrats(\path t) \cdot \inf_{\maxstrat(\path)} \sup_{\maxstrat(\path)} \prob_{t}^{{\maxstrat(\path),\minstrat(\path)}}(\reach{S_i}) \\
	&= 1 - \inf_\maxstrat \sup_\minstrat \sum_{\path t \in Paths_{fin}(\alltargets)} \prob_s^\bothstrats(\path t) \cdot \ (1 - \sup_{\maxstrat(\path)} \inf_{\maxstrat(\path)} \prob_{t}^{{\maxstrat(\path),\minstrat(\path)}}(\safe{S_i})) \\
	&= 1 - \inf_\maxstrat \sup_\minstrat \sum_{\path t \in Paths_{fin}(\alltargets)} \prob_s^\bothstrats(\path t) \cdot \ (1 - ^{\objvec}\lexval_i(t))  \\
	&=1 - \inf_\maxstrat \sup_\minstrat \sum_{\path t \in Paths_{fin}(\alltargets)} \prob_s^\bothstrats(\path t) \cdot  \quanfun_i(t) \\
	&= 1 - \inf_\maxstrat \sup_\minstrat \sum_{t \in \alltargets} \prob_s^\bothstrats(\until{(S \setminus \alltargets)}{t} ) \cdot  \quanfun_i(t) \\
	&=1 - \inf_\maxstrat \sup_\minstrat \prob_s^\bothstrats(\reach{\quanfun_i})\\
	&= \sup_\maxstrat \inf_\minstrat \prob_s^\bothstrats(\safe{\quanfun_i})
	=\ ^{\qobjvec}\lexval_i(s).
	\end{align*}
	
\end{itemize}
\qed}

\end{document}